\documentclass[11pt]{article}

\usepackage{amsthm, amssymb, amsmath, amsfonts, url, enumerate, mathtools, graphics}

\newcommand{\dst}{\displaystyle}



\newtheorem{theorem}{Theorem}[section]

\newtheorem{claim}[theorem]{Claim}
\newtheorem{construction}[theorem]{Construction}
\newtheorem{proposition}[theorem]{Proposition}

\newtheorem{corollary}[theorem]{Corollary}
\newtheorem{lemma}[theorem]{Lemma}

\newtheorem{remark}[theorem]{Remark}
\theoremstyle{definition}
\newtheorem{definition}[theorem]{Definition}

\newenvironment{prf}{\noindent{\bf Proof.~}}{\(\qed\)}
\newcommand{\BPF}{\begin{prf}} 
\newcommand {\EPF}{\end{prf}}


\newcommand{\capac}{\textnormal{cap}}
\newcommand{\Det}{\textnormal{Det}}

\newcommand{\ds}{\textnormal{ds}}
\newcommand{\tensor}{\otimes}
\newcommand{\expon}{\textnormal{exp}}

\newcommand{\tr}{\textnormal{tr}}
\newcommand{\rk}{\textnormal{rank}}
\newcommand{\Rk}{\textnormal{Rank}}

\def\poly{\textnormal{poly}}

\newcommand{\BL}{\textnormal{BL}}
\newcommand{\dm}{\textnormal{dim}}


\def\Q{{\mathbb{Q}}}
\def\Z{{\mathbb{Z}}}
\def\N{{\mathbb{N}}}
\def\R{{\mathbb{R}}}

\def\C{{\mathbb{C}}}


\def\cM{\mathcal M}



\def\bf{{\mathbf f}}

\def\bv{{\mathbf v}}
\def\bx{{\mathbf x}}
\def\by{{\mathbf y}}


\def\bv{{\mathbf v}}
\def\bx{{\mathbf x}}
\def\by{{\mathbf y}}


\newcommand{\wt}[1]{\widetilde{#1}}

\newcommand{\Matrv}{\mathcal{M}_{\mathbf{v}}}
\newcommand{\Matrw}{\mathcal{M}_{\mathbf{w}}}

\def\then{\Rightarrow}

\def\iff{\Leftrightarrow}
\def\to{\rightarrow}


\newcommand{\eps}{\epsilon}



\usepackage[margin=0.75in]{geometry}

\usepackage{verbatim}
\usepackage{enumitem}

\usepackage{graphicx}
\usepackage{epsfig}
\usepackage{float}
\usepackage[all]{xy}
\usepackage{color}

\usepackage{fullpage}
\usepackage{appendix}
\usepackage{mathtools}

\floatstyle{boxed}
\newfloat{Protocol}{h}{pro}

\floatstyle{boxed}
\newfloat{Algorithm}{h}{pro}

\usepackage{tikz}
\usetikzlibrary{arrows}
\usepackage{float}

\interfootnotelinepenalty=10000

\usepackage{hyperref}

\hypersetup{
    colorlinks,
    citecolor=blue,
    filecolor=blue,
    linkcolor=blue,
    urlcolor=blue
}

\begin{document}

\pagenumbering{gobble}

\title{Algorithmic and optimization aspects of Brascamp-Lieb inequalities, via Operator Scaling}

\author{
Ankit Garg \thanks{Microsoft Research New England, email: garga@microsoft.com. Part of research was done when the author was a student at Princeton University. Research supported by Mark Braverman's NSF grant CCF-1149888, Simons Collaboration on Algorithms and Geometry, Simons Fellowship in Theoretical Computer Science and Siebel Scholarship.}
\and
Leonid Gurvits\thanks{Department of Computer Science, The City College of New York, email: l.n.gurvits@gmail.com}
\and
Rafael Oliveira \thanks{Department of Computer Science, Princeton University, email: rmo@cs.princeton.edu. 
Research supported by NSF CAREER award DMS-1451191, NSF grant CCF-1523816 and Siebel Scholarship.}
\and
Avi Wigderson \thanks{Institute for Advanced Study, Princeton, email: avi@math.ias.edu. Research supported by NSF grant
CCF-1412958.}
}

\maketitle

\begin{abstract}

The celebrated Brascamp-Lieb (BL) inequalities~\cite{BL76,Lieb90}, and their reverse form of Barthe~\cite{Barthe2}, are an 
important mathematical tool, unifying and generalizing numerous inequalities in analysis, convex geometry 
and information theory, with many used in computer science. While their structural theory is very well understood, far less 
is known about computing their main parameters (which we later define below). Prior to this work, the best known 
algorithms for any of these optimization tasks required at least exponential time.
In this work, we give polynomial time algorithms to compute:  
(1) Feasibility of {\em \textnormal{BL}-datum}, 
(2) Optimal {\em \textnormal{BL}-constant}, 
(3) Weak separation oracle for {\em \textnormal{BL}-polytopes}.

What is particularly exciting about this progress, beyond the better understanding of BL-inequalities, is that the objects 
above naturally encode rich families of optimization problems which had no prior efficient algorithms. In particular, 
the BL-constants (which we efficiently compute) are solutions to  {\em non-convex} optimization problems, and the 
BL-polytopes (for which we provide efficient membership and separation oracles)  are linear programs with 
{\em exponentially many} facets. Thus we hope that new combinatorial optimization problems can be solved via 
reductions to the ones above, and make modest initial steps in exploring this possibility. 

Our algorithms are obtained by a simple efficient reduction of a given BL-datum to an instance of the 
{\em Operator Scaling} problem defined by~\cite{gurvits2004}. To obtain the results above, we utilize the two 
(very recent and different) algorithms for the operator scaling problem~\cite{GGOW, IQS15b}.  
Our reduction implies algorithmic versions of many 
of the known structural results on BL-inequalities, and in some cases provide proofs that are different or simpler than 
existing ones. Further, the analytic properties of the~\cite{GGOW} algorithm 
provide new, effective bounds on the magnitude and continuity of BL-constants; prior work relied on compactness, and 
thus provided no bounds.  

On a higher level, our application of operator scaling algorithm to BL-inequalities further connects analysis and optimization with the diverse mathematical areas used so far to motivate and solve the operator scaling problem, which include commutative invariant theory, non-commutative algebra, computational complexity and quantum information theory.

\end{abstract}

\newpage

\tableofcontents

\thispagestyle{empty}

\pagenumbering{arabic}

\clearpage
\setcounter{page}{1}

%
%
%



\newpage

\section{Introduction}

This long introduction summarizes the main contributions of the paper and the intuitions behind many of the definitions, results and proofs. We start with the main object of this paper: introducing the Brascamp-Lieb inequalities, and  surveying known structural results and our new algorithmic results. We then describe how to instantiate the operator scaling algorithm of~\cite{GGOW} directly in the context of BL-inequalities, in a way that does not require previous familiarity with it. The properties of this algorithm lead to further structural results, as well as highlight it as a provably efficient instance of the general {\em alternate minimization} \footnote{sometimes also called alternating minimization.} heuristic. Finally, we discuss several  known and potential applications of Brascamp-Lieb inequalities in computer science and optimization, which further motivate this and future studies.

\subsection{The Brascamp-Lieb inequalities: basic notions}

Many important inequalities, including H\"older's inequality, Loomis-Whitney inequality, Young's convolution inequality, hypercontractivity inequalities and many others are all special cases of the extremely general Brascamp-Lieb inequalities, introduced by these authors in~\cite{BL76,Lieb90}. Yet many others, including Prekopa-Lindler inequality, versions of Brunn-Minkowski  inequality and others are special cases of Barthe's reverse form of Brascamp-Lieb~\cite{Barthe2}. We discuss below only the original form, and only for Euclidean spaces. The notation we use is taken mostly from~\cite{BCCT}, which is an excellent source both for background and motivation on this topic, as well as the state-of-art on the basic questions in this field. As is common, we often use BL as abbreviation for Brascamp-Lieb.

A {\em Brascamp-Lieb datum} is specified by two $m$-tuples: one of linear transformations 
\linebreak $\mathbf{B} = (B_1, B_2, \dots, B_m)$, with $B_j:\R^n \rightarrow \R^{n_j}$ 
(along with $(n, n_1,\ldots,n_m)$, the vector of {\em dimensions}), and another of 
non-negative reals $\mathbf{p}=(p_1, p_2,\dots , p_m)$ (which is a vector of {\em exponents}). 
This combined information is denoted by $(\mathbf{B,p})$. 
A {\em Brascamp-Lieb inequality}\footnote{While this is the common name of the inequality below found in the literature, we note that only a restricted form appears in the original~\cite{BL76} paper, and this general form appeared only in the later paper of Lieb~\cite{Lieb90}.} with the datum above states that for {\em every} tuple of $m$ 
non-negative functions, $f=(f_1, f_2, \dots, f_m)$, which are integrable according to the Lebesgue 
measure, we have the following inequality.
$$  \int_{x\in \R^n} \prod_{j=1}^m (f_j(B_j x))^{p_j} dx \, \leq \, 
C \,\prod_{j=1}^m  \left(\int_{x_j \in \R^{n_j}}  f_j(x_j) dx_j \right)^{p_j} $$
where $C \in (0, \infty ]$ is independent of the functions $f$. Of course, we really get an interesting 
inequality if $C$ is finite. In that case, we call the datum $(\mathbf{B,p})$  {\em feasible}, and we 
denote smallest $C$ for which this inequality holds by $\BL(\mathbf{B,p})$, called the {\em \textnormal{BL}-constant}.

Many familiar special cases of this inequality are listed in the introduction of~\cite{BCCT}; we describe only a very simple one here, which hopefully makes concrete and intuitive the complicated formal expression above and its constituents. A special case of the Loomis-Whitney inequality asserts that {\em the volume of  every measurable set $S$ in the unit cube is at most the square root of of the product of the areas of its projections on the three coordinate planes}. In this theorem, the linear transformations $B_j:\R^3 \rightarrow \R^2$ are defined by the simple projections $B_1(x,y,z) = (y,z)$, $B_2(x,y,z) = (x,z)$ and $B_3(x,y,z) = (x,y)$,  the functions $f_j$ are the indicator functions of these three projections of the set $S$, and the exponent vector is $p=(\frac12,\frac12,\frac12)$. The corresponding BL-constant in this case happens to be 1 (this case is an important situation which will be important later). The reader may be familiar with an entropy\footnote{Appropriately defined for these continuous variables.} version of the inequality above, namely that $H(X,Y,Z) \leq \frac12 (H(X,Y)+H(Y,Z)+H(X,Z))$ for every random variable $(X,Y,Z)$. Indeed, BL-inequalities may be viewed in general as entropy subadditivity inequalities, as developed in~\cite{Carlen09, Liu16, LiuCCV17}.

There has been extensive work on precisely understanding (and computing) when a given datum 
$(\mathbf{B,p})$ is feasible, and if it is, determining the 
BL-constant $\BL(\mathbf{B,p})$. Clean characterizations exist for both questions, which clarify that they are 
both decidable (have a finite algorithm). Let us overview these characterizations and then turn to the computational complexity 
of the existing algorithms and our new ones. The first, regarding feasibility, is from the paper~\cite{BCCT}, 
and the second, on the optimal constant, is from~\cite{Lieb90}. Before doing so, we make two general comments, one about the ``reverse BL-inequalities'' and the second regarding how we measure input size when analyzing algorithms.

\paragraph{\textit{Reverse BL inequalities}}

This informal comment simply clarifies that all results in this paper stated for BL-inequalities hold for their reverse form.
In~\cite{Barthe2}, Barthe introduced a reverse form of the Brascamp-Lieb inequalities, sometimes called RBL inequalities, which turn out to generalize some known inequalities not captured by the the original BL inequalities. These RBL inequalities take the {\em same} data as standard BL inequalities, and have an optimal RBL constant associated with each. All questions raised above for the BL inequalities, like feasibility, computation of the constant and its continuity properties are relevant in this reverse setting. However, they are actually equivalent to the original ones for a simple reason: Barthe~\cite{Barthe2} proved that for any feasible datum the optimal BL constant and the optimal RBL constant multiply to $1$, and when the BL datum is infeasible the RBL constant is $0$. In short, these two optimal constants determine each other. Thus all structural results above translate to the reverse setting, and so do all our algorithmic results.

\paragraph{\textit{Input size conventions}}

Before we start, let us formalize the input conventions to all algorithms, and the parameters we 
use to measure their complexity. The input to all algorithms will be a BL datum $(\mathbf{B,p})$, and the ``size" 
of each part will be measured differently. The entries of the matrices $B_j$ will be rational numbers, 
given in binary. We will let $b=b(\mathbf{B})$ denote their total binary length (note that in particular $b \geq nm$, and so lower bounds the ``combinatorial size'' of the input). 
The vector $\mathbf{p}$ will be given as a sequence of rationals with a common denominator, namely 
$p_j = c_j/d$ with $c_j, d$ integers. We use this convention for two reasons. First, our algorithms will use 
this integer representation, and their complexity will depend on $d=d(p)$ (while in many cases $d$ is only 
polynomial in the dimensions $n,m$ of the problem, it can definitely be as large as exponential in other cases). 
Second, in the context of {\em quiver representations}, which is very relevant to this study as we shall 
later see\footnote{With hindsight, our reduction from BL data to operator scaling data may be viewed as a special case of the reduction of Derksen and Makam~\cite{DM15} of general quivers to the Left-Right quiver. More on this in Section \ref{sec:bl-to-cap}.}, it is natural to use these integer ``weights'' $c_j$ and $d$. 

Summarizing, the two size parameters for a BL datum $(\mathbf{B,p})$ will be $b$ and $d$ as above.

\subsection{The Brascamp-Lieb inequalities: known and new results}

\paragraph{\textit{Testing feasibility} (\textit{and more})}

The following theorem of Bennett et al~\cite{BCCT,BCCT2} precisely characterizes when a given BL datum is feasible. This work will provide a different proof of this important theorem (see Corollary~\ref{cor:BCCT}).

\begin{theorem}[\cite{BCCT}]\label{feasibility}
The datum $(\mathbf{B,p})$ is feasible if and only if the following conditions hold:
\begin{enumerate}
\item $n= \sum_j p_j n_j.$
\item $\dim (V) \leq \sum_j p_j  \dim(B_jV)$ \,\, holds for all subspaces $V$ of $\R^n$.
\end{enumerate}
\end{theorem}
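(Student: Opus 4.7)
The plan is to derive this characterization as a corollary of the rank/subspace characterization of positive capacity for operator scaling instances, via a size-efficient reduction from any BL datum to an operator scaling instance. I will rely on three ingredients.

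The first ingredient is Lieb's variational formula, which replaces the integral inequality defining the BL-constant by a finite-dimensional optimization over tuples of positive definite matrices:
$$\BL(\mathbf{B},\mathbf{p}) \;=\; \sup_{X_1, \dots, X_m \succ 0}\; \frac{\prod_{j=1}^{m}(\det X_j)^{p_j}}{\det\bigl(\sum_{j=1}^{m} p_j B_j^T X_j B_j\bigr)}.$$
In particular, $(\mathbf{B},\mathbf{p})$ is feasible if and only if this supremum is finite.

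The second ingredient is a reduction from BL data to operator scaling data. Writing $p_j = c_j/d$ with integers $c_j,d$, I build from $(\mathbf{B},\mathbf{p})$ a completely positive operator $T$ whose Kraus operators are (roughly) $c_j$ block-copies of $\sqrt{p_j}\,B_j$, arranged so that block-diagonal scalings of $T$ correspond exactly to the tuples $(X_1, \dots, X_m)$ appearing in Lieb's formula. The capacity of $T$ then encodes (a power of) the reciprocal of $\BL(\mathbf{B},\mathbf{p})$, so positivity of the capacity is equivalent to feasibility of $(\mathbf{B},\mathbf{p})$.

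The third ingredient is Gurvits' rank/subspace characterization: the capacity of $T$ is positive if and only if an aspect-ratio identity holds and, for every subspace $W$ of the input space, $\dim T(W) \ge \alpha \dim W$, where $\alpha$ is the aspect ratio of $T$. Transporting through the reduction, the aspect-ratio identity becomes exactly the dimensional equality $n = \sum_j p_j n_j$, and the subspace inequality, applied to input subspaces $W$ arising naturally from subspaces $V \le \R^n$, becomes exactly $\dim V \le \sum_j p_j \dim(B_j V)$. Combining the three ingredients yields both directions of the theorem at once.

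The hard part will be ensuring that the reduction in step two is tight in both directions: (i) that the subspaces $W$ of the operator-scaling input that come from subspaces of $\R^n$ already suffice to verify Gurvits' condition, so that no ``spurious'' subspace of $W$ strengthens the condition beyond the BCCT inequality; and (ii) that the correspondence between $X_j$-scalings in Lieb's formula and block-diagonal scalings of $T$ matches both optima exactly, rather than merely bounding one in terms of the other. Once this translation is laid out cleanly, Lieb's formula together with Gurvits' theorem immediately deliver the equivalence.
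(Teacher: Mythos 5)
Your proposal follows essentially the same route as the paper: Lieb's formula, a reduction packing $c_j$ block-copies of each $B_j$ into a (rectangular) completely positive operator whose capacity equals $1/\BL(\mathbf{B},\mathbf{p})^2$, and Gurvits' characterization of positive capacity via the (fractional-)rank non-decreasing property, with the subspace inequality read off from the dual operator. The two ``hard parts'' you flag are resolved exactly as you anticipate --- the dual operator sends $X\succeq 0$ to $\bigoplus_j B_j X B_j^{\dagger}$, so its rank condition depends only on the image subspace $V=\mathrm{Im}(X)\subseteq\R^n$, and the matching of optima follows from Hadamard's inequality and log-concavity of the determinant --- so the proposal is correct and matches the paper's proof.
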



Note that these are simply linear conditions on the exponent vector $\mathbf{p}$, albeit defined 
by infinitely many subspaces $V$. However, as the coefficients are integers in $[n] = \{1,2,\ldots,n\}$, there are at 
most $n^m$ different inequalities, and so the linear maps $\mathbf{B}$ define a polytope $P_\mathbf{B}$ (sometimes called the {\em BL polytope}) in $\R^n$, such that 
$(\mathbf{B,p})$ is feasible iff $\mathbf{p} \in P_\mathbf{B}$. 
The BL polytopes $P_\mathbf{B}$ have received a lot of attention. For example, their vertices were characterized for the ``rank-1 case'' (namely, when all dimensions $n_j=1$) by Barthe \cite{Barthe2}, and this was extended to other cases by Valdimarsson~\cite{Vald10}. In the same paper, Valdimarsson considers the question of generating the inequalities defining $P_\mathbf{B}$, and gives a finite algorithm to do so (needless to say, after they are given, feasibility testing becomes a linear programming problem). No explicit upper bound  on the complexity of this algorithm is given, but it is at least exponential in $m$. The same holds for the algorithm in~\cite{christ2013communication}  for generating the inequalities defining $P_\mathbf{B}$. 

We give a polynomial time algorithm for the feasibility problem, and much more. Our algorithm also 
gives a ``separation oracle'' (namely finds a violated inequality when infeasible). 
Recall again that the exponents $\mathbf{p}$ in the BL datum $(\mathbf{B,p})$ are given by 
$p_j=c_j/d$ where $c_j, d$ are integers.

\begin{theorem}[Corollary \ref{cor:bl_feasibility} rephrased]\label{feasibility-alg}
There is an algorithm that on input $(\mathbf{B,p})$ of binary length $b$ and common denominator $d$ runs in 
time $\poly(b,d)$
and provides the following information:
\begin{description}[leftmargin= 1in]
	\item [      Membership oracle:] Tests if $\mathbf{p} \in P_\mathbf{B}.$
	\item [      Separation oracle:] When $\mathbf{p} \not\in P_\mathbf{B}$, provides a violated inequality, namely 
		a basis for a vector space $V$ in $\R^n$ such that $\dim (V) > \sum_j p_j \dim(B_jV).$
\end{description}
\end{theorem}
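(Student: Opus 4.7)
The plan is to reduce BL-feasibility to the operator scaling problem and then invoke the polynomial-time operator scaling algorithms of \cite{GGOW} and \cite{IQS15b}. Recall that an operator scaling datum is a tuple $\mathbf{A} = (A_1, \ldots, A_k)$ of matrices, and the relevant algorithmic task is to decide whether $\mathbf{A}$ admits a \emph{shrunk subspace} $W$ with $\dim W > \dim (\sum_i A_i W)$; these algorithms either certify the absence of such $W$ or return one explicitly, in time polynomial in the bit-length of $\mathbf{A}$.

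The first and main step is to encode the exponents $\mathbf{p} = (c_1/d, \ldots, c_m/d)$ into multiplicities. Given $(\mathbf{B},\mathbf{p})$, I would construct an operator scaling tuple $\mathbf{A}$ between appropriately enlarged source and target spaces, of total dimension $\poly(b,d)$, with the defining property that, for every subspace $V \subseteq \R^n$, the ``BL-defect'' $d \cdot \dim V - \sum_j c_j \dim(B_j V)$ equals the ``operator scaling defect'' $\dim \wt{V} - \dim(\sum_s A_s \wt{V})$ for a corresponding $\wt{V}$ in the enlarged source, and conversely that every shrunk subspace for $\mathbf{A}$ yields a BL-violating subspace of $\R^n$. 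The natural construction places $c_j$ disjoint copies of $B_j$ into $c_j$ summands of a block-diagonal target (so the right-hand side $\sum_j c_j \dim(B_j V)$ appears as the dimension of the image) and tensors the source by $\C^d$ (so that the left-hand side $d \cdot \dim V$ appears as the dimension of $V \otimes \C^d$). This is essentially the reduction of Derksen--Makam from general quiver representations to the Left--Right quiver, as alluded to in the footnote of the paper.

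With the reduction in hand, the algorithm is: (i) check the scaling condition $n = \sum_j p_j n_j$ directly, as rational arithmetic; (ii) build $\mathbf{A}$ and run the operator scaling algorithm of \cite{GGOW} or \cite{IQS15b} in $\poly(b,d)$ time; (iii) if it certifies that no shrunk subspace exists in $\mathbf{A}$, conclude by Theorem \ref{feasibility} that $\mathbf{p} \in P_\mathbf{B}$; (iv) otherwise, extract the returned shrunk subspace $W$ and project it down to a subspace $V \subseteq \R^n$ satisfying $\dim V > \sum_j p_j \dim(B_j V)$, which furnishes the desired separating hyperplane.

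The main obstacle is verifying the tightness of the correspondence between shrunk subspaces of $\mathbf{A}$ and BL-violating subspaces of $\R^n$. The easy direction lifts any violating $V$ to a shrunk $V \otimes \C^d$; the subtle direction must argue that every shrunk subspace returned by the operator scaling algorithm can be converted, without loss of defect, to a $\C^d$-invariant subspace of the form $V \otimes \C^d$. Equivalently, the defect functional on the enlarged source is maximized on subspaces with this tensor structure; one also needs to verify that the projection back to $\R^n$ preserves rationality of the basis so that the separation oracle outputs a valid rational certificate. Once this structural statement is established, the polynomial-time operator scaling algorithm combined with the BCCT characterization (Theorem \ref{feasibility}) immediately delivers both the membership and separation oracles claimed in the theorem.
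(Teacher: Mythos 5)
Your high-level plan is the same as the paper's: encode the numerators $c_j$ as multiplicities of $B_j$, absorb the denominator $d$ into a dimension blow-up, and hand the resulting instance to the operator scaling algorithms of \cite{GGOW,IQS15b}. But the step you yourself flag as ``the main obstacle'' --- that every shrunk subspace of the enlarged instance can be converted, without losing the violation, into one of the form $V \otimes \C^d$ and hence into a BL-violating $V \subseteq \R^n$ --- is exactly where the content of the separation oracle lies, and you do not prove it. As written, the proposal establishes only the easy direction (a violating $V$ lifts to a shrunk subspace), which is not enough even for the membership oracle: without the converse, a ``no shrunk subspace'' answer for $\mathbf{A}$ does not yet certify $\mathbf{p} \in P_\mathbf{B}$.

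The paper closes this gap by arranging the reduction so that the witness never needs to be projected down at the BL level. In Construction~\ref{cons:main_reduction} the relevant object is the \emph{dual} rectangular operator $T^*_{(\mathbf{B,p})} : \cM_n(\C) \to \cM_{nd}(\C)$, given by $T^*(X) = \bigoplus_j \bigoplus_{i=1}^{c_j} B_j X B_j^\dagger$; a fractional-rank-decreasing witness is a psd matrix $X$ living on $\R^n$ itself, the direct-sum structure gives $\Rk(T^*(X)) = \sum_j c_j \Rk(B_j X B_j^\dagger)$ exactly, and $V = \mathrm{Im}(X)$ is immediately the violating subspace (Lemma~\ref{lem:main_reduction}, Corollary~\ref{cor:bl_feasibility}). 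The only witness conversion happens one level down, in passing from the rectangular operator to a square one (Construction~\ref{con:squaring-operator}): there, given a psd $X$ whose rank is decreased by $\widetilde{T}$, one takes $Y = \sum_i X_{i,i}$ and uses the psd block-rank inequalities $\Rk(X) \le \sum_i \Rk(X_{i,i}) \le n_2 \cdot \Rk\bigl(\sum_i X_{i,i}\bigr)$ to show that $Y$ witnesses fractional-rank decrease (Lemma~\ref{lem:capacity-expansion}). Note also that your requirement of preserving the defect exactly is stronger than needed --- only the strictness of the violated inequality must survive the conversion --- and that rationality of the output basis is automatic since the algorithm of \cite{IQS15b} works over $\Q$. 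If you supply the psd block-rank argument (or an equivalent statement that shrunk subspaces may be replaced by $\C^d$-saturated ones), your proof becomes complete and essentially coincides with the paper's.
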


We believe that the ability to efficiently optimize over such a wide family of polytopes with 
exponentially many facets should be useful for different optimization problems via appropriate 
reductions to this setting. We will give some simple examples of such reductions in Section~\ref{sec:mat_int}.
One concrete challenge is e.g. trying to find such a reduction which 
embeds the Edmonds polytope~\cite{Edm65} of perfect matchings in a general (non-bipartite) 
graph into some BL polytope.

It is an interesting open problem if one can improve the algorithm to depend 
polynomially on $\log d$ instead of $d$. This
would allow for optimization over the BL polytopes in polynomial time via the ellipsoid algorithm. We believe even the current separation oracle for BL polytopes with polynomial dependence on $d$ should allow for approximate optimization over BL polytopes but we haven't been able to prove this yet. 

\paragraph{\textit{Characterizing the BL constant}}

The following theorem of Lieb~\cite{Lieb90} characterizes the BL constant $\BL(\mathbf{B,p})$ in all cases 
where the BL datum is feasible. 
The heart of the proof establishes that the optimal constant in any BL inequality is attained (or approached) by plugging in density functions of appropriate centered Gaussians. For such densities, the BL constant has a nice expression via the following identity:
$$
\int_{\mathbb{R}^n} \expon \left( - \pi x^T A x\right) dx = \Det(A)^{-1/2}
$$

\begin{theorem}[\cite{Lieb90}]\label{BLconstant}

Assume $(\mathbf{B,p})$ is feasible. Then 
$$\BL(\mathbf{B,p}) = 
\left[ \sup \frac{\prod_j (\det X_j)^{p_j}}{\det \left(\sum_j p_j B_j^{\dagger} X_j B_j \right)}  \right]^{1/2} $$
where the supremum is taken over all choices of positive definite matrices $X_j$ in dimension $n_j$, and 
$B_j^{\dagger}$ denotes the adjoint map corresponding to $B_j$. 
\end{theorem}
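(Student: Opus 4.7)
The plan is to prove the two directions of equality separately. For the lower bound on $\BL(\mathbf{B,p})$, I would simply plug centered Gaussian densities into the BL inequality and compute. Specifically, for positive definite $X_j$, set $f_j(y_j) = \expon(-\pi y_j^T X_j y_j)$. Using the identity stated just before the theorem, $\int f_j(y_j) dy_j = \det(X_j)^{-1/2}$. On the other hand, $\prod_j f_j(B_j x)^{p_j} = \expon(-\pi x^T A x)$ with $A = \sum_j p_j B_j^{\dagger} X_j B_j$, which (assuming $A \succ 0$, guaranteed under the feasibility condition $n = \sum_j p_j n_j$ for generic $X_j$) has integral $\det(A)^{-1/2}$. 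Forming the ratio of the two sides of the BL inequality gives exactly $[\prod_j \det(X_j)^{p_j}/\det(A)]^{1/2}$, so taking the supremum over $X_j \succ 0$ yields the lower bound on $\BL(\mathbf{B,p})$.

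The hard direction is the upper bound: proving that Gaussians are the worst case, so that no non-negative integrable tuple $(f_1,\dots,f_m)$ can beat the Gaussian supremum. I would follow Lieb's original ``Gaussianization'' strategy. The idea is to apply a symmetrization/tensorization trick: given functions $f_j$, form doubled functions $F_j(y_j, y_j') = f_j(y_j) f_j(y_j')$ on $\R^{2 n_j}$, and replace each $B_j$ by $B_j \oplus B_j$ acting on $\R^{2n}$. Then apply a rotation in the doubled space that mixes coordinates before projecting, and observe that the BL ratio is invariant under this doubling while becoming ``more Gaussian'' in a quantitative sense. Iterating this central limit style argument and extracting a weak-$*$ limit, one shows that any near-extremizer can be replaced by a Gaussian with no loss in the ratio; this reduces the supremum of the general BL ratio to the supremum over centered Gaussians.

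An alternative (and, to my mind, cleaner) route is via heat flow, which is the approach of Carlen--Lieb--Loss and later Bennett--Carbery--Christ--Tao. Replace each $f_j$ by its convolution $f_j^{(t)} := f_j * \gamma_{t}$ with a Gaussian kernel of variance $t$, and consider the corresponding time-dependent BL ratio $R(t)$. The key computation is to show $\frac{d}{dt} R(t) \geq 0$ along the heat flow: this requires writing out the logarithmic derivative, integrating by parts to produce a quadratic form, and applying a pointwise Cauchy--Schwarz type inequality that uses the feasibility condition in Theorem~\ref{feasibility} in an essential way. Since $R(t)$ is nondecreasing and $f_j^{(t)}$ converges (after appropriate normalization) to a Gaussian as $t \to \infty$, the limiting ratio is a Gaussian ratio which majorizes the original $R(0)$, proving the upper bound.

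The main obstacle in either route is the positivity statement: in the rotation approach, verifying that the rotation + projection operation preserves the product structure and strictly improves Gaussianity; in the heat flow approach, establishing the monotonicity $\frac{d}{dt} R(t) \geq 0$, which is where the linear algebraic structure of $\mathbf{B}$ and the scaling condition $n = \sum_j p_j n_j$ enter crucially. A secondary (but real) technical issue is handling the supremum vs.\ maximum: the Gaussian supremum need not be attained, so the argument must track near-extremizers and justify passage to the limit, paying attention to the possibility that $X_j$ degenerate to singular matrices. Once monotonicity and this limiting argument are in place, combining with the easy direction gives equality.
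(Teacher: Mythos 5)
The paper does not prove this theorem at all: it is imported verbatim from Lieb~\cite{Lieb90}, and the surrounding text only remarks that ``the heart of the proof establishes that the optimal constant is attained (or approached) by plugging in density functions of appropriate centered Gaussians.'' So there is no in-paper argument to compare against; your proposal has to stand on its own. Your easy direction is correct and essentially complete: plugging $f_j(y)=\expon(-\pi y^{T}X_jy)$ into the inequality and using $\int\expon(-\pi x^{T}Ax)\,dx=\Det(A)^{-1/2}$ gives exactly the Gaussian ratio, and $A=\sum_j p_jB_j^{\dagger}X_jB_j\succ 0$ follows for \emph{all} (not just generic) $X_j\succ 0$ because feasibility forces $\bigcap_j\ker B_j=\{0\}$ (take $V=\bigcap_j\ker B_j$ in condition 2 of Theorem~\ref{feasibility}).

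The hard direction, however, is a roadmap rather than a proof, and the heat-flow variant has a genuine gap as stated. Convolving each $f_j$ with an \emph{isotropic} Gaussian $\gamma_t$ does not make the BL ratio monotone for an arbitrary feasible datum: in the Carlen--Lieb--Loss/Bennett--Carbery--Christ--Tao argument the heat kernels must have covariances adapted to a (near-)extremal Gaussian input, or equivalently one must first move the datum into geometric position via intertwining transformations (Definition~\ref{def:normalization} and Proposition~\ref{basis-change} of this paper) and only then run the standard semigroup. Without that preprocessing the sign of $\frac{d}{dt}R(t)$ is not controlled, and the Cauchy--Schwarz step you allude to fails. There is a second structural issue you partially acknowledge: the monotonicity argument proves ``Gaussian-extremisable implies extremisable,'' so when the Gaussian supremum is not attained you need a separate limiting/approximation argument (or Lieb's original rotation--tensorization route, which you describe only at the level of intent) to close the general case. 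In short: correct lower bound, correct identification of the two standard strategies and their pressure points, but the upper bound as written would not compile into a proof without fixing the choice of heat kernels and executing the monotonicity computation.
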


Thus the BL constant for given BL datum is a solution to an optimization problem. 
However, as defined, it is not even convex. We are not aware of any algorithms to compute the 
BL constant in general. Moreover, to the best of our knowledge no explicit bounds on the BL 
constant (when finite) in terms of the BL datum were known. We resolve both issues: provide an 
explicit bound and give a polynomial time algorithm for computing the BL constant to any accuracy.

\begin{theorem}[Corollary \ref{cor:bl_upperbound_rational} rephrased]\label{thmintro:BL_ub}
For any feasible \textnormal{BL} datum $(\mathbf{B,p})$ that has binary length $b$ and common denominator $d$, it holds that $\BL(\mathbf{B,p}) \leq \exp(O(b \log(bd)))$. 
\end{theorem}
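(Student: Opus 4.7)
My plan is to bound $\BL(\mathbf{B,p})$ by starting from Lieb's variational formula (Theorem~\ref{BLconstant}) and estimating the supremum via the capacity of an associated operator-scaling instance, using explicit upper bounds that follow from standard determinant inequalities.

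The first step is to clear the fractional exponents. Raising Lieb's identity to the power $2d$ and substituting $X_j \mapsto X_j/p_j$ to absorb the $p_j$'s out of the denominator yields
\[
\BL(\mathbf{B,p})^{2d} \;=\; K(\mathbf{p}) \cdot \sup_{X_j \succ 0} \frac{\prod_j (\det X_j)^{c_j}}{\det\bigl(\sum_j B_j^\dagger X_j B_j\bigr)^d},
\]
with a leading constant $K(\mathbf{p})=\prod_j p_j^{-n_j c_j}$ that is bounded by $d^{O(b)}$ since $p_j \geq 1/d$ and $\sum_j n_j c_j = O(bd)$. By the feasibility identity $n=\sum_j p_j n_j$ the ratio on the right is homogeneous of degree zero under $X_j \mapsto tX_j$, so the supremum is scale-invariant and can be normalized (e.g.\ to $\det(\sum_j B_j^\dagger X_j B_j)=1$), reducing the task to bounding $\prod_j (\det X_j)^{c_j}$ on the resulting projective constraint set.

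The second step is to identify this supremum, up to a power, with the capacity of the operator-scaling instance produced by the BL-to-operator-scaling reduction that forms the backbone of the paper. Once in that form, the capacity can be bounded from above by simply plugging a convenient test matrix (e.g.\ $X=I$) into its defining infimum and applying Hadamard's inequality to $\det(\sum_i A_i A_i^\dagger)$: this gives an upper bound of the shape $\exp(O(\tilde b))$ where $\tilde b$ is the total bit length of the reduced instance. Finally, combining this with the leading factor $K(\mathbf{p}) \leq d^{O(b)}$ and taking the $(2d)$-th root of both sides produces the claimed $\BL(\mathbf{B,p}) \leq \exp(O(b\log(bd)))$.

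The main obstacle I anticipate is the bookkeeping in the reduction: the BL-to-operator-scaling construction introduces tensor-power-like blocks whose dimension grows with $d$, and a naive estimate that inflates each entry bit length by a factor of $d$ would yield only the much weaker $\BL \leq \exp(\poly(b,d))$. Recovering $\exp(O(b\log(bd)))$ requires exploiting the fact that every entry of the reduced matrices is still an entry of some original $B_j$, so $d$ contributes only through the dimension $N=\poly(b,d)$ (and therefore only through a logarithmic factor after taking the $(2d)$-th root), rather than through the per-entry bit length.
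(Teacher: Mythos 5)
There is a genuine gap, and it is at the heart of the argument. Since $\capac(T_{(\mathbf{B,p})}) = 1/\BL(\mathbf{B,p})^2$, an \emph{upper} bound on $\BL(\mathbf{B,p})$ is equivalent to a \emph{lower} bound on the capacity, i.e.\ a lower bound on the infimum $\inf\{\Det(T(X)) : X\succ 0,\ \Det(X)=1\}$ (equivalently, an upper bound on the supremum in Lieb's formula, valid for \emph{all} choices of the $X_j$). Your second step proposes to get this by ``plugging a convenient test matrix (e.g.\ $X=I$) into its defining infimum and applying Hadamard's inequality to $\det(\sum_i A_iA_i^\dagger)$.'' A test point substituted into an infimum only bounds the infimum from \emph{above}; it shows $\capac(T)\le \Det(T(I))$, which translates into a \emph{lower} bound on $\BL(\mathbf{B,p})$ --- the wrong direction. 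No single test matrix can certify the required inequality, because the statement to be proved is a universally quantified one over all positive definite tuples $(X_j)$.

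This missing step is exactly where all the difficulty of the theorem lives. The paper obtains the capacity lower bound by passing through Construction~\ref{cons:main_reduction} and Construction~\ref{con:squaring-operator} to a square operator with integer Kraus operators and then invoking Theorem~\ref{lem:cap-square} (Theorem~2.18 of \cite{GGOW}): a rank non-decreasing square operator with \emph{integer} Kraus operators in dimension $N$ has $\capac \ge \exp(-2N\log N)$, \emph{independently of the magnitude of the entries}. That bound is not elementary; it rests on degree bounds for generating invariants (the null-cone of the associated quiver action) and is the reason the final estimate degrades only logarithmically in $d$ (through the dimension $N=n^2d$) rather than polynomially. Your first step (clearing denominators, homogeneity, the factor $K(\mathbf{p})$ disappearing after the $2d$-th root) and your closing remark about the entries of the reduced matrices being entries of the original $B_j$'s are both fine and consistent with the paper's bookkeeping (the paper handles rational entries by scaling to $2^{\widetilde b}B_i$, costing a factor $2^{n\widetilde b}$ in Corollary~\ref{cor:bl_upperbound_rational}), but without a genuine lower bound on the capacity of the integer instance the argument does not go through.
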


\begin{theorem}[Theorem \ref{thm:computing_bl} rephrased]\label{constant-alg}
There is an algorithm that on input $(\mathbf{B,p})$ of binary length $b$ and common denominator $d$, and an accuracy parameter 
$\epsilon >0$, runs in time $\poly(b, d, 1/\epsilon)$ and computes a factor 
$(1+\epsilon)$-approximation of $\BL(\mathbf{B,p}).$ Furthermore, the algorithm outputs a scaling $\mathbf{B'}$ which is almost geometric i.e. $\BL(\mathbf{B'},\mathbf{p}) \le 1+\eps$.
\end{theorem}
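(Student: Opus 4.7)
The plan is to reduce the computation of $\BL(\mathbf{B},\mathbf{p})$ to an instance of operator scaling and invoke the algorithm of~\cite{GGOW}. Given a feasible BL datum with $p_j = c_j/d$, I would first build a tuple $\mathbf{A} = (A_1,\ldots,A_M)$ with $M = \sum_j c_j$ matrices, where the $j$-th block consists of $c_j$ copies of $B_j$ (suitably packaged into blocks of the appropriate dimensions so that the resulting operator acts on positive semidefinite matrices). The central structural identity is that, whenever the scaling condition $n = \sum_j p_j n_j$ from Theorem~\ref{feasibility} holds, the operator-scaling capacity $\capac(\mathbf{A})$ is related to $\BL(\mathbf{B},\mathbf{p})$ by an explicit formula of the shape $\BL(\mathbf{B},\mathbf{p})^{2d} = \capac(\mathbf{A})^{-1}$ up to normalizing factors depending only on the dimensions and the $c_j$'s. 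This identity is obtained by matching Lieb's Gaussian characterization (Theorem~\ref{BLconstant}) against the defining infimum for the operator-scaling capacity, using the multiplicities $c_j$ to turn the rational exponents into matrix tensor powers.

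Next I would invoke the alternate-minimization algorithm of~\cite{GGOW} on $\mathbf{A}$. This algorithm iterates two normalization steps: left-multiplying by $\left(\sum_i A_i A_i^\ast\right)^{-1/2}$ and right-multiplying by $\left(\sum_i A_i^\ast A_i\right)^{-1/2}$. Each iteration multiplies the capacity by a factor at least a fixed polynomial in the distance to being doubly stochastic; equivalently, the capacity strictly increases until the tuple is approximately doubly stochastic. Translating the accumulated scaling matrices back into BL language yields new maps $B'_j = L_j B_j R$ (with $L_j, R$ invertible) such that $\sum_j p_j (B'_j)^\ast B'_j \approx I_n$ and $B'_j (B'_j)^\ast \approx I_{n_j}$, i.e.\ $(\mathbf{B}', \mathbf{p})$ is approximately geometric. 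Ball's theorem ($\BL \equiv 1$ for geometric datum), combined with a quantitative continuity estimate for the determinantal supremum in Theorem~\ref{BLconstant}, then gives $\BL(\mathbf{B}',\mathbf{p}) \le 1 + \epsilon$.

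The running time bound comes from sandwiching the capacity. On one side, the doubly stochastic endpoint forces $\capac \le 1$; on the other side, the upper bound $\BL(\mathbf{B},\mathbf{p}) \le \exp(O(b\log bd))$ from Theorem~\ref{thmintro:BL_ub}, transported through the capacity identity, yields $\capac(\mathbf{A}) \ge \exp(-\poly(b,d))$. Combined with the per-iteration multiplicative progress, this caps the iteration count at $\poly(b,d,1/\epsilon)$. To output the desired $(1+\epsilon)$-approximation to $\BL(\mathbf{B},\mathbf{p})$ itself, I would use the elementary transformation law: if $B'_j = L_j B_j R$ then by substitution in the defining integral one gets $\BL(\mathbf{B}',\mathbf{p}) = \BL(\mathbf{B},\mathbf{p}) \cdot |\det R|^{-1} \cdot \prod_j |\det L_j|^{p_j}$, and the right-hand side is fully determined by the scaling matrices produced by the algorithm together with $\BL(\mathbf{B}',\mathbf{p}) \in [1, 1+\epsilon]$.

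The step I expect to be the main obstacle is making the reduction to operator scaling \emph{exact} at the level of optima: one must verify that the packaging of the $B_j$'s into $\mathbf{A}$ (accounting for rational exponents via integer multiplicities) converts Lieb's supremum over block-diagonal positive definite $X_j$'s into precisely the capacity of $\mathbf{A}$. The secondary technical hurdle is the quantitative passage ``approximately geometric $\Rightarrow$ $\BL$-constant close to $1$'' with a dependence on $\epsilon$ compatible with the $\poly(1/\epsilon)$ running time; this requires controlling the Lieb supremum under small perturbations of the identity in the Löwner order, but should follow from smoothness of the log-determinant and the feasibility inequalities of Theorem~\ref{feasibility}.
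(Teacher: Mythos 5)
Your overall route is the same as the paper's: package $c_j$ copies of each $B_j$ into a tuple of $\sum_j c_j$ block matrices, prove the exact identity $\capac(T_{(\mathbf{B,p})}) = \BL(\mathbf{B,p})^{-2}$ (the paper's Lemma~\ref{lem:main_reduction}, proved exactly as you anticipate, by matching Lieb's supremum against the capacity infimum via Hadamard's inequality and log-concavity of the determinant), run the \cite{GGOW} alternate-minimization algorithm, and sandwich the capacity between $1$ and $\exp(-\poly(b,d))$ to bound the iteration count. Your use of the transformation law of Proposition~\ref{basis-change} to recover $\BL(\mathbf{B,p})$ from the accumulated scaling matrices is also exactly the route sketched in the paper's introduction (the formal proof in Section~\ref{sec:compute_bl} instead just outputs $\capac(\widetilde{T}_{(\mathbf{B,p})})^{-1/2nd}$ directly; both work).

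There is, however, one step that would fail as you describe it: deducing $\BL(\mathbf{B'},\mathbf{p}) \le 1+\eps$ from ``$(\mathbf{B'},\mathbf{p})$ is approximately geometric'' by perturbing the Lieb supremum around the identity using ``smoothness of the log-determinant.'' The Lieb supremum ranges over an unbounded set of positive definite matrices, so pointwise smoothness of the integrand gives no control of the supremum under perturbation of the datum; this is precisely why continuity of the BL constant was a nontrivial theorem (\cite{BBCF16}, via compactness), and the quantitative version in this paper (Theorem~\ref{stability}) requires the perturbation $\delta$ to be as small as $\exp(-\poly(b,d)/\eps^3)$. Driving the distance to geometric position down to such a $\delta$ would cost exponentially many iterations and destroy the $\poly(1/\eps)$ running time. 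The fix is immediate and lives entirely inside your own framework: the scaled datum $(\mathbf{B'},\mathbf{p})$ is again a feasible BL datum, so the identity $\capac(T') = \BL(\mathbf{B'},\mathbf{p})^{-2}$ applies to it directly, and the \cite{GGOW} guarantee is stated in terms of capacity of the output scaling, $\capac(T') \ge 1-\eps$ (Corollary~\ref{cor:GGOW}); hence $\BL(\mathbf{B'},\mathbf{p}) \le (1-\eps)^{-1/2} \le 1+O(\eps)$ with no perturbation argument at all. (Equivalently, one can argue purely in BL language as in Sections~\ref{sec:properties} and~\ref{sec:irrational}: a normalized datum has $\BL \ge 1$ by Corollary~\ref{cor:isotropy_or_projection}, and each normalization step multiplies $\BL$ by $\Det(\sum_i p_i B_i^{\dagger}B_i)^{1/2} \le \exp(-\eps/12)$ via the robust AM--GM inequality, Lemma~\ref{lem:AM-GM_robust}, as long as the datum is $\eps$-far from geometric.) With that one substitution your argument matches the paper's.
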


\subsection{Efficient computation of the BL-constant via scaling}

The algorithm underlying the proof of Theorem \ref{constant-alg} will be shown (via a reduction) to be a special case 
of the operator scaling algorithm of Gurvits~\cite{gurvits2004}, which the current authors analyzed 
and proved to be polynomial time in~\cite{GGOW}. 

In this subsection, we explicitly describe the algorithm in this special case of BL-inequalities only, without referring to~\cite{GGOW}.
It will be instructive to see how
notion of scaling, which naturally exists in the theory of BL inequalities, can be used algorithmically. Moreover, it will become clear  how the algorithm implies (known and new) structural consequences to the BL theory.
This description will also help to motivate our reduction in the technical chapters which follow (which will provide the proof for its run-time). 

Let us return to the BL constant, and to an important family of BL data called {\em geometric}
\footnote{The analogous term in the operator scaling setting will be called {\em doubly stochastic}.} 
defined below. It was introduced by Ball~\cite{Ball} and extended by Barthe~\cite{Barthe2}.

\begin{definition}\label{def:normalization}
A BL datum $(\mathbf{B,p})$ is called {\em geometric}  if it satisfies the following conditions 
(with $I_k$ denoting the $k\times k$ identity matrix).
\begin{description}[leftmargin= 1in]
\item [		Projection:] For every $j\in [m]$, $B_j$ is a projection, namely $B_j B_j^{\dagger} = I_{n_j}$.
\item [		Isotropy:] $\sum_j p_j B_j^{\dagger} B_j = I_n$
\end{description}
\end{definition}

Ball proved that for the special case of geometric BL datum with $n_j=1$, the constant is always one. Barthe extended this to the general case\footnote{The analogous 
theorem for operator scaling is that the capacity of doubly stochastic operators is always $1$. Another family of inequalities where the 
constant is always $1$ are the ``discrete'' BL inequalities of \cite{CDKSY15}.}.
\begin{theorem}[\cite{Ball, Barthe2}]\label{geometric}
For every geometric BL datum $(\mathbf{B,p})$ we have $\BL(\mathbf{B,p})=1$.
\end{theorem}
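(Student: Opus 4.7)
By Lieb's variational formula (Theorem~\ref{BLconstant}), for any feasible datum,
$$\BL(\mathbf{B},\mathbf{p})^2 \;=\; \sup_{X_j \succ 0} \frac{\prod_j (\det X_j)^{p_j}}{\det\bigl(\sum_j p_j B_j^\dagger X_j B_j\bigr)}.$$
The lower bound $\BL(\mathbf{B},\mathbf{p}) \ge 1$ is immediate from $X_j = I_{n_j}$: the numerator equals $1$, and the isotropy condition $\sum_j p_j B_j^\dagger B_j = I_n$ makes the denominator $1$ as well. The real content is the matching upper bound, which amounts to the determinantal inequality
$$(*) \qquad \prod_j (\det X_j)^{p_j} \;\le\; \det\!\Bigl(\sum_j p_j B_j^\dagger X_j B_j\Bigr) \quad \text{for all } X_j \succ 0.$$

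My plan is first to reduce $(*)$ to the rank-one case ($n_j = 1$ for every $j$) and then prove that case via Cauchy--Binet and weighted AM--GM. For the reduction, spectrally decompose each $X_j = \sum_{k=1}^{n_j} \lambda_{j,k}\, v_{j,k} v_{j,k}^\dagger$ and set $w_{j,k} := B_j^\dagger v_{j,k} \in \R^n$. The projection condition $B_j B_j^\dagger = I_{n_j}$ forces each $w_{j,k}$ to be a unit vector, and a short check shows that the rank-one datum consisting of $\{w_{j,k}\}$ with exponents $q_{j,k} := p_j$ is still geometric (its isotropy follows from $\sum_k v_{j,k} v_{j,k}^\dagger = I_{n_j}$ together with the original isotropy). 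Since $\det X_j = \prod_k \lambda_{j,k}$ and $B_j^\dagger X_j B_j = \sum_k \lambda_{j,k}\, w_{j,k} w_{j,k}^\dagger$, proving $(*)$ for this rank-one datum with scalar variables $\lambda_{j,k}$ recovers the general $(*)$.

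For the rank-one case ($B_j = u_j^\dagger$ with $\|u_j\|=1$ and $X_j = t_j > 0$), Cauchy--Binet gives
$$\det\!\Bigl(\sum_j p_j t_j\, u_j u_j^\dagger\Bigr) \;=\; \sum_{S \in \binom{[m]}{n}} \lambda_S\, \prod_{j\in S} t_j, \qquad \lambda_S := \Bigl(\prod_{j \in S} p_j\Bigr)\det(U_S)^2,$$
where $U_S$ is the $n\times n$ matrix with columns $\{u_j\}_{j\in S}$. Evaluating at $t\equiv 1$ and using isotropy shows that $\{\lambda_S\}_S$ is a probability distribution on $\binom{[m]}{n}$. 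The crucial step is to show that its $j$-th marginal equals $p_j$, i.e., $p_j = \sum_{S \ni j}\lambda_S$; this follows by differentiating the Cauchy--Binet identity in $t_j$ at $t\equiv 1$ and comparing with Jacobi's formula, whose left-hand side evaluates to $p_j\, u_j^\dagger I_n^{-1} u_j = p_j$. Given this marginal identity, weighted AM--GM applied to the mixture $\sum_S \lambda_S\prod_{j\in S}t_j$ yields exactly $\prod_j t_j^{p_j}$ as its lower bound, establishing $(*)$.

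The step I expect to require the most care is the marginal identity $p_j = \sum_{S\ni j}\lambda_S$: this is the algebraic manifestation of geometricity and is what forces the AM--GM bound to be tight enough to produce BL constant exactly $1$ rather than something strictly larger. The reduction to rank one and the Cauchy--Binet / AM--GM manipulation are both routine once this identity is in hand.
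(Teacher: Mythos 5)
Your argument cannot really be compared against ``the paper's proof'' because the paper offers none: Theorem~\ref{geometric} is imported from \cite{Ball, Barthe2} as a black box (and is used as such inside the proof of Theorem~\ref{thm:BLge1}). What you have written is a correct, self-contained, elementary proof. The rank-one core --- the Cauchy--Binet expansion of $\det\bigl(\sum_j p_j t_j u_j u_j^\dagger\bigr)$ into a nonnegative combination indexed by bases, the marginal identity $\sum_{S\ni j}\lambda_S = p_j$ obtained by differentiating the identity at $t\equiv 1$, and weighted AM--GM --- is a known proof of Ball's inequality, and your reduction of the general-rank case to rank one via spectral decomposition of the $X_j$'s is clean and checks out: the vectors $w_{j,k}=B_j^\dagger v_{j,k}$ are unit vectors by the projection condition, the induced rank-one datum inherits isotropy, and applying the scalar inequality to the eigenvalues $\lambda_{j,k}$ recovers exactly $(*)$. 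This is genuinely different in flavor both from Barthe's mass-transportation proof of the general-rank case and from what this paper's own machinery would give, namely that the Brascamp--Lieb operator of a geometric datum is doubly stochastic and doubly stochastic operators have capacity $1$ \cite{gurvits2004}; that route, going through Construction~\ref{cons:main_reduction}, only applies to rational exponent vectors, whereas yours works for arbitrary real $\mathbf{p}$.

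One caveat you should make explicit. Theorem~\ref{BLconstant} as stated in the paper presupposes feasibility, so strictly speaking your argument shows that \emph{if} a geometric datum is feasible then $\BL(\mathbf{B,p})=1$. To remove the hypothesis, either invoke Lieb's theorem in its unconditional form ($\BL$ always equals the Gaussian supremum, finite or not, which is what \cite{Lieb90} actually establishes), or verify feasibility directly: the projection condition makes $B_j^\dagger B_j$ an orthogonal projection, so for any subspace $V$ with orthogonal projector $P_V$ one has
$$\dim(B_jV)=\rk(P_V B_j^\dagger B_j P_V)\ \ge\ \tr\!\left[P_V B_j^\dagger B_j P_V\right],$$
and summing against $p_j$ and using isotropy gives $\sum_j p_j\dim(B_jV)\ge \tr[P_V]=\dim(V)$, while taking traces in the isotropy condition gives $n=\sum_j p_j n_j$; Theorem~\ref{feasibility} then yields feasibility. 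Either way the gap is minor and easily closed.
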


A simple and natural action of the general linear groups on the Euclidean spaces involved (which simply performs a basis change in each space) yields an equivalence relation on BL data\footnote{This group action naturally calls for a study of BL from an invariant theory viewpoint, which indeed exists in a much more general context that will be relevant to us in several ways, namely that of {\em quiver representations}. An extensive survey is ~\cite{DW07}.}. Specifically, we say that $(\mathbf{B,p})$ and $(\mathbf{B',p'})$ are {\em equivalent} if there exist matrices $C\in GL_n(\R)$, $C_j \in GL_{n_j}(\R)$ (which are called {\em intertwining transformations}\footnote{They will be called {\em scaling matrices} in the operator scaling setting.} in~\cite{BCCT}) such that $B_j' = C_j^{-1}B_jC$ and $p_j' = p_j$ for all $j$. It is easy to compute the effect of such action on the BL constant.

\begin{proposition}[\cite{BCCT}]\label{basis-change}
Assume $(\mathbf{B,p})$ and $(\mathbf{B',p'})$ are equivalent via $C,C_j$ as above. Then
$$ \BL(\mathbf{B',p'}) = \frac{\prod_j (\det(C_j))^{p_j}}{\det (C)} \BL(\mathbf{B,p}).$$
where $\det$ is the determinant polynomial on matrices of the appropriate dimension.
\end{proposition}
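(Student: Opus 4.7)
The plan is to derive the identity directly from the definition of $\BL(\cdot,\cdot)$ via two linear changes of variables, bypassing Theorem~\ref{BLconstant}. This approach works uniformly whether or not the datum is feasible (both sides being $+\infty$ simultaneously when it is not), and the determinants should be read as $|\det(\cdot)|$ since $\BL$ is non-negative.

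First I would fix non-negative integrable test functions $g_1,\dots,g_m$ on the spaces $\R^{n_j}$ and consider the integral
$$
I \;=\; \int_{\R^n} \prod_{j=1}^m g_j(B_j' x)^{p_j}\, dx
\;=\; \int_{\R^n} \prod_{j=1}^m g_j(C_j^{-1} B_j C x)^{p_j}\, dx.
$$
Substituting $y = C x$ (so $dx = |\det(C)|^{-1} dy$) and defining $f_j(z) := g_j(C_j^{-1} z)$ for each $j$, the integrand becomes $\prod_j f_j(B_j y)^{p_j}$, so
$$
I \;=\; \frac{1}{|\det(C)|}\int_{\R^n} \prod_{j=1}^m f_j(B_j y)^{p_j}\, dy.
$$
Since $f_j \ge 0$ is Lebesgue integrable, the definition of $\BL(\mathbf{B,p})$ applied to the right-hand side yields
$$
I \;\leq\; \frac{\BL(\mathbf{B,p})}{|\det(C)|} \prod_{j=1}^m \left(\int_{\R^{n_j}} f_j(z)\, dz\right)^{p_j}.
$$

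Second, I would undo the inner substitution: for each $j$, the change of variable $w = C_j^{-1} z$ gives $\int f_j(z) dz = |\det(C_j)| \int g_j(w) dw$. Plugging this back in,
$$
I \;\leq\; \frac{\prod_j |\det(C_j)|^{p_j}}{|\det(C)|}\, \BL(\mathbf{B,p}) \prod_{j=1}^m \left(\int_{\R^{n_j}} g_j(w)\, dw\right)^{p_j}.
$$
Since the $g_j$ were arbitrary non-negative integrable functions, this is exactly the BL inequality for $(\mathbf{B',p})$ with constant $\frac{\prod_j |\det(C_j)|^{p_j}}{|\det(C)|}\BL(\mathbf{B,p})$, so by minimality
$$
\BL(\mathbf{B',p}) \;\leq\; \frac{\prod_j |\det(C_j)|^{p_j}}{|\det(C)|}\,\BL(\mathbf{B,p}).
$$

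Finally, the reverse inequality is obtained by symmetry: the relation is inverted by the triple $(C^{-1}, C_1^{-1}, \dots, C_m^{-1})$, which takes $(\mathbf{B',p})$ back to $(\mathbf{B,p})$, and running the same argument in that direction gives the matching lower bound, yielding equality. There is no real obstacle in this proof: the only delicate point is to be careful that the substitution $g_j \mapsto f_j = g_j \circ C_j^{-1}$ preserves non-negativity and Lebesgue-integrability (which it does, by the standard change-of-variables formula applied to the invertible linear maps $C, C_j$), and to note the implicit use of absolute values on the determinants.
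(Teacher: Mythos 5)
Your proof is correct. The paper itself gives no proof of this proposition (it is quoted from \cite{BCCT} and described as an easy computation), and your change-of-variables argument --- substituting $y = Cx$ in the outer integral, $w = C_j^{-1}z$ in each inner integral, and then inverting the equivalence via $(C^{-1}, C_1^{-1},\dots,C_m^{-1})$ to get the matching inequality --- is exactly the standard derivation the authors have in mind. Your remark that the determinants must be read as $|\det(\cdot)|$ is also right; the paper's statement is slightly imprecise on this point, and the original formulation in \cite{BCCT} does carry the absolute values.
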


Equivalence and this simple, efficiently computable formula suggests a natural path to computing BL constants. Given BL datum $(\mathbf{B,p})$, compute an equivalent {\em geometric} datum $(\mathbf{B',p'})$ (if one exists), and the intertwining transformations relating the two. The main questions are, is there such a geometric equivalent datum, and how to compute these transformations. The theory of (quantum) operator scaling~\cite{gurvits2004,GGOW} suggests that a simple, greedy procedure will work. This algorithm will underly (most of) the statements in Theorem~\ref{feasibility-alg} and Theorem~\ref{constant-alg}. As mentioned, we will explain the connection and reduction to operator scaling in the next sections, and here describe informally how it is applied to the BL setting.

\paragraph{\textit{The scaling algorithm}}

The key idea of {\em greedy, iterative} scaling, which goes back to Sinkhorn \cite{Sink} in the (classical) matrix scaling, is that when trying to achieve a pair of conditions as in the definition of geometric BL datum, we should try to satisfy them one at a time! Say we are given $(\mathbf{B,p})$ and assume for example that the isotropy condition is not met. A non-triviality condition (otherwise $(\mathbf{B,p})$ is not feasible) implies that $C \triangleq \sum_j p_j B_j^{\dagger} B_j $ is invertible, and so we can set $B_j \leftarrow B_j C^{-1/2}$ to get a new datum which satisfies isotropy. 
So, we ``only'' have to fix the projection property; lets do it. Again, non-triviality implies that each of 
$C_j \triangleq B_j B_j^{\dagger}$  
is invertible, and we can now set $B_j \leftarrow C_j^{-1/2} B_j $, satisfying the projection property. Of course, we may now have ruined isotropy. No problem: lets fix it again in the same way, and repeat alternately fixing the unsatisfied property. The magic is that this sequence of {\em normalization} steps converges, and moreover, converges in polynomial time, to a geometric datum, whenever the original datum $(\mathbf{B,p})$ is feasible!\footnote{Note that the sequence might only converge to a geometric datum and it is possible that no element of the sequence is geometric.} Let us describe the BL algorithm more precisely and then state its properties formally.

\begin{Algorithm}[H]
\textbf{Input}: A BL datum $(\mathbf{B,p})$ \\ \\
Set $\mathbf{B^0}=\mathbf{B}$.\\
Repeat for $i=1$ to $t$ normalization steps.
\begin{itemize}
\item if $i$ is odd, normalize $\mathbf{B^i}$ from $\mathbf{B^{i-1}}$ to satisfy isotropy.
\item if $i$ is even, normalize $\mathbf{B^i}$ from $\mathbf{B^{i-1}}$ to satisfy projection.
\end{itemize}
\textbf{Output}: $\mathbf{B^t}$. 
\caption{The BL scaling algorithm}
\label{The BL scaling algorithm}
\end{Algorithm}

The BL scaling algorithm defines a dynamics in the space of BL data which are restricted to stay in one equivalence class. If $(\mathbf{B,p})$ is the original datum which is not already geometric, it defines a sequence $\mathbf{B^0}=\mathbf{B}, \mathbf{B^1}, \mathbf{B^2},\dots$ which alternately satisfies the isotropy or projection properties (note that the exponent vector $p$ is used, but not changed in this process). We now state the main three properties which underlie the analysis. 

\begin{theorem}\label{analysis}
For every $(\mathbf{B,p})$ the following holds:
\begin{enumerate}
\item \emph{Upper bound:} If $(\mathbf{B,p})$ is feasible, then $\BL(\mathbf{B,p}) \leq \exp(O(b \log(bd)))$.
\item \emph{Lower bound:} If the datum $(\mathbf{B,p})$ is either isotropy-normalized or projection-normalized, then $\BL(\mathbf{B,p}) \geq 1$. 
\item \emph{Progress per step:} Let $(\mathbf{B',p})$ denote the result of applying either isotropy-normalization or projection-normalization to $(\mathbf{B,p})$. Then $($as long as $\BL(\mathbf{B,p}) \geq 1+\epsilon)$, we have  $$\BL(\mathbf{B',p}) \leq (1-\poly \left( \epsilon/nd \right)) \BL(\mathbf{B,p}).$$
\end{enumerate}
\end{theorem}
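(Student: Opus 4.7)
The plan is to prove each of the three parts via a reduction from BL-data to completely positive maps (the operator scaling setting), where analogous statements were established in \cite{GGOW}. Writing $p_j = c_j/d$ with integer $c_j, d$, the reduction (detailed in a later section) attaches to $(\mathbf{B}, \mathbf{p})$ a tuple of matrices containing $c_j$ copies of each $B_j$, in such a way that $\BL(\mathbf{B},\mathbf{p})$ is a $d$-th root of a capacity-type quantity, that isotropy-normalization of $(\mathbf{B},\mathbf{p})$ corresponds to the right-normalization $\sum A_j^\dagger A_j = I$ of the associated CP map, and that projection-normalization corresponds to the left-normalization $\sum A_j A_j^\dagger = I$. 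Given this dictionary, the three parts are derived as follows.

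Part 2 (the lower bound) is immediate from Lieb's variational characterization (Theorem \ref{BLconstant}): taking the test matrices $X_j = I_{n_j}$ in the supremum gives
\[
\BL(\mathbf{B}, \mathbf{p})^2 \,\geq\, \frac{1}{\det\bigl(\sum_j p_j B_j^\dagger B_j\bigr)}.
\]
For isotropy-normalized data the denominator is $\det(I_n) = 1$. For projection-normalized data that is feasible, $\tr(B_j^\dagger B_j) = \tr(B_j B_j^\dagger) = n_j$, so $\tr(\sum_j p_j B_j^\dagger B_j) = \sum_j p_j n_j = n$ by Theorem \ref{feasibility}; the AM-GM inequality for positive semidefinite matrices then gives $\det \leq 1$. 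For infeasible data $\BL = \infty$ and the bound is trivial.

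Part 1 (the upper bound) is derived by translating a standard capacity upper bound for completely positive maps (via Cauchy-Binet type arguments) through the reduction. Concretely, the associated capacity is controlled by a product of determinants of sub-matrices of the $B_j$'s, each bounded in absolute value by $\exp(O(b))$ since the entries have bit-length at most $b$; taking the $d$-th root to return to the BL-constant and accounting for the combinatorial factors introduced by the reduction produces the bound $\exp(O(b\log(bd)))$.

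Part 3 (progress per step) is the technical heart. By Proposition \ref{basis-change}, one normalization step by intertwining $(C, \{C_j\})$ multiplies the BL constant by $\prod_j (\det C_j)^{p_j} / \det C$, and this ratio is precisely the $1/d$-th power of the corresponding capacity ratio in the operator-scaling reduction. The progress lemma from \cite{GGOW} states that if a CP map is $\delta$-far from doubly stochastic (in a Frobenius distance of the normalization matrix to the identity), then a single symmetrization step shrinks capacity by a factor at most $1 - \Omega(\delta^2/N)$, where $N$ is the operator-scaling dimension. In our setting $N = O(nd)$, and the hypothesis $\BL(\mathbf{B},\mathbf{p}) \geq 1 + \epsilon$, combined with the variational bound used in Part 2, yields a Frobenius gap $\delta = \Omega(\epsilon)$ for the associated CP map; taking the $d$-th root of the capacity progress estimate then gives the claimed $(1 - \poly(\epsilon/(nd)))$ multiplicative decrease in $\BL$. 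I expect the main obstacle to be precisely this last translation: transferring the $\epsilon$-gap hypothesis on $\BL$ into a Frobenius-distance gap for the CP map with polynomial (rather than exponential) dependence on $d$, while simultaneously controlling the dimension inflation so that the final progress depends polynomially on $1/(nd)$ rather than, say, $1/N^d$.
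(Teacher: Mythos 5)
Your overall strategy --- reduce to the operator-scaling setting and import the three corresponding facts from \cite{GGOW} --- is exactly the route the paper takes (it states explicitly that Theorem \ref{analysis} needs no separate proof because it follows via the reduction from the analogous operator-scaling theorem). Your Part 2 is correct and is essentially the paper's own argument (Theorem \ref{thm:BLge1} and Corollary \ref{cor:isotropy_or_projection}: plug $X_j = I_{n_j}$ into Lieb's formula, use the trace normalization and AM--GM on eigenvalues). Your Part 3 also has the right shape: the paper's progress estimate is a robust AM--GM statement (Lemma \ref{lem:AM-GM_robust}) applied to the determinant factor $\Det\left(\sum_i p_i B_i^\dagger B_i\right)^{1/2}$ by which a normalization step multiplies the constant. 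One imprecision: the gap hypothesis is naturally phrased in terms of the distance $g(\mathbf{B,p})$ to geometric position rather than $\BL(\mathbf{B,p}) \ge 1+\epsilon$, and translating between the two requires the converse-type lemma (``small $\ds$ implies capacity close to $1$''), not the variational bound from Part 2, which controls the determinant from the wrong side; you do flag this translation as the delicate point, and the paper itself only sketches it.

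The genuine gap is in Part 1. An upper bound on $\BL(\mathbf{B,p})$ is a \emph{lower} bound on the capacity of the associated operator, i.e.\ a lower bound on an infimum of $\Det(T(X))$ over all $X \succ 0$ with $\Det(X)=1$. Your proposed mechanism --- a Cauchy--Binet expansion into subdeterminants of the $B_j$'s, each \emph{upper} bounded by $\exp(O(b))$ --- points in the wrong direction: upper bounds on the terms of a nonnegative sum cannot control an infimum from below, and nothing in your sketch explains why any relevant subdeterminant (or combination thereof) stays bounded away from zero uniformly in $X$. What the paper actually uses (Theorem \ref{lem:cap-square}, i.e.\ Theorem 2.18 of \cite{GGOW}) is that positivity of capacity forces some generating semi-invariant of the left-right action, of degree controlled by the known degree bounds for the null cone of this quiver, to be nonzero; evaluated at \emph{integer} Kraus operators that nonzero value is an integer and hence at least $1$ in absolute value, which yields $\capac(T_{(\mathbf{B,p})}) \ge \exp(-2n\log(n^2 d))$ \emph{independently of the magnitude of the entries}. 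The $b$-dependence in the final statement enters only through $b \ge nm$ and through clearing denominators to pass from rational to integer data (Corollary \ref{cor:bl_upperbound_rational}), not through magnitude bounds on subdeterminants. Without the degree bound and the integrality argument, Part 1 of your proposal does not close.
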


We will not need to prove this theorem, as it will follow via our reduction\footnote{In the case when the exponent vector $\mathbf{p}$ is irrational, the reduction to operators does not exist and so in this case, we provide an analysis of the number of iterations of Algorithm \ref{The BL scaling algorithm} in Section \ref{sec:irrational}.} from the analogous theorem in the more general operator scaling setting\footnote{Where statements made here regarding the BL constants will be replaced by statements about a related notion called the {\em capacity} of an operator.}. However, let us say a few words here about what goes into each item above using BL language.  Property (3) on progress per step is the simplest: it follows from a robust version of the AM-GM inequality. Property (2) follows from plugging in densities of spherical Gaussians (it was proved in \cite{Vald11}, also see Section \ref{sec:properties}).
 Property (1) is the hardest, in that it requires degree bounds on the generating polynomial invariants in the null-cone of the quiver associated to the group action above. We will briefly explain this mouthful in Section \ref{sec:ub_bl}, and refer the reader to~\cite{GGOW} for details. In Section \ref{sec:ub_bl}, we will also give an improved and simplified analysis of the main theorem from \cite{GGOW} and derive the desired upper bounds on the BL constant via our main reduction. All in all, it is worth stressing again that beyond this non-trivial use of algebra, on the analytic side the most general BL inequalities follow from the most basic one, AM-GM.

This analysis above implies that we can choose the number $t$ of steps in the BL scaling algorithm 
to be $\poly(b,d, 1/\eps)$ so that $\BL(\mathbf{B^{i},p}) \leq 1+\eps$ for some $1 \le i \le t$ 
(whenever the initial datum $(\mathbf{B^0,p})$ is feasible). Testing if the lower bound is met
($\BL(\mathbf{B^{i},p}) \leq 1+\eps$) decides feasibility\footnote{It is not immediately clear how to 
test $\BL(\mathbf{B^{i},p}) \leq 1+\eps$. However this can be done by checking closeness to the 
geometric position in an appropriate metric, similar to the distance to double stochasticity for 
operators that will be defined later.}, thus proving the first item in Theorem~\ref{feasibility-alg}. 
For proving the second item
we will employ a very different, beautiful combinatorial algorithm for operator scaling (only the decision version of it), that was 
developed after ours by Ivanyos, Qiao and  Subrahmanyam~\cite{IQS15b}. Their algorithm relies 
on even tighter degree bounds from invariant theory which were proved very recently~\cite{DM15}, 
and has several advantages over the algorithm in~\cite{GGOW}; in particular it can find violated 
dimension inequalities. Going back to Theorem~\ref{constant-alg} on approximating the BL 
constant, note that it suffices to multiply together the (respective) normalizing matrices used in all 
steps of the BL scaling algorithm, using the formula in Proposition~\ref{basis-change}.

We now return to the scaling algorithm above, and discuss connections and applications of its structure and properties  in the next two bullets.

\paragraph{\textit{The BL constant, scaling and alternate minimization}}

The problem of computing the BL constant, as it is formulated here, is a non-convex problem.
We presented an efficient algorithm for this optimization problem which may be viewed as an instance of a general heuristic called {\em alternate minimization}. Let us describe and discuss this method.

In general, alternate minimization
is a very general technique introduced by~\cite{CT84}, which was devised to solve problems of the following 
kind. Given some universe $U$, a (distance) function $d : U \times U \to \R_+$ and two subsets 
$X, Y \subseteq U$, find 
$\dst (\bx^*, \by^*) = \textnormal{arg}\min_{\substack{\bx \in X, \by \in Y}} d(\bx, \by)$. We note that 
such problems are in general NP-hard, even over convex sets $X,Y$ and natural distances $d$. 
Nevertheless, alternate minimization is widely used to solve special cases of such problems in practice.\footnote{Note that such problems can be defined with more than two components, and the alternate minimization approach below can be extended to such problems, but we stick here with two.}

An alternate minimization algorithm for the problem above starts with a arbitrary point $(x_0,y_0) \in X \times Y$ and generates a sequence of points $(x_i,y_i)$ as follows. In even steps $i$, $x_{i+1}=x_i$, and $y_{i+1}$ is chosen to be the value of $y$ which minimizes $d(x_i,y)$ over the second coordinate alone (an optimization problem that is often far simpler). In odd steps the roles of $x$ and $y$ are reversed. It is easy to cast the scaling algorithm above as a process of this nature. It is also easy to see that it enjoys some nice properties in general situations. For example, if both $X$ and $Y$ are strictly convex, and $d$ is a metric, then this process converges to the optimum. Of course, the speed of convergence is the main question. We list below some other, possibly familiar instances of alternate minimization for which such convergence is not known.

The famous Lemke-Howson algorithm for finding a Nash equilibrium for 2-player 
games~\cite{Lemke-Howson} is of that nature. Recall that in that algorithm one starts with an arbitrary strategy for both players, and then one alternates fixing the strategy of one, and finding the best response for the other via linear programming.
 Another such example is the work of Zafeiriou and Petrou~\cite{ZP10}
on computing non-negative tensor factorization. As above, the analysis of such alternating 
minimization algorithms just provide convergence of the procedure, without proving {\em how fast}
it converges to the minimum, with the hope that this heuristic will lead to quick convergence on instances arising in practice. 
Of course, there are examples,  e.g. the alternate minimization algorithms for matrix completion in~\cite{JNS13, hardt14} which prove rapid convergence under certain conditions on the inputs.

In this work, as well as in~\cite{GGOW}, we prove that our alternate minimization algorithm 
not only converges, but also converges to the infimum in polynomial time. Our proof is based on the introduction
of a potential function which measures how much progress we make in every step of the alternating
minimization algorithm. In our case the potential function {\em is} the very BL constant we are trying to optimize, which is a special case of the so-called {\em capacity} we use as potential  in our operator scaling algorithm~\cite{GGOW}. As it happens, the analysis of convergence relies on a combination of algebraic tools, in particular from  (the commutative) invariant theory and (the non-commutative) theory of skew fields, as explained in that paper. We hope that such potential functions as well as the methods to analyze their progress, which we call  {\em capacity methods}, will be of use in the analyses of other problems which use alternate 
minimization, especially when the result of optimizing each of the solution components in individual optimization steps may be viewed as the action of a group. We are currently exploring this direction for alternate minimization with more than two components, and these lead to interesting algebraic questions related to questions that arise in invariant theory as well as in Geometric Complexity Theory of Mulmuley and Sohoni~\cite{GCT1, GCT5}.

\paragraph{\textit{Bounds and continuity of the BL constant, and non-linear BL inequalities}}

Another important property of the BL scaling algorithm above, evident from its form, is that it is smooth!  Namely, the orbits under the algorithm of two sufficiently close BL data will remain close to each other for the duration of the algorithm. This implies that the BL constants of both will be close as well, and moreover this can be quantified! This simple observation gives a much stronger result than existing qualitative results on the continuity of BL constants which has recently received significant attention. This is yet another example   highlighting the usefulness of algorithmic results in mathematics. We briefly overview the motivation and known results before stating ours.

How smooth (or regular) the BL constant is (in terms of the BL data) is of course a natural question, especially as the expression in Lieb's theorem~\ref{BLconstant} is not even convex. Further motivation to study this question arises from a variety of non-linear variants of the BL inequalities, in which the maps $B_j$ are non-linear, but at least smooth enough so as to be approximable by linear ones in a small ball.
Some such variants, needed in specific applications, were proved directly.  The papers~\cite{BB10,BBFL15} ask this question in full generality, and demonstrate the importance of such smoothness conditions of the BL constant for general non-linear extensions of BL inequalities. These papers also
exposit the many diverse applications of such generalized BL inequalities
in analysis, number theory and other areas (possibly the most impressive recent ones are the applications of ``Kakeya-type''  BL inequalities to number theory in~\cite{BoDe15,BDG15}, resolving long standing open problems including the ``Vinogradov mean-value conjecture''). 

Local boundedness, a condition weaker than continuity, is established in~\cite{BBFL15}.

\begin{theorem}[\cite{BBFL15}]\label{locally-bounded}
If $(\mathbf{B,p})$ is feasible, then there are $\delta > 0$ and $C < \infty$ such that for all $(\mathbf{B',p})$ 
such that $\|\mathbf{B}'-\mathbf{B}\|_2 \leq \delta$ we have $\BL(\mathbf{B',p}) \leq C$.
\end{theorem}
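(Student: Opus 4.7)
The plan is to leverage the BL scaling algorithm (Algorithm \ref{The BL scaling algorithm}) as a smooth deformation that brings any feasible datum into a nearly geometric normal form, together with the transformation law of Proposition \ref{basis-change}. The key qualitative observation is that a small perturbation of $(\mathbf{B,p})$ is mapped, under the first $t$ scaling steps, to a small perturbation of the normal form; since Barthe's theorem (Theorem \ref{geometric}) pins down the BL constant of truly geometric data, we expect a bound on $\BL(\mathbf{B',p})$ by continuity, which Proposition \ref{basis-change} lets us extract quantitatively.

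First I apply Algorithm \ref{The BL scaling algorithm} to $(\mathbf{B,p})$ for $t = t(\eps)$ iterations, obtaining by Theorem \ref{analysis} intertwining transformations $C \in GL_n(\R)$ and $C_j \in GL_{n_j}(\R)$ such that the scaled datum $\mathbf{B}^{\ast}$, with $B_j^{\ast} = C_j^{-1} B_j C$, satisfies $\BL(\mathbf{B}^{\ast},\mathbf{p}) \leq 1+\eps$ for a fixed small $\eps$ (say $\eps = 1/10$). Each of the $t$ normalization steps inverts matrices of the form $\sum_j p_j B_j^{\dagger} B_j$ or $B_j B_j^{\dagger}$, which remain invertible along the trajectory by feasibility and by the progress guarantee of Theorem \ref{analysis}; hence the map from input datum to the output of $t$ scaling steps is smooth in an open neighborhood of $\mathbf{B}$. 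Consequently, for any $\mathbf{B'}$ with $\|\mathbf{B'}-\mathbf{B}\|_2 \leq \delta$, defining ${\mathbf{B'}}^{\ast}$ by $(B_j')^{\ast} := C_j^{-1} B_j' C$ yields $\|{\mathbf{B'}}^{\ast}-\mathbf{B}^{\ast}\|_2 \leq L(\mathbf{B},t)\cdot \delta$ for some Lipschitz constant $L$ depending on $\mathbf{B}$ and the chosen number of steps $t$.

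The main obstacle is to bound $\BL({\mathbf{B'}}^{\ast},\mathbf{p})$ from the assumption that ${\mathbf{B'}}^{\ast}$ is only a small perturbation of the nearly geometric $\mathbf{B}^{\ast}$. My approach is to continue running the scaling on ${\mathbf{B'}}^{\ast}$: when one normalization step is applied to $\mathbf{B}^{\ast}$, the resulting normalization matrices are close to the identity (since $\mathbf{B}^{\ast}$ is already close to geometric); by smoothness of the normalization map, the same is true when applied to ${\mathbf{B'}}^{\ast}$ for $\delta$ small enough. Iterating this process drives ${\mathbf{B'}}^{\ast}$ into an arbitrarily small neighborhood of a truly geometric datum, whose BL constant equals $1$ by Theorem \ref{geometric}, while the cumulative determinant factor from the additional intertwining transformations $D, D_j$ stays bounded (each factor being close to $1$). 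Applying Proposition \ref{basis-change} to this additional scaling then yields $\BL({\mathbf{B'}}^{\ast},\mathbf{p}) \leq M$ for some universal constant $M$ (say $M=2$).

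Combining with Proposition \ref{basis-change} for the original transformations $C, C_j$ gives
\[
\BL(\mathbf{B'},\mathbf{p}) \;=\; \frac{\prod_j(\det C_j)^{p_j}}{\det C}\cdot \BL({\mathbf{B'}}^{\ast},\mathbf{p}) \;\leq\; \frac{\prod_j(\det C_j)^{p_j}}{\det C}\cdot M,
\]
a finite constant depending only on $(\mathbf{B,p})$. Taking this as our $C$, and $\delta$ small enough so that all of the above smoothness and continuity approximations go through, completes the proof. A quantitative version of the argument, tracking the various Lipschitz constants through each normalization step, yields explicit modulus-of-continuity estimates for $\BL$, consistent with the effective bounds advertised in the introduction.
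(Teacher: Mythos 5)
The statement you are proving is cited from \cite{BBFL15}; the paper does not reprove it directly but instead derives the stronger quantitative Theorem~\ref{stability} by passing through Construction~\ref{cons:main_reduction} to a completely positive operator and invoking the continuity of capacity (Theorem~\ref{capacity_continuity}, i.e.\ Theorem~4.5 of \cite{GGOW}). Your steps (1), (2) and the final application of Proposition~\ref{basis-change} are fine: running the scaling on $(\mathbf{B,p})$ for a fixed number of steps $t$ produces fixed transformations $C, C_j$, and applying these to $\mathbf{B'}$ is Lipschitz, so ${\mathbf{B'}}^{\ast}$ is indeed close to the nearly geometric $\mathbf{B}^{\ast}$.

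The gap is in step (3), and it is the whole content of the theorem. First, ``each factor being close to $1$'' does not bound the cumulative determinant factor: an infinite product $\prod_k(1+\eta_k)$ with each $\eta_k$ small can still diverge, and by Proposition~\ref{basis-change} it \emph{must} diverge whenever $(\mathbf{B',p})$ is infeasible, since then $\BL(\mathbf{B'},\mathbf{p})=\infty$. You never rule out infeasibility of the perturbed datum, yet that is precisely what local boundedness asserts (feasibility is an open condition in $\mathbf{B}$ for fixed $\mathbf{p}$, which is nontrivial because the defining inequalities of $P_{\mathbf{B}}$ range over infinitely many subspaces and the ranks $\dim(B_j'V)$ are only lower semicontinuous for each fixed $V$). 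Second, the step ``${\mathbf{B'}}^{\ast}$ is in a small neighborhood of a geometric datum, whose BL constant is $1$, hence $\BL({\mathbf{B'}}^{\ast},\mathbf{p})\leq M$'' is circular: bounding the BL constant of a datum that is merely \emph{close} to geometric is itself an instance of Theorem~\ref{locally-bounded}. What is needed to break the circularity is a quantitative lemma of the form ``if the datum is $\eps$-close to geometric position (small $g(\mathbf{B,p})$, equivalently small $\ds$ of the associated operator) for $\eps$ below an explicit threshold, then the BL constant is bounded above (equivalently, the capacity is bounded below) by an explicit constant.'' This is exactly the engine of the \cite{GGOW} analysis (their Lemma~3.8, used in the proof of Theorem~\ref{capacity_continuity}), and it is proved there by a genuinely different mechanism --- an approximate-fixed-point argument combined with the integrality-based capacity lower bound of Theorem~\ref{lem:cap-square} --- not by the soft ``normalization matrices are close to the identity'' reasoning you use. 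Without supplying that lemma, your argument establishes only that the trajectory of ${\mathbf{B'}}^{\ast}$ stays close to that of $\mathbf{B}^{\ast}$ for any \emph{fixed finite} number of further steps, which gives no bound on $\BL(\mathbf{B'},\mathbf{p})$.
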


This was followed recently by the paper~\cite{BBCF16}, which proves continuity of the BL constant.

\begin{theorem}[\cite{BBCF16}]\label{continuous}
The \textnormal{BL} constant $\BL(\mathbf{B,p})$ is continuous in $\mathbf{B}$. Namely, for every feasible $(\mathbf{B,p})$ and $\epsilon>0$, there exists a $\delta >0$ such that
if $\|\mathbf{B}'-\mathbf{B}\|_2 \leq \delta$, then we have that $|\BL(\mathbf{B',p})-\BL(\mathbf{B,p})| \leq \epsilon$.
\end{theorem}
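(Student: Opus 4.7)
My plan is to realize $\BL(\cdot, \mathbf{p})$ as a uniform limit of explicit continuous functions built from the BL scaling algorithm, sandwich it between a continuous function and $(1+\epsilon)$ times that function on a neighborhood of $\mathbf{B}$, and let $\epsilon\to 0$. The local boundedness result (Theorem~\ref{locally-bounded}) will be the main external input, together with the three properties packaged in Theorem~\ref{analysis}.

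For a fixed number $t$ of scaling iterations, let $(C^{(t)}(\mathbf{B}), C_j^{(t)}(\mathbf{B}))$ denote the accumulated intertwining transformations produced by $t$ iterations of Algorithm~\ref{The BL scaling algorithm} starting from $\mathbf{B}$, so that the $t$-th iterate is $B_j^t = (C_j^{(t)})^{-1} B_j C^{(t)}$. Each normalization step is an inverse square root of a positive-definite matrix polynomial in the previous iterates, hence smooth wherever the matrices to be inverted are non-singular. Define
$$
F_t(\mathbf{B}) \;:=\; \frac{\det C^{(t)}(\mathbf{B})}{\prod_{j=1}^{m} \bigl(\det C_j^{(t)}(\mathbf{B})\bigr)^{p_j}},
$$
which is a continuous (indeed smooth) function of $\mathbf{B}$ on any open set where the algorithm runs without encountering a singular normalizer. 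Applying Proposition~\ref{basis-change} to the equivalence between $(\mathbf{B}, \mathbf{p})$ and $(\mathbf{B}^t, \mathbf{p})$ yields the key identity $F_t(\mathbf{B}) \cdot \BL(\mathbf{B}^t, \mathbf{p}) = \BL(\mathbf{B}, \mathbf{p})$.

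Now fix $\epsilon > 0$. By Theorem~\ref{locally-bounded}, there is a neighborhood $U$ of $\mathbf{B}$ and a constant $C_0$ with $\BL(\mathbf{B}', \mathbf{p}) \leq C_0$ for all $\mathbf{B}' \in U$. Shrinking $U$ if necessary, using continuity of a finite number of iterations, the scaling algorithm is well-defined on $U$ for any fixed number of steps. By the progress-per-step bound (item~3 of Theorem~\ref{analysis}) combined with the lower bound of $1$ for normalized data (item~2), choosing $t = O(\log C_0 / \poly(\epsilon / (nd)))$ ensures $\BL(\mathbf{B}'^t, \mathbf{p}) \in [1, 1+\epsilon]$ for every $\mathbf{B}' \in U$. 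Substituting into the key identity gives the uniform sandwich
$$
F_t(\mathbf{B}') \;\leq\; \BL(\mathbf{B}', \mathbf{p}) \;\leq\; (1+\epsilon)\, F_t(\mathbf{B}') \qquad \text{for all } \mathbf{B}' \in U.
$$
Since $F_t$ is continuous on $U$, sending $\mathbf{B}' \to \mathbf{B}$ on either side yields $\BL(\mathbf{B}, \mathbf{p})/(1+\epsilon) \le \liminf_{\mathbf{B}' \to \mathbf{B}} \BL(\mathbf{B}', \mathbf{p}) \le \limsup_{\mathbf{B}' \to \mathbf{B}} \BL(\mathbf{B}', \mathbf{p}) \le (1+\epsilon)\,\BL(\mathbf{B}, \mathbf{p})$; letting $\epsilon \to 0$ establishes continuity at $\mathbf{B}$.

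The main obstacle I expect to wrestle with is ensuring the scaling dynamics remain valid \emph{uniformly} on a neighborhood: the matrices $B_j^{(i)} B_j^{(i)\dagger}$ and $\sum_j p_j B_j^{(i)\dagger} B_j^{(i)}$ encountered at each step $i \le t$ must stay uniformly non-singular for all $\mathbf{B}' \in U$, so that $F_t$ is a genuinely continuous function on $U$ and the progress-per-step analysis applies uniformly. Extracting such uniform lower bounds from the uniform upper bound on $\BL$ supplied by Theorem~\ref{locally-bounded} is the delicate point; in our framework it is cleanest to verify after translating the BL datum to the associated completely positive operator and re-expressing $\BL$ via its capacity, using the reduction to operator scaling developed in the technical sections that follow.
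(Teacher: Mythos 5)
Your argument is not the one this paper uses for this statement: Theorem~\ref{continuous} is quoted from \cite{BBCF16}, whose proof goes through compactness, and the paper's own (quantitative) substitute, Theorem~\ref{BL_continuity}, is derived by passing to the Brascamp-Lieb operator and invoking the continuity of capacity (Theorem~\ref{capacity_continuity}, i.e.\ Theorem 4.5 of \cite{GGOW}), whose proof rests on approximate fixed points of the operator rather than on tracking the iterates of the scaling dynamics. What you propose is instead a rigorous version of the informal ``the algorithm is smooth, so orbits of nearby data stay close'' sketch from the introduction: the identity $\BL(\mathbf{B},\mathbf{p}) = F_t(\mathbf{B})\,\BL(\mathbf{B}^t,\mathbf{p})$ from Proposition~\ref{basis-change} is correct (and the normalizers are positive definite at every step because feasibility on $U$ forces each $B_j$ to be surjective and $\sum_j p_j B_j^\dagger B_j$ to be nonsingular, so $F_t$ is indeed continuous on $U$), the sandwich is valid once you note that after the first step the normalizations never increase the BL constant (this is exactly the AM--GM/log-concavity computation in Sections~\ref{sec:properties} and~\ref{sec:irrational}, so the final iterate, not just some intermediate one, lands in $[1,1+\epsilon]$), and using Theorem~\ref{locally-bounded} for the uniform upper bound $C_0$ on $U$ is the right move since item~1 of Theorem~\ref{analysis} presupposes rational data of bounded bit-length. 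For rational $\mathbf{p}$ this gives a clean, self-contained proof of continuity in $\mathbf{B}$ that is arguably more elementary than either the compactness argument of \cite{BBCF16} or the fixed-point route of \cite{GGOW}, at the price of importing local boundedness as a black box.

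The genuine gap is the irrational-$\mathbf{p}$ case, which the theorem as stated covers. Your choice $t = O(\log C_0/\poly(\epsilon/(nd)))$ leans on item~3 of Theorem~\ref{analysis}, whose per-step progress rate is $\poly(\epsilon/nd)$ with $d$ the common denominator of $\mathbf{p}$; this quantity is undefined when $\mathbf{p}$ is irrational, and without a uniform rate you cannot pick a single $t$ that works for every $\mathbf{B}'\in U$ (pointwise convergence of the iterates is not enough for the sandwich). This is precisely the obstruction the authors acknowledge at the end of Section~\ref{sec:cont_bl}: for irrational $\mathbf{p}$ ``it is not clear \dots how long this algorithm will have to be run before the BL constant gets sufficiently close to $1$.'' The analysis of Section~\ref{sec:irrational} does handle irrational exponents, but it only shows that the distance-to-geometric measure $g$ becomes small, and converting ``$g$ small'' back into ``$\BL$ close to $1$'' requires a robust converse that the paper does not supply in that setting. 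So either restrict your claim to rational exponent vectors, or supply a denominator-free progress estimate (or an argument via the rational vertices of $P_{\mathbf{B}}$ and log-convexity of $\log\BL$ in $\mathbf{p}$) before asserting the full statement.
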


In both theorems above, no quantitative bounds are given. This is not surprising, as the proofs of both use compactness.
On the other hand, the availability of a smooth algorithm, together with the quantitative analysis of its continuity provided in~\cite{GGOW}, imply via our reduction the following quantitative bound on the continuity\footnote{While Theorem \ref{stability} only implies continuity of $\BL(\mathbf{B,p})$ for rational $\mathbf{B}$ and $\mathbf{p}$, we describe in Section \ref{sec:cont_bl} how our proof can be extended to the case of irrational $\mathbf{B}$ and speculate how this might also extend to irrational $\mathbf{p}$.} of the BL constant (giving a multiplicative approximation).

\begin{theorem}[Theorem \ref{BL_continuity} rephrased]\label{stability}
For every parameters $b, d, \eps >0$ there exists a 
$$
\delta = \delta(b,d,\eps) \leq \expon \left( -\frac{\poly(b,d)}{\eps^3} \right)
$$ 
such that for every \textnormal{BL} data $(\mathbf{B,p})$ of size $b,d$ and every \textnormal{BL} data $(\mathbf{B',p})$ such that $\|\mathbf{B}'-\mathbf{B}\|_2 \leq \delta$, we have that if $\BL(\mathbf{B,p})$ is finite, then $\BL(\mathbf{B',p})$ is finite. Furthermore, 
$$
(1-\eps) \BL(\mathbf{B,p}) \leq  \BL(\mathbf{B',p}) \leq (1+\eps)\BL(\mathbf{B,p})
$$
\end{theorem}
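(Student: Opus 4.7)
The plan is to prove the quantitative continuity by transporting the question along the orbit of the BL scaling algorithm (Algorithm \ref{The BL scaling algorithm}), using the upper bound from Theorem \ref{thmintro:BL_ub} to control error propagation. The key conceptual point is that each normalization step $B_j \leftarrow X^{-1/2} B_j$ or $B_j \leftarrow B_j X^{-1/2}$ is a smooth map in $\mathbf{B}$ provided that $X$ stays bounded away from singular, and the upper bound on $\BL(\mathbf{B,p})$ will ensure precisely this along the entire trajectory. Combined with Proposition \ref{basis-change}, which relates BL constants of equivalent data through the determinants of the intertwining transformations, this lets us compare $\BL(\mathbf{B,p})$ and $\BL(\mathbf{B',p})$ through their respective terminal near-geometric configurations.

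Concretely, I would proceed as follows. First, run the BL scaling algorithm on $(\mathbf{B,p})$ for $T = \poly(b,d,1/\eps)$ iterations. By the lower bound and progress-per-step statements in Theorem \ref{analysis}, the iterates $\mathbf{B}^{(i)}$ eventually reach a near-geometric datum $\mathbf{\tilde B}$ with $\BL(\mathbf{\tilde B}, \mathbf{p}) \leq 1 + \eps/4$; record the composite intertwining transformations $C, C_j$ relating $\mathbf{B}$ to $\mathbf{\tilde B}$. Second, run the \emph{same} algorithm on $(\mathbf{B',p})$ for the same number of steps, obtaining its own $\mathbf{\tilde B'}$ and intertwiners $C', C_j'$. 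Third, bound the distance between the two trajectories step by step: each normalization inverts a positive definite matrix whose smallest eigenvalue is at least $\exp(-\poly(b,d))$ along the trajectory (this follows because a much larger smallest eigenvalue would contradict the upper bound from Theorem \ref{thmintro:BL_ub} on the BL constants of the iterates). Hence each step has Lipschitz constant at most $\exp(\poly(b,d))$, and composing over $T$ steps gives an overall Lipschitz constant of $\exp(T \cdot \poly(b,d)) = \exp(\poly(b,d)/\eps^3)$, where the $\eps^{-3}$ comes from the GGOW progress analysis whose per-step gain is $\poly(\eps/nd)$.

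Finally, choosing $\delta \leq \exp(-\poly(b,d)/\eps^3)$ ensures $\|\mathbf{\tilde B'} - \mathbf{\tilde B}\|_2 \leq \eps/4$ and $\|C - C'\|, \|C_j - C_j'\| \leq \eps/4$ coordinate-wise, so in particular $\mathbf{\tilde B'}$ remains close enough to geometric that $\BL(\mathbf{\tilde B'}, \mathbf{p})$ is finite and within $(1 \pm \eps/4)$ of $1$. Applying Proposition \ref{basis-change} to both chains then yields
\[
\BL(\mathbf{B',p}) = \frac{\det(C')}{\prod_j (\det C_j')^{p_j}} \cdot \BL(\mathbf{\tilde B',p}),
\qquad
\BL(\mathbf{B,p}) = \frac{\det(C)}{\prod_j (\det C_j)^{p_j}} \cdot \BL(\mathbf{\tilde B,p}),
\]
and the ratio of these expressions is $1 \pm \eps$ by the determinant continuity together with the near-geometric values of the terminal BL constants. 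Feasibility of $(\mathbf{B',p})$ is automatic since its trajectory terminates in a near-geometric datum of finite BL constant. Formally, all of this is most cleanly carried out after reducing to the operator scaling setting via the BL-to-operator reduction of Section \ref{sec:bl-to-cap}, inheriting directly the quantitative continuity of operator capacity proved in \cite{GGOW}.

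The main obstacle is precisely the per-iteration control of condition numbers: if any intermediate matrix $X$ became nearly singular, the inverse square root would amplify $\delta$ catastrophically. Overcoming this requires the explicit bound of Theorem \ref{thmintro:BL_ub} and an argument that the \emph{trajectory} (not just the endpoints) of the scaling algorithm on a feasible BL datum stays in a region of bounded condition number. A secondary subtlety is that the rational reduction to operators only works when $\mathbf{p}$ is rational, so for irrational $\mathbf{B}$ one must either approximate by rationals and take a limit (tracking errors with the same bounds), or, as suggested in the text, argue directly in the BL language; I would first handle the rational case and then extend by an approximation argument.
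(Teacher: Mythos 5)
Your final sentence is, in fact, the paper's entire proof: Theorem \ref{stability} is obtained by applying Construction \ref{cons:main_reduction} (which maps $\mathbf{B}$ to Kraus operators $A_j$ that are just zero-padded copies of the $B_i$, so $\|\mathbf{B}'-\mathbf{B}\|$ small implies the corresponding operators are close, and $\capac(T_{(\mathbf{B,p})})=1/\BL(\mathbf{B,p})^2$) and then invoking the quantitative continuity of capacity, Theorem 4.5 of \cite{GGOW}, as a black box; the $\eps^{-3}$ in the bound on $\delta$ comes directly from the $\log(1/\delta)^{1/3}$ form of that theorem. The bulk of your proposal --- coupling the two scaling trajectories step by step and comparing the terminal near-geometric data via Proposition \ref{basis-change} --- is a genuinely different and more self-contained route, essentially fleshing out the ``the algorithm is smooth'' remark from the introduction. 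Note also that the cited GGOW proof does not itself couple trajectories: as the paper's discussion of irrational points indicates, it works with approximate fixed points of $T_A$, shows they remain approximate fixed points of $T_B$, and then applies a capacity lower bound (Lemma 3.8 of \cite{GGOW}). Your approach, if completed, would buy a proof entirely in BL language that could plausibly extend to irrational $\mathbf{p}$, where the reduction is unavailable; the paper's approach buys brevity and inherits the explicit constants.

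Two gaps in the trajectory argument need attention. First, you need the quantitative statement that the terminal datum $\mathbf{\tilde B}'$, being within $\eps/4$ of geometric position, has $\BL(\mathbf{\tilde B}',\mathbf{p}) \le 1+O(\eps)$. The lower bound $\ge 1$ comes from Corollary \ref{cor:isotropy_or_projection}, but the upper bound is precisely the robust converse of Theorem \ref{geometric} (the BL analogue of GGOW's Lemma 3.8 relating $\ds(T)$ to $\capac(T)$), and it is where the real analytic work sits; it cannot be waved through as ``determinant continuity.'' Second, your condition-number control is stated backwards (``a much larger smallest eigenvalue would contradict the upper bound'' should read \emph{smaller}) and is incomplete as written: a lower bound on $\det\bigl(\sum_j p_j B_j^\dagger B_j\bigr)$ follows from Proposition \ref{basis-change} together with the facts that the BL constant is at least $1$ at normalized iterates and at most $\exp(O(b\log(bd)))$ initially, but converting this into a lower bound on the \emph{smallest eigenvalue} additionally requires an upper bound on $\|B_j\|$ along the trajectory (available because projection-normalized iterates satisfy $B_jB_j^\dagger = I_{n_j}$). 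Both gaps are fixable, but they are the substance of the argument rather than routine details.
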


We hope that  this explicit bound will be useful to making non-linear BL inequalities quantitative, and to their applications.

\subsection{Applications of Brascamp-Lieb in Computer Science, Optimization and Beyond}

Besides the applications to other areas of mathematics as we mentioned above,  the structural theory of
BL inequalities (sometimes with the connection to operator scaling) have encountered several applications in computer science and optimization. We give below a (partial) survey of these applications. 

Many diverse applications exist  even for the special case of rank-$1$ BL setting (when all the $n_i$'s are $1$), which has been studied extensively. In this case, the BL polytopes are exactly the {\em basis polytopes} \cite{Barthe2}. More specifically, if the maps $B_i$ are given by $B_i(x) = \langle v_i,x \rangle$, then the BL polytope is the convex hull of the following set: 
$$
\{1_I\,:\, I \subset [m], |I|=n, (v_i)_{i \in I} \: \text{forms a basis for $\mathbb{R}^n$}\}
$$ 
Hence Theorem \ref{feasibility-alg} for this special case is well known. Furthermore, computing the BL constant in this special case can be formulated as a convex problem and hence Theorem \ref{constant-alg} in this special case can be obtained by the ellipsoid algorithm \cite{gur-sam1, HardtM13} (\cite{gur-sam1} in fact solves a more general problem).  We list some of the applications of this setting.
\begin{itemize}
\item In \cite{HardtM13}, Hardt and Moitra provide an application of scaling the rank-$1$ BL datum to a geometric position (which they call isotropic) to the problem of {\em robust subspace recovery}. 
\item Forster \cite{Forster02} used the existence of a scaling to geometric position for rank-$1$ BL datum (sometimes called Barthe's theorem) to give the first nontrivial lower bounds for the {\em sign rank} of an explicit matrix (in particular the Hadamard matrix), and thereby prove the first nontrivial {\em unbounded error communication complexity} lower bounds for the inner product function. 
\item In~\cite{DSW_LCC14}, Dvir et al. use Barthe's theorem as part of the first super-quadratic lower bounds on the size of 3-query {\em LCCs  (Locally Correctable Codes)} over the Real numbers. 
\item Recently, Nikolov and Singh \cite{NikolovS16} use a optimization problem similar to the rank-$1$ BL constant to give a polynomial time algorithm for approximating the maximum value of the determinant of a submatrix of a psd matrix under partition constraints. 
\end{itemize}

The class of BL polytopes is a rich class of polytopes. As mentioned above, the rank-$1$ case gives rise to basis polytopes. The rank-$2$ case (when all $n_i$'s are $1$ or $2$) has been completely characterized by Valdimarsson \cite{Vald10}. An important special case is given by the {\em linear matroid intersection} polytopes which we describe next. Suppose $(v_1,\ldots,v_m)$ and $(w_1,\ldots,w_m)$ are two collection of vectors in $\mathbb{R}^n$. Then the matroid intersection polytope given by these collections is the convex hull of the following set:
$$
\{1_I \,:\, I \subset [m], |I|=n, (v_i)_{i \in I}, (w_i)_{i \in I} \: \text{both form a basis for $\mathbb{R}^n$}\}
$$
Let us define the maps $B_i : \mathbb{R}^{2n} \rightarrow \mathbb{R}^2$ in the following way: 
$$
B_i(x,y) = (\langle v_i,x\rangle, \langle w_i, y\rangle)
$$
Then the BL polytope $P_{\mathbf{B}}$ is exactly the corresponding matroid intersection polytope. In particular, the bipartite matching polytopes are a special case of rank-$2$ BL polytopes. 

We will give proofs of these statements in Section~\ref{sec:mat_int}, with the hope of demonstrating the simplicity by which such exponential size linear programs of this form capture (simple) combinatorial optimization problems.
As mentioned before, it remains a very interesting challenge to see if the general matching polytopes (or in general the linear matroid matching polytopes) are a special case of BL polytopes, for which it may still suffice to stay in the rank-2 case. We  note a peculiar discrepancy between operator scaling and BL-inequalities when encoding the matroid intersection problem. In operator scaling, this problem was one of the first special cases considered in Gurvits' paper~\cite{gurvits2004}, and he showed that one could encode it by an operator scaling problem with matrices of rank $1$. Here, when encoded as a BL polytope, we use rank-$2$ projectors, and believe it is impossible to do in rank $1$. Even though BL is a special case of operator scaling (as our main reduction shows) the relative power of these two formulations with respect to rank is not clear.

Characterizing the vertices of BL-polytopes for projectors of rank $3$ and higher remains a very interesting question, as is the case of rank-$2$ matrices in the formulation of operator scaling.

BL-inequalities (and operator scaling) have recently found applications in computational and 
combinatorial geometry in the paper of  Dvir et al.~\cite{DGOS16}. They  generalize 
the rank lower bounds for design matrices given in~\cite{BDWY12, DSW14} to the setting of 
block matrices. This result allows them to obtain sharp bounds on Sylvester-Gallai type 
theorems for arrangements of subspaces, to obtain new incidence bounds for high-dimensional
line/curve arrangements, as well as to prove structural rigidity results in the projective setting.

Beyond {\em applying}  BL-inequalities, we note that in recent years computer scientists started getting interested 
in {\em proving} them efficiently.  Efficiency here here should be taken to mean in the Sum-of-Squares (SoS) 
framework, a hierarchy of semi-definite programs that is perhaps the most powerful general algorithmic technique 
for a variety of optimization problems. The paper~\cite{BKS14}, generalizing from many particular examples on 
the power of SoS algorithms, suggests a general framework for rounding solutions in this system, in which a 
crucial element is efficient SoS proofs of inequalities (which in all known applications are special cases of 
BL-inequalities). This viewpoint has lead to many new recent SoS algorithms~\cite{BKS15, MSS16, HSSS16}, 
and naturally also lead to the question of efficiently proving (in this sense) all BL-inequalities. This was taken up by 
Lei and Sheng~\cite{LS16}, who have (in a very general setting)  SoS proofs of degree 
$\poly (d)$ for BL-inequalities that have denominator $d$ in its exponent $\mathbf{p}$-vector. This yields an algorithm for 
testing feasibility that is {\em exponential} in $d$. Our algorithms depend polynomially in $d$, and raise the 
question of relating operator scaling and SoS algorithms, and whether such a relation can help improve the 
bounds above. Needless to say, relating the power of these two optimization methods is motivated for many 
other reasons!

\paragraph{\textit{Dictionary}}

We give here a basic dictionary describing translation between central notions in the theory of Brascamp-Lieb inequalities and completely positive operators. This will be useful in reading the technical sections below.
The basic notions concerning completely positive operators will be defined in Sections \ref{sec:square} and \ref{sec:rect}. 


\begin{center}
    \begin{tabular}{| l | l |}
    \hline
    \textbf{Brascamp-Lieb inequalities} & \textbf{Completely positive operators} \\ \hline
    BL datum & Kraus operators of completely positive operators \\ \hline
    Intertwining transformations & Scaling operations \\ \hline
    Geometric BL datum & Doubly stochastic operator \\ \hline
    BL constant & Capacity \\ \hline
    \end{tabular}
\end{center}

\paragraph{\textit{Organization of the paper}}
All but the last two sections are devoted to providing background and then proving the main technical results of this paper.
Preliminaries and results from~\cite{gurvits2004,GGOW} about (square) completely positive operators, their scaling, capacity and more will be reviewed in Section \ref{sec:square}. In Section \ref{sec:rect}, we will extend these results to {\em rectangular} completely positive operators via a simple reduction. In Section \ref{sec:bl-to-cap}, we describe our main reduction, from BL data to (rectangular) completely positive operators, and prove Theorem \ref{feasibility-alg}. We also show how this reduction yields, as a corollary, a simple proof of the characterization (Theorem \ref{feasibility}) of \cite{BCCT}. In Section \ref{sec:ub_bl}, we describe an upper bound on the Brascamp-Lieb constants and prove Theorem \ref{thmintro:BL_ub}. In Section \ref{sec:compute_bl}, we show how to approximate Brascamp-Lieb constants by using the main reduction above and our operator scaling algorithm (proving Theorem \ref{constant-alg}). In Section \ref{sec:cont_bl}, we use the smoothness of the scaling algorithm to give explicit bounds for the continuity of Brascamp-Lieb constants and prove Theorem \ref{stability}. In Section~\ref{sec:mat_int} (which is independent from all others) we exemplify the power of exponentially large linear programs arising as BL-polytopes to capture certain simple combinatorial optimization problems. Section~\ref{sec:properties} contains some additional properties of the Brascamp-Lieb constant, strengthening the results in \cite{Vald11}. Section~\ref{sec:irrational} contains an analysis of the number of iterations required in Algorithm \ref{The BL scaling algorithm} to get close to geometric position. Finally, we conclude with some open problems in Section \ref{sec:open}.




\section{Square completely positive operators}\label{sec:square}

In this section, we introduce (square) completely positive operators and important facts and theorems about them which we will need \cite{gurvits2004, GGOW}. 

\noindent Given a complex matrix $A$, we will use $A^{\dagger}$ to denote the conjugate-transpose of $A$. For matrices 
with real entries this will just be $A^{T}$. $\succeq$ will be used to denote the Loewner order defined by positive semidefinite matrices. So we will write $A \succeq B$ if $A-B$ is positive semidefinite and $A \succ B$ if $A - B$ is positive definite.

\begin{definition}[\textbf{Completely Positive Operators}] An operator (or map) $T: M_n(\mathbb{C}) \to M_n(\mathbb{C})$ is called 
completely positive if there are $n \times n$ complex matrices 
$A_1, \ldots, A_m$ s.t. $T(X) = \sum_{i=1}^m A_i X A_i^{\dagger}$. The matrices $A_1,\ldots, A_m$ 
are called Kraus operators of $T$ (they need not be unique). $T$ is called completely positive trace 
preserving (cptp) if in addition, $\tr(T(X)) = \tr(X)$ for all $X$. This is equivalent to the condition $\sum_{i=1}^m A_i^{\dagger} A_i = I$.
\end{definition}

\begin{remark}
The above is actually not the usual definition of completely positive operators.  $T$ is defined to be positive if $T(X) \succeq 0$ whenever $X \succeq 0$. $T$ is completely positive if $I_n \tensor T$ is positive for all $n \ge 1$. Choi \cite{Choi} proved that an operator is completely positive iff 
it is of the form stated above.
\end{remark}

\begin{definition}[\textbf{Tensor products of operators}]\label{def:tensor} Given operators $T_1: M_{d_1}(\mathbb{C}) \to M_{d_1}(\mathbb{C})$ and $T_2: M_{d_2}(\mathbb{C}) \to M_{d_2}(\mathbb{C})$, we define their tensor product 
$T_1 \tensor T_2 : M_{d_1d_2}(\mathbb{C}) \to M_{d_1d_2}(\mathbb{C})$ in the natural way
$$
(T_1 \tensor T_2) (X \tensor Y) = T_1(X) \tensor T_2(Y)
$$
and extend by linearity to the whole of $M_{d_1d_2}(\mathbb{C})$. 
\end{definition}

\begin{definition}\label{dual,ds}
If $T(X) = \sum_{i=1}^m A_i X A_i^{\dagger}$ is a completely positive operator, we define its dual $T^*$ by
$T^*(X) = \sum_{i=1}^m A_i^{\dagger} X A_i$. If both $T$ and $T^*$ are trace preserving, namely $T(I) = T^*(I) = I$
then we call $T$ (and $T^*$) {\em doubly stochastic}.
\end{definition}

\begin{definition}[\textbf{Rank Decreasing Operators}, \cite{gurvits2004}] A completely positive operator $T$ is said to be rank-decreasing 
if there exists an $X \succeq 0$ s.t. 
$\text{rank}(T(X)) < \text{rank}(X)$.
\end{definition}

\noindent Now that we defined completely positive operators, we define their {\em capacity}, which is a very important complexity measure
of such operators suggested in~\cite{gurvits2004}. For a special subset of operators, their capacity will be related to the Brascamp-Lieb constants.

\begin{definition}[\textbf{Capacity}, \cite{gurvits2004}] The capacity of a completely positive operator $T$, denoted by $\capac(T)$, is defined as
$$\capac(T) = \text{inf} \{ \text{Det}(T(X)) : \text{$X \succ 0$, Det$(X) = 1$}\}$$
\end{definition}

\begin{remark} In the definition of capacity, we didn't specify whether to optimize over complex psd matrices or real psd matrices. However, as long as $T$ is defined by 
real Kraus operators, the value of capacity doesn't change if we optimize over complex or real psd matrices. This can be seen by the fact that an almost optimizing solution to capacity
can be obtained by a natural iterative sequence (Theorem \ref{thm:cap_scaling}).
\end{remark}

\noindent Next we define the notion of operator scaling.

\begin{definition}[\textbf{Operator Scaling}, \cite{gurvits2004}] An operator $T'$ is called an operator scaling of $T$ if there exist invertible matrices $B,C$ s.t. 
$$
T'(X) = B \cdot T\left( C \cdot X \cdot C^{\dagger}\right) \cdot B^{\dagger}
$$
Alternatively, if $A_1,\ldots,A_m$ are Kraus operators for $T$, then $B A_1 C,\ldots,BA_mC$ are Kraus operators for $T'$. 
\end{definition}

\noindent Next we define a ``distance" of an operator from being doubly stochastic. 

\begin{definition}[\textbf{Distance to double stochasticity}]
$$
\ds(T) = \tr \left[ \left(T(I) - I\right)^2 \right] + \tr \left[ \left( T^*(I)   - I\right)^2 \right]
$$
\end{definition}

\noindent Next we define two operations on operators, which enforce one of the conditions of being doubly stochastic.

\begin{definition}[\textbf{Right Normalization}] Given a completely positive operator $T$, define its {\em right normalization} $T_R$ as follows:
\begin{align*}
T_R(X) = T \left(T^*(I)^{-1/2} \cdot X \cdot  T^*(I)^{-1/2} \right)
\end{align*}

\end{definition}

\noindent Note that $T_R^*(I) = I$. This is because $T_R^*(X) = T^*(I)^{-1/2} \cdot T^*(X) \cdot T^*(I)^{-1/2}$.

\begin{definition}[\textbf{Left Normalization}] Given a completely positive operator $T$, define its {\em left normalization} $T_L$ as follows:
\begin{align*}
T_L(X) =  T(I)^{-1/2} \cdot T(X) \cdot T(I)^{-1/2}
\end{align*}

\end{definition}

\noindent Note that $T_L(I) = I$. 
\\
\\
\noindent Algorithm \ref{Gurvits_alg} was suggested by Gurvits \cite{gurvits2004} to find a doubly stochastic scaling of an operator $T$. 

\begin{Algorithm}
\textbf{Input}: Completely positive operator $T$. 

Perform right and left normalizations starting from $T_0 = T$ alternately for $t$  steps. Let $T_j$ be the operator after $j$ steps. 

\textbf{Output}: $T_t$

\caption{Algorithm $G$}
\label{Gurvits_alg}
\end{Algorithm}

\begin{theorem}[\cite{gurvits2004}]\label{thm:cap_RND} Let $T$ be a completely positive operator. Then $T$ is rank non-decreasing iff $\capac(T) > 0$. 
\end{theorem}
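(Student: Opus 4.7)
Plan. The theorem is an equivalence, and I treat the two directions separately: the easy direction, ``$T$ rank-decreasing $\Rightarrow \capac(T)=0$'', is a direct perturbation argument, whereas the converse is the substantial content and I would prove it via operator scaling.

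For the easy direction, suppose $X_0 \succeq 0$ witnesses rank-decrease, with $r := \rank(T(X_0)) < k := \rank(X_0)$. Perturb to strict positive definiteness by setting $X_\varepsilon := X_0 + \varepsilon I \succ 0$. Since $X_0$ has exactly $n-k$ zero eigenvalues, a direct count gives $\det(X_\varepsilon) = \Theta(\varepsilon^{n-k})$. For the image, $T(X_\varepsilon) = T(X_0) + \varepsilon T(I)$ is a rank-$r$ PSD matrix perturbed by $\varepsilon T(I)$, and Weyl's inequality forces at least $n-r$ of its eigenvalues to be $O(\varepsilon)$, giving $\det(T(X_\varepsilon)) = O(\varepsilon^{n-r})$. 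The ratio $\det(T(X))/\det(X)$ is scale-invariant because $T$ is linear and $\det$ is degree-$n$ homogeneous, so normalizing $X_\varepsilon$ to unit determinant yields $\det(T(\tilde X_\varepsilon)) = O(\varepsilon^{k-r}) \to 0$. Hence $\capac(T) = 0$.

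For the converse, I would exploit the fact that capacity transforms in an explicit way under the normalization steps of Algorithm $G$. A short computation from the definition of scaling and the degree-$n$ homogeneity of $\det$ gives
\[
\capac(T_R) \;=\; \frac{\capac(T)}{\det(T^*(I))}, \qquad \capac(T_L) \;=\; \frac{\capac(T)}{\det(T(I))},
\]
so $\capac(T)$ equals $\capac(T_t)$ times an explicit product of normalizing determinants along any trajectory of Algorithm $G$. The plan then has three ingredients: (i) under the rank non-decreasing hypothesis, show the iterates of Algorithm $G$ converge (subsequentially) to a doubly stochastic operator $T_\infty$; (ii) show $\capac(T_\infty) > 0$, using the trivial upper bound $\capac(T_\infty) \leq \det(T_\infty(I)) = 1$ together with a van der Waerden-type lower bound for doubly stochastic operators; (iii) show the explicit product of normalizing determinants along the trajectory stays bounded, so that positivity of $\capac(T_\infty)$ propagates back to $\capac(T)$.

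The main obstacle is step (i): convergence of the scaling iteration requires a quantitative form of the rank non-decreasing hypothesis, bounding $\det(T_j(I))$ and $\det(T_j^*(I))$ uniformly away from zero. This is the algebraic heart of the matter, and is equivalent to non-vanishing of certain polynomial invariants of the Left-Right action of $GL_n \times GL_n$ on tuples of matrices. A compactness-flavored alternative---take $X_\ell \succ 0$ with $\det(X_\ell) = 1$ and $\det(T(X_\ell)) \to 0$, rescale by the largest eigenvalue and extract a limit $Y$---reduces the problem to showing that when the limit is rank-deficient, $Y$ itself witnesses rank-decrease; but this case requires a delicate Schur-complement bookkeeping to track matched powers of the rescaling factor in the numerator and denominator. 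Either approach rests on the same underlying algebraic fact, and that is where the real work lies.
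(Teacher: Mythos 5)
The paper does not prove this statement; it is imported verbatim from~\cite{gurvits2004} (and the quantitative strengthenings it relies on come from~\cite{GGOW}), so there is no in-paper argument to compare against. Judged on its own terms, your proposal is half a proof. The easy direction (rank-decreasing $\Rightarrow \capac(T)=0$) is correct and complete: the eigenvalue count $\det(X_\varepsilon)=\Theta(\varepsilon^{n-k})$, the Weyl bound $\det(T(X_\varepsilon))=O(\varepsilon^{n-r})$, and the homogeneity of $\det\circ T$ combine exactly as you say. The transformation rules $\capac(T_R)=\capac(T)/\det(T^*(I))$ and $\capac(T_L)=\capac(T)/\det(T(I))$ are also correct.

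The converse, however, is only a plan, and the plan as stated has a structural problem beyond the missing step you flag. You propose to deduce $\capac(T)>0$ from convergence of Algorithm $G$ to a doubly stochastic limit, but the known convergence analyses of Algorithm $G$ (including the one in~\cite{GGOW}) use a \emph{positive lower bound on} $\capac(T)$ as the potential function driving the progress-per-step argument: one shows $\capac$ is at most $1$ after normalization, increases by a definite factor at each step while $\ds(T_j)$ is large, and is bounded below --- and the lower bound is the input, not the output. So step (i) is not merely ``the algebraic heart''; deriving it from scaling dynamics alone is circular unless you inject an independent lower bound on capacity (in~\cite{GGOW} this is Theorem 2.18, proved via degree bounds on invariants of the left-right action, not via the dynamics). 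Your step (iii) also needs the product of normalizing determinants to be bounded \emph{away from zero} (each factor is $\le 1$ by AM--GM after the first step, so boundedness above is automatic and useless here), and that is again equivalent to what is being proved. The compactness alternative you sketch --- extract a limit $Y$ from a capacity-minimizing sequence and show a rank-deficient $Y$ (or a space built from its kernel) witnesses rank-decrease --- is closer to a self-contained argument, but the ``Schur-complement bookkeeping'' you defer is precisely the content of the theorem: one must match the order of vanishing of $\det(T(X_\ell))$ against that of $\det(X_\ell)$ block-by-block along the eigenspaces of $Y$, and nothing in the proposal carries this out. As written, neither route closes, so the forward implication remains unproven.
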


\begin{theorem}[\cite{gurvits2004}]\label{thm:cap_scaling} Let $T$ be a completely positive operator s.t. $\capac(T) > 0$. Then for every $\eps > 0$, there exists an operator scaling $T'$ of $T$ s.t. $\ds(T') \le \eps$. Furthermore this $T'$ can be found by running Algorithm $G$ for an appropriate number of steps $t(T, \eps)$. 
\end{theorem}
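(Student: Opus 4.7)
The natural approach is to use $\capac(T)$ itself as a potential function and show it makes multiplicative progress whenever the current operator is $\eps$-far from doubly stochastic. The whole argument hinges on three facts about how $\capac$ behaves under scaling.

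\textbf{Step 1 (effect of scaling on capacity).} First I would establish the scaling identity: if $T'(X) = B\,T(CXC^\dagger)\,B^\dagger$, then by the change of variable $Y = CXC^\dagger$ and the homogeneity $\det T(\lambda Y) = \lambda^n \det T(Y)$, one gets $\capac(T') = |\det B|^2\,|\det C|^2\,\capac(T)$. Applying this to right normalization ($B=I$, $C = T^*(I)^{-1/2}$) yields
\[
\capac(T_R) \;=\; \frac{\capac(T)}{\det T^*(I)},
\]
and analogously $\capac(T_L) = \capac(T)/\det T(I)$.

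\textbf{Step 2 (monotonicity and upper bound).} Next I would observe that after any normalization step we have $\capac \le 1$. Indeed, right normalization forces $T_R^*(I) = I$, hence $\tr T_R(I) = \tr T_R^*(I) = n$, so by AM--GM on the eigenvalues of the PSD matrix $T_R(I)$,
\[
\det T_R(I) \;\le\; \Big(\tfrac{1}{n}\tr T_R(I)\Big)^{n} \;=\; 1,
\]
and plugging $X=I$ into the definition of capacity gives $\capac(T_R) \le \det T_R(I) \le 1$. The same computation applies after a left normalization. Combining this with Step 1, each normalization step can only increase capacity: after right normalization $\det T_R(I) \le 1$, so the subsequent left normalization satisfies $\capac(T_{i+1}) = \capac(T_i)/\det T_i(I) \ge \capac(T_i)$, and symmetrically after a left step.

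\textbf{Step 3 (quantitative progress — the main estimate).} This is where the real work lies. I claim that if $T_i^*(I) = I$ (we have just right-normalized) but $\tr\bigl((T_i(I)-I)^2\bigr) \ge \eps$, then the next left normalization increases capacity by a definite factor. Let $\lambda_1,\dots,\lambda_n$ be the eigenvalues of $T_i(I)$; then $\sum \lambda_k = n$ and $\sum (\lambda_k-1)^2 \ge \eps$. Using a quantitative form of AM--GM (e.g.\ $\log t \le (t-1) - c_0 (t-1)^2$ on a bounded interval, or a direct convexity argument), one gets $\det T_i(I) = \prod \lambda_k \le 1 - c\,\eps$ for an absolute constant $c>0$, and hence
\[
\capac(T_{i+1}) \;=\; \frac{\capac(T_i)}{\det T_i(I)} \;\ge\; (1 + c\,\eps)\,\capac(T_i).
\]
The symmetric statement holds after a left normalization step: $\ds(T_{i+1}) = \tr((T_{i+1}^*(I)-I)^2)$, and if this exceeds $\eps$ the next right step multiplies capacity by $(1+c\eps)$. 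I expect this AM--GM gap to be the only nontrivial inequality in the whole proof.

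\textbf{Step 4 (termination).} Finally, combine the three steps. Since $\capac(T_0) > 0$ by hypothesis (and Theorem~\ref{thm:cap_RND}) and $\capac(T_i) \le 1$ for all $i \ge 1$, the total multiplicative increase is at most $1/\capac(T_0)$. Therefore, the number of steps at which the active half of $\ds(T_i)$ exceeds $\eps$ is at most
\[
t(T,\eps) \;=\; O\!\left(\frac{\log(1/\capac(T_0))}{\eps}\right).
\]
Taking $\eps/2$ in place of $\eps$ (to control both halves of $\ds$) and running for $t(T,\eps/2)$ iterations, the pigeonhole on left/right steps produces some $T_t$ with $\ds(T_t) \le \eps$. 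Since $T_t$ is a composition of scalings of $T$, it is itself an operator scaling of $T$, proving the theorem.
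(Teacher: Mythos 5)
Your proposal is correct and is essentially the standard Gurvits capacity-as-potential argument: the scaling identity $\capac(T') = |\det B|^2|\det C|^2\capac(T)$, the upper bound $\capac \le 1$ after a normalization step via AM--GM, and the per-step multiplicative gain via a robust AM--GM when $\ds$ is large, followed by a pigeonhole on the bounded total increase. This is exactly the three-part analysis the paper attributes to \cite{gurvits2004} and itself reproduces for the BL scaling algorithm in Theorem~\ref{analysis} and Section~\ref{sec:irrational} (using the same robust AM--GM lemma, Lemma~\ref{lem:AM-GM_robust}), so there is nothing to add beyond noting that the minor bookkeeping points (that $\capac(T)>0$ forces $T(I)$ and $T^*(I)$ to be invertible so the normalizations are well defined, and that the budget should be $\log(1/\capac(T_1))$ after the first normalization rather than $\capac(T_0)$) are easily supplied.
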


\noindent The first polynomial time algorithms for checking whether $T$ is rank non-decreasing and also computing a multiplicative approximation to capacity were given by \cite{GGOW}. In the algorithms below, the input will be a completely positive operator $T$ which is given in terms of its Kraus operators $A_1,\ldots, A_m$. We will assume that the entries of the matrices $A_i$ can be described using $b$ bits.

\begin{theorem}[\cite{GGOW}] Let $T: M_n(\mathbb{C}) \rightarrow M_n(\mathbb{C})$ be a completely positive operator s.t. the bit-complexity of the description of $T$ $($in terms of its Kraus operators$)$ is $b$. Then there is a $\poly(n,b)$ time $($deterministic$)$ algorithm to test whether $T$ is rank non-decreasing. There is also a $\poly(n,b,1/\eps)$ time algorithm to compute a $(1+\eps)$-multiplicative approximation to $\capac(T)$. Furthermore, the algorithm outputs a scaling $T'$ of $T$ which is almost doubly stochastic i.e. $\capac(T') \ge 1-\eps$. 
\end{theorem}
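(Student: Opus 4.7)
The plan is to analyze Algorithm $G$ using $\capac(T)$ itself as a potential function, running it for $t = \poly(n,b,1/\eps)$ iterations, and establishing three key facts that together deliver the time bound.

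First, I would sandwich the capacity: for a completely positive $T$ with integer Kraus operator entries of bit length $b$, either $T$ is not rank non-decreasing (so $\capac(T)=0$ by Theorem \ref{thm:cap_RND}), or $\exp(-\poly(n,b)) \le \capac(T) \le 1$ holds (after at most one normalization step). The lower bound is the hard part: via the reduction to the null-cone of the left-right action of $SL_n \times SL_n$ on $m$-tuples of matrices, $T$ is rank non-decreasing iff there is an invariant polynomial not vanishing on $(A_1,\ldots,A_m)$. Known degree bounds of $\poly(n)$ for generators of this invariant ring together with a standard integer root bound yield the $\exp(-\poly(n,b))$ lower bound on the value of such an invariant, and hence on $\capac(T)$.

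Second, I would establish progress per step. Suppose $T_{i-1}$ is right-normalized, i.e., $T_{i-1}^*(I)=I$, and $T_i$ is obtained by left normalization. A direct computation using the scaling formula gives
\[
\capac(T_i) \;=\; \capac(T_{i-1})/\det(T_{i-1}(I)).
\]
Since $T_{i-1}^*(I)=I$ forces $\tr(T_{i-1}(I)) = n$, AM-GM gives $\det(T_{i-1}(I)) \le 1$, with a robust quantitative version yielding $\det(T_{i-1}(I))^{-1} \ge 1 + \Omega(\ds(T_{i-1})/n)$. Combined with the upper bound $\capac(T_i) \le 1$, a telescoping argument shows that whenever $\capac(T) > 0$, after $t = \poly(n,b)/\eps$ steps some iterate satisfies $\ds(T_t) \le \eps$.

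Third, I would turn these facts into the required algorithms. For the rank non-decreasing test, run $G$ for the prescribed $t$ steps and declare ``rank non-decreasing'' iff $\ds(T_t)$ falls below a fixed constant threshold; infeasibility implies no such decrease is possible within the budget, while feasibility guarantees it. For the $(1+\eps)$-multiplicative approximation of $\capac(T)$, maintain the running product of the scaling determinants $\det(B_j), \det(C_j)$ used in each normalization step, which relates $\capac(T_t)$ to $\capac(T)$ exactly via the scaling formula. Since $T_t$ is $\eps$-close to doubly stochastic, a short perturbation argument (plugging $X=I$ into the definition of capacity, and using AM-GM on the spectrum of $T_t(I) \approx I$) shows $\capac(T_t) \in [1-\eps, 1]$, which gives the desired approximation and also certifies the final scaling $T'=T_t$ as almost doubly stochastic.

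The main obstacle is the capacity lower bound in step one, since it imports non-trivial input from commutative invariant theory (degree bounds for generators of the invariant ring of the left-right action) rather than from any analytic argument intrinsic to the alternating procedure itself. Every other ingredient reduces ultimately to robust versions of the AM-GM inequality applied to eigenvalues of $T_i(I)$ or $T_i^*(I)$, while the lower bound is what converts the per-step multiplicative progress into an actual polynomial iteration count.
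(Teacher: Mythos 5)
The paper does not prove this statement itself --- it imports it from \cite{GGOW} --- but your reconstruction follows the same architecture as that source and as the sketch the paper gives around Theorem 1.9 and Section 9: an $\exp(-\poly(n,b))$ lower bound on $\capac(T)$ from degree bounds for generators of the invariant ring of the left--right action, the identity $\capac(T_L)=\capac(T)/\Det(T(I))$ together with a robust AM--GM inequality for per-step progress, and the upper bound $\capac\le 1$ after normalization to make the telescoping terminate. That is the right skeleton, and your closing remark that the invariant-theoretic lower bound is what converts multiplicative progress into a polynomial iteration count is exactly the point the paper emphasizes.

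Two steps as written would not go through. First, the claim that $\ds(T_t)\le\eps$ implies $\capac(T_t)\ge 1-\eps$ cannot be obtained by ``plugging $X=I$ into the definition of capacity'': since capacity is an infimum, evaluating at $X=I$ and applying AM--GM only yields the \emph{upper} bound $\capac(T_t)\le\Det(T_t(I))\le 1$. The lower bound is the quantitative form of Gurvits' theorem that doubly stochastic operators have capacity exactly $1$ (in \cite{GGOW} this is a separate lemma of the shape $\capac(T)\ge 1-O(\poly(n)\sqrt{\ds(T)})$, proved by a different argument), and it is indispensable both for certifying $\capac(T')\ge 1-\eps$ and for the soundness of your decision procedure; relatedly, the acceptance threshold on $\ds(T_t)$ must be taken of order $1/\poly(n)$ rather than a fixed constant, since the dichotomy between rank-decreasing and rank-non-decreasing operators is only at that scale. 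Second, your analysis bounds the number of iterations but not the running time: each normalization involves $T(I)^{-1/2}$, whose entries are irrational, so the iterates must be truncated to polynomially many bits and one must argue that the accumulated rounding does not destroy the potential argument. This is why the version invoked later in the paper (Theorem 6.1) carries an explicit truncation parameter; without some such rounding analysis the $\poly(n,b,1/\eps)$ \emph{time} claim is not established.
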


\noindent The above mentioned algorithm does not find a witness to the rank decreasing property. Ivanyos, Qiao and 
Subrahmanyam \cite{IQS15b} were the first ones to give an efficient algorithm that computes such a witness. Their 
algorithm, however, is quite different from the one in \cite{GGOW} and does not compute capacity.

\begin{theorem}[\cite{IQS15b}]\label{alg:IQS} Let $T: M_n(\mathbb{C}) \rightarrow M_n(\mathbb{C})$ be a completely positive operator s.t. the 
bit-complexity of the description of $T$ $($in terms of its Kraus operators$)$ is $b$. Then there is a $\poly(n,b)$ time 
$($deterministic$)$ algorithm to test whether $T$ is rank non-decreasing. Furthermore, if $T$ is rank decreasing, then the 
algorithm also outputs an $X \succeq 0$ s.t. $\text{Rank}(T(X)) < \text{Rank}(X)$.
\end{theorem}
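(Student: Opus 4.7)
The plan is to reduce the rank non-decreasing test to computing (and certifying) the \emph{non-commutative rank} of the matrix space $\mathcal{A} := \spn\{A_1,\ldots,A_m\} \subseteq M_n(\mathbb{C})$. First I would observe that for any subspace $V \subseteq \mathbb{C}^n$, choosing $X = PP^{\dagger}$ with columns of $P$ forming a basis of $V$ gives $T(X) = \sum_i A_i P P^{\dagger} A_i^{\dagger}$, so $\rk(T(X)) = \dim(\mathcal{A} V)$ where $\mathcal{A} V := \sum_i A_i V$. Conversely, any $X\succeq 0$ with $X = PP^\dagger$ yields a subspace $V = \textnormal{colspan}(P)$ of dimension $\rk(X)$. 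Thus $T$ is rank-decreasing if and only if there is a ``shrinking subspace'' $V$ with $\dim(\mathcal{A} V) < \dim V$, and by the Fortin--Reutenauer characterization this is equivalent to $\ncrk(\mathcal{A}) < n$. So both halves of the theorem reduce to: deterministically decide whether $\ncrk(\mathcal{A}) = n$, and when it is smaller, produce an explicit shrinking subspace $V$ (from which the witness $X = PP^\dagger$ is immediate).

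Second, I would use a \emph{blow-up} to convert this non-commutative question into a commutative one. For a parameter $d$, define $\mathcal{A}^{(d)} := \{\sum_i A_i \tensor B_i : B_i \in M_d(\mathbb{C})\} \subseteq M_{nd}(\mathbb{C})$. A fundamental result (using degree bounds on generators of the ring of semi-invariants for the left-right action on tuples of matrices, as established by Derksen--Makam~\cite{DM15}) says that once $d$ is at least polynomial in $n$, one has $\ncrk(\mathcal{A}) = n$ if and only if some element of $\mathcal{A}^{(d)}$ has full commutative rank $nd$. Equivalently, if $\ncrk(\mathcal{A}) < n$ then \emph{every} matrix in $\mathcal{A}^{(d)}$ is singular. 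This reduces the feasibility question to a commutative-rank question over a polynomially-larger space, which can in principle be attacked by linear algebra.

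Third, to avoid randomness (one would naively plug a random $B_i$ into $\mathcal{A}^{(d)}$), I would use a ``second Wong sequence'' to deterministically construct the witness subspace: starting from $V_0 = \mathbb{C}^{nd}$ (or a well-chosen seed subspace), iteratively replace $V_t$ by a subspace computed from the kernels of matrices in $\mathcal{A}^{(d)}$ restricted to $V_t$. Each iteration strictly decreases $\dim V_t - \dim(\mathcal{A}^{(d)} V_t)$ if the sequence has not yet stabilized, and the stabilized subspace provides a shrinking subspace for $\mathcal{A}^{(d)}$, which by the blow-up bound ``unravels'' to a shrinking subspace for the original $\mathcal{A}$. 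Bounding the number of iterations requires the polynomial degree bound from invariant theory again.

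The main obstacle is the polynomial bound on the blow-up dimension $d$: without it, the Wong sequence might correctly terminate but only certify non-shrinking over a tiny subfamily of $\mathcal{A}^{(d)}$, not over the whole matrix space. Once one imports the Derksen--Makam degree bound, the entire pipeline runs in $\poly(n,b)$ time (each arithmetic step on the blown-up space of dimension $nd$ is polynomial, and there are polynomially many Wong iterations). The final output is either a certificate that $\ncrk(\mathcal{A}) = n$ (i.e.\ $T$ is rank non-decreasing) or a subspace $V$ together with the explicit witness $X = PP^{\dagger}\succeq 0$ satisfying $\rk(T(X)) < \rk(X)$.
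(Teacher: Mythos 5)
This theorem is not proved in the paper at all: it is imported verbatim as a black box from \cite{IQS15b}, so there is no internal proof to compare against. Your outline is, however, a broadly faithful summary of how the cited work actually proceeds: the translation of rank-decreasingness into the existence of a shrunk subspace (via $\rk(T(PP^\dagger)) = \dim(\sum_i A_i V)$ for $V$ the column span of $P$), the Fortin--Reutenauer identification of this with $\ncrk < n$, the blow-up $\mathcal{A}^{(d)}$ with the Derksen--Makam-type bound that a polynomially large (in fact linear) $d$ already makes commutative and non-commutative rank agree up to the factor $d$, and the second Wong sequence to extract a witness deterministically. One imprecision worth flagging: the second Wong sequence is not seeded at $V_0 = \mathbb{C}^{nd}$; it is seeded from a specific element $B \in \mathcal{A}^{(d)}$ of maximal known commutative rank, and the iteration $V_{t+1} = B^{-1}\bigl(\mathcal{A}^{(d)} V_t\bigr)$ either certifies that $B$'s rank can be increased (restart with a better $B$) or stabilizes at a shrunk subspace for $\mathcal{A}^{(d)}$; the ``unravelling'' of that subspace to one for $\mathcal{A}$ itself is a separate constructive step in \cite{IQS15b} rather than an immediate consequence of the degree bound. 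None of this affects correctness of your high-level account, but the paper itself supplies none of these details and simply invokes the result.
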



\section{Rectangular operators}\label{sec:rect}

In this section we extend the definitions and main results of the theory of square operators,
presented in~\cite{gurvits2004, GGOW}, to rectangular operators, which we will define below. In particular, at the end of the section (Corollaries \ref{cor:GGOW} and \ref{cor:IQS_rect}), we make explicit what the operator scaling algorithms for square operators imply for rectangular operators.


We define a rectangular operator as any linear map $T : \cM_{n_1}(\C) \to \cM_{n_2}(\C)$. 
Choi's characterization~\cite{Choi} extends to rectangular operators as well. And this is
how we define completely positive rectangular operators.

\begin{definition}[\textbf{Rectangular Completely Positive Operators}] An operator (or map) $T: M_{n_1}(\mathbb{C}) \to M_{n_2}(\mathbb{C})$ is called 
completely positive if there are $n_2 \times n_1$ complex matrices 
$A_1, \ldots, A_m$ s.t. $T(X) = \sum_{i=1}^m A_i X A_i^{\dagger}$. The matrices $A_1,\ldots, A_m$ 
are called Kraus operators of $T$ (and they are not unique). $T$ is called completely positive trace 
preserving (cptp) if in addition, $\tr(T(X)) = \tr(X)$ for all $X$. This is equivalent to the condition $T^*(I) = \sum_{i=1}^m A_i^{\dagger} A_i = I$.
\end{definition}

The following is a natural extension of the definition of doubly stochastic operators to the rectangular case.

\begin{definition}\label{dual,ds}
If $T(X) = \sum_{i=1}^m A_i X A_i^{\dagger}$ is a completely positive operator, we define its dual $T^*$ by
$T^*(X) = \sum_{i=1}^m A_i^{\dagger} X A_i$. We say $T$ is doubly stochastic if 
$T \left(\frac{n_2}{n_1} I_{n_1} \right) = I_{n_2}$ and $T$ is trace preserving. 
This is the same as 
$$T \left(\frac{n_2}{n_1} I_{n_1} \right) = I_{n_2} \ \text{ and} \ 
T^*(I_{n_2}) = I_{n_1}. $$ 
\end{definition}

\begin{remark} Note that there is an asymmetry in the above definition w.r.t to $T$ and $T^*$. This is an arbitrary convention on our part just to ensure that doubly stochastic operators are trace preserving. Not much in the theory would change if one defined doubly stochastic in a different way.
\end{remark}

The following is a natural extension of the rank decreasing property to the rectangular case. For a matrix $X$ of dimension $n$ we defined its {\em fractional-rank} as $\text{Rank}(X)/n$. 

\begin{definition}[\textbf{Fractional-Rank Decreasing Operators}] A completely positive operator $T$ is said to be fractional-rank decreasing 
if there exists an $X \succeq 0$ s.t. 
$\frac{\text{rank}(T(X))}{n_2} < \frac{\text{rank}(X)}{n_1}$.
\end{definition}

The following is a natural extension of capacity to the rectangular case.

\begin{definition}[\textbf{Capacity}] The capacity of a completely positive operator $T: M_{n_1}(\mathbb{C}) \to M_{n_2}(\mathbb{C})$, denoted by $\capac(T)$, is defined as
$$
\capac(T) = 
\inf \left\{ \frac{\Det \left( \frac{n_2}{n_1} T(X) \right)}{\Det(X)^{\frac{n_2}{n_1}}} : X \succ 0 \right\} 
= \inf \left\{ \Det \left(\frac{n_2}{n_1} T(X) \right) : X \succ 0, \Det(X) = 1\right\} 
$$
\end{definition}

\begin{definition}[\textbf{Operator Scaling}] An operator $T'$ is called an operator scaling of $T$ if there exist invertible matrices $B,C$ (which are $n_2 \times n_2$ and $n_1 \times n_1$ respectively) s.t. 
$$
T'(X) = B \cdot T\left( C \cdot X \cdot C^{\dagger}\right) \cdot B^{\dagger}
$$
Alternatively, if $A_1,\ldots,A_m$ are Kraus operators for $T$, then $B A_1 C,\ldots,BA_mC$ are Kraus operators for $T'$. 
\end{definition}

\begin{definition}[\textbf{Distance to double stochasticity}]
$$
\ds(T) = \tr \left[ \left( T \left(\frac{n_2}{n_1} I_{n_1} \right) - I_{n_2} \right)^2\right] + 
\tr \left[ \left( T^*(I_{n_2}) - I_{n_1} \right)^2\right]
$$
\end{definition}

The next two operations are natural extensions of the right and left normalizations to the case of rectangular operators. 

\begin{definition}[\textbf{Right Normalization}] Given a completely positive operator $T: M_{n_1}(\mathbb{C}) \to M_{n_2}(\mathbb{C})$, define its {\em right normalization} $T_R$ as follows:
\begin{align*}
T_R(X) = T \left(T^*(I_{n_2})^{-1/2} \cdot X \cdot  T^*(I_{n_2})^{-1/2} \right)
\end{align*}

\end{definition}

\noindent Note that $T_R^*(I_{n_2}) = I_{n_1}$. 

\begin{definition}[\textbf{Left Normalization}] Given a completely positive operator $T: M_{n_1}(\mathbb{C}) \to M_{n_2}(\mathbb{C})$, define its {\em left normalization} $T_L$ as follows:
\begin{align*}
T_L(X) =  \frac{n_1}{n_2} \cdot T \left(I_{n_1} \right)^{-1/2} \cdot T(X) \cdot T \left(I_{n_1}\right)^{-1/2}
\end{align*}

\end{definition}

\noindent Note that $T_L \left(\frac{n_2}{n_1} I_{n_1} \right) =  I_{n_2}$. 
\\
\\
\noindent Algorithm \ref{Gurvits_alg_rect} is a natural extension of Algorithm $G$ to the case of rectangular operators.

\begin{Algorithm}
\textbf{Input}: Completely positive operator $T : M_{n_1}(\mathbb{C}) \rightarrow M_{n_2}(\mathbb{C})$. 

Perform right and left normalizations starting from $T_0 = T$ alternately for $t$  steps. Let $T_j$ be the operator after $j$ steps. 

\textbf{Output}: $T_t$

\caption{Algorithm $G$ for rectangular operators}
\label{Gurvits_alg_rect}
\end{Algorithm}

We will now see how from a rectangular operator $T: M_{n_1} (\mathbb{C}) \to M_{n_2}(\mathbb{C})$, 
we can define a square operator $\widetilde{T} : M_{n_1 n_2}(\mathbb{C}) \to M_{n_1 n_2}(\mathbb{C})$ 
that captures most of the properties of $T$ (we can also embed in dimension $\text{lcm}(n_1,n_2)$ but 
since it only saves a quadratic factor, we ignore this to simplify notation). $\widetilde{T}$ is intended to act on 
block diagonal matrices (the off-diagonal blocks are ignored).  

\begin{construction}\label{con:squaring-operator}
	Let $T: M_{n_1} (\mathbb{C}) \to M_{n_2}(\mathbb{C})$ be an operator. 
	Given an $n_1 n_2 \times n_1 n_2$ psd matrix $X$, view it as a $n_2 \times n_2$ block matrix, where 
	each block $X_{i,j}$ is $n_1 \times n_1$. Define $\widetilde{T}(X)$ as $n_1 \times n_1$ block diagonal 
	matrix where each diagonal block is the same $n_2 \times n_2$ matrix: 
	$\frac{1}{n_1} \sum_{i=1}^{n_2} T(X_{i,i})$. In other words, we have that
	$$ \widetilde{T}(X) = I_{n_1} \otimes \left(\frac{1}{n_1} \sum_{i=1}^{n_2} T(X_{i,i}) \right)  $$
	In terms of Kraus operators, if $A_1,\ldots,A_m$ are Kraus operators for $T$, then a set of Kraus operators
	for $\widetilde{T}$ are given by the set $\left\{\frac{1}{n_1} E_{i,j} \tensor A_k\right\}_{i=1,j=1,k=1}^{n_1, n_2, m}$. 
	Here $E_{i,j}$ is the elementary matrix with a $1$ in position $(i,j)$ and $0$'s everywhere else. 
\end{construction}

The next lemma shows some useful relations between the operators $T$ and $\widetilde{T}.$

\begin{lemma}\label{lem:capacity-expansion} 
If $T: M_{n_1} (\mathbb{C}) \to M_{n_2}(\mathbb{C})$ is an operator and $\widetilde{T}$ is the
operator obtained from Construction~\ref{con:squaring-operator}, the following hold:
\begin{enumerate}
\item $\capac(\widetilde{T}) = \capac(T)^{n_1}$.
\item $T$ is fractional-rank non-decreasing iff $\widetilde{T}$ is rank non-decreasing.
\end{enumerate}
\end{lemma}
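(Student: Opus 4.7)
I would prove the two parts independently by direct computation, in each case using the substitution $Y = I_{n_2}\otimes X$ to reduce the ``easy'' direction (upper bound on $\capac(\widetilde{T})$, or a rank-decrease witness for $\widetilde{T}$) to a statement about $T$, and using a partial trace $X = \sum_i Y_{i,i}$ to reduce the ``hard'' direction (lower bound on $\capac(\widetilde{T})$, or a fractional-rank-decrease witness for $T$) to a statement about $\widetilde{T}$. The only tools beyond the definitions are concavity of $\log\Det$, Fischer's determinantal inequality, and a standard partial-trace rank bound.

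For Part 1, I would handle the two inequalities separately. For the upper bound $\capac(\widetilde{T}) \le \capac(T)^{n_1}$, plug in $Y = I_{n_2}\otimes X$ for arbitrary $X \succ 0$ with $\Det(X) = 1$: then $\Det(Y) = 1$, $\widetilde{T}(Y) = I_{n_1} \otimes \tfrac{n_2}{n_1}T(X)$, and $\Det(\widetilde{T}(Y)) = \Det(\tfrac{n_2}{n_1}T(X))^{n_1}$; infimizing over $X$ yields the bound. For the reverse inequality, fix $Y \succ 0$ with $\Det(Y) = 1$ and observe that
$$
\Det(\widetilde{T}(Y))^{1/n_1} \;=\; \Det\!\left(\tfrac{1}{n_2}\sum_{i=1}^{n_2}\tfrac{n_2}{n_1}T(Y_{i,i})\right) \;\ge\; \prod_{i=1}^{n_2}\Det\!\left(\tfrac{n_2}{n_1}T(Y_{i,i})\right)^{1/n_2},
$$
where the inequality is concavity of $\log\Det$ on positive definite matrices. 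Each diagonal block $Y_{i,i}$ is positive definite (since $Y\succ 0$), so the definition of capacity gives $\Det(\tfrac{n_2}{n_1}T(Y_{i,i})) \ge \capac(T)\cdot\Det(Y_{i,i})^{n_2/n_1}$. Multiplying these factors and then applying Fischer's inequality $\Det(Y)\le \prod_i \Det(Y_{i,i})$ yields $\Det(\widetilde{T}(Y)) \ge \capac(T)^{n_1}$, and taking an infimum over $Y$ completes the proof.

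For Part 2, the $(\Rightarrow)$ direction reuses the lift $Y = I_{n_2}\otimes X$: if $X\succeq 0$ satisfies $\rank(T(X))/n_2 < \rank(X)/n_1$, then $\rank(Y) = n_2\rank(X)$ while $\rank(\widetilde{T}(Y)) = n_1\rank(T(X))$, giving a rank-decreasing witness for $\widetilde{T}$. For $(\Leftarrow)$, let $Y\succeq 0$ witness rank decrease for $\widetilde{T}$, and set $X = \sum_{i=1}^{n_2}Y_{i,i}$, the partial trace of $Y$ over the $n_2$-factor. Linearity of $T$ gives $T(X) = \sum_i T(Y_{i,i})$ and hence $\rank(T(X)) = \rank(\widetilde{T}(Y))/n_1$. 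The crux is the partial-trace rank bound $\rank(X)\ge \rank(Y)/n_2$: letting $P$ be the orthogonal projection onto the range of $X$, the PSD matrix $((I-P)\otimes I)\,Y\,((I-P)\otimes I)$ has partial trace $(I-P)\,X\,(I-P)=0$, so it must itself vanish, forcing the range of $Y$ to sit inside the range of $P\otimes I$, which has dimension $n_2\rank(X)$. Combining, $n_2\rank(X)\ge \rank(Y) > n_1\rank(T(X))$, so $X$ is a fractional-rank-decreasing witness for $T$.

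The only step I expect to require real care is the partial-trace rank bound used in the reverse direction of Part 2; the determinantal inequalities in Part 1 (concavity of $\log\Det$ and Fischer) are standard, and both the upper bound of Part 1 and the forward direction of Part 2 reduce to the same easy substitution $Y = I_{n_2}\otimes X$.
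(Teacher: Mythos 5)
Your proof is correct, and its overall architecture matches the paper's: the same lift $Y = I_{n_2}\otimes X$ for the easy directions, the same witness $\sum_i Y_{i,i}$ for the hard direction of Part 2, and the same two determinantal inequalities (Fischer/generalized Hadamard and log-concavity of $\Det$) for Part 1 -- you merely package Part 1 as two separate inequalities where the paper writes a chain of equalities obtained by progressively relaxing the constraint set. The one place where you genuinely diverge is the rank bound $\rank(Y) \le n_2 \rank\bigl(\sum_i Y_{i,i}\bigr)$: the paper derives it from the chain $\Rk(Y) \le \sum_i \Rk(Y_{i,i}) \le n_2 \max_i \Rk(Y_{i,i}) \le n_2 \Rk\bigl(\sum_i Y_{i,i}\bigr)$, proving the first step by a kernel argument for $2\times 2$ PSD block matrices, whereas you prove the bound in one shot by showing the support of a PSD matrix is contained in $\C^{n_2}\otimes \mathrm{supp}\bigl(\sum_i Y_{i,i}\bigr)$ via the compression $(I\otimes(I-P))Y(I\otimes(I-P))$ having vanishing partial trace. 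Your argument is slightly slicker and is the standard quantum-information fact about supports under partial trace; the paper's is more elementary and self-contained. Both are fine, and you correctly identified this as the only step needing care.
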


\begin{proof}
\begin{align*}
\capac(\widetilde{T}) &= \inf \left\{ \Det \left(\widetilde{T}(X) \right) : X \succ 0, \Det(X) = 1 \right\} \\
&= \inf \left\{ \Det \left( \frac{1}{n_1} \sum_{i=1}^{n_2} T(X_{i,i}) \right)^{n_1} : X \succ 0, \Det(X) = 1 \right\} \\
&= \inf \left\{ \Det \left( \frac{n_2}{n_1} \cdot \frac{1}{n_2} \sum_{i=1}^{n_2} T(X_{i,i}) \right)^{n_1} : X_{i,i} \succ 0 \: \text{for all $i$}, \prod_{i=1}^{n_2} \Det(X_{i,i}) = 1 \right\} \\
&= \inf \left\{ \Det \left( \frac{n_2}{n_1} T(Y) \right)^{n_1} : Y \succ 0, \Det(Y) = 1 \right\} \\
&= \capac(T)^{n_1}
\end{align*}
The second equality follows from a generalization of Hadamard's inequality, determinant of a block diagonal psd matrix is smaller than the product of determinants of its diagonal blocks: 
$$
\Det(X) \le \prod_{i=1}^{n_2} \Det(X_{i,i})
$$
The third equality follows from replacing $\frac{1}{n_2} \sum_{i=1}^{n_2} X_{i,i}$ by $Y$ and using the log-concavity of determinant:
$$
\Det(Y) = \Det \left( \frac{1}{n_2} \sum_{i=1}^n X_{i,i} \right) \ge \prod_{i=1}^{n_2} \Det(X_{i,i})^{1/n_2} = 1
$$ 
Now let us move on to the second part of the lemma. Suppose $T$ is fractional-rank decreasing. So there exists $Y \succeq 0$ s.t. 
$$
\frac{\text{Rank}(T(Y))}{n_2} < \frac{\text{Rank}(Y)}{n_1}
$$
Let $X$ be the $n_2 \times n_2$ block diagonal matrix, all of whose diagonal blocks are $Y$. Then $\widetilde{T}(X)$ is a $n_1 \times n_1$ block diagonal matrix all of whose diagonal blocks are $\frac{n_2}{n_1} T(Y)$. Hence 
$$
\text{Rank}(\widetilde{T}(X)) = n_1 \Rk(T(Y)) < n_2 \Rk(Y) = \Rk(X)
$$
and thus $\widetilde{T}$ is rank decreasing. The other direction is more interesting. Suppose $\widetilde{T}$ is rank decreasing and it decreases the rank of $X$. Then we will prove that $T$ is fractional-rank decreasing and $\sum_{i=1}^{n_2} X_{i,i}$ is a witness i.e.
$$
\frac{\Rk \left(T \left(\sum_{i=1}^{n_2} X_{i,i}\right) \right)}{n_2} < \frac{\Rk \left( \sum_{i=1}^{n_2} X_{i,i}\right)}{n_1}
$$
Now 
\begin{align*}
n_1 \cdot \Rk \left(T \left(\sum_{i=1}^{n_2} X_{i,i}\right) \right) &= \Rk \left( \widetilde{T}(X) \right) \\
&< \Rk(X)
\end{align*}
So we would be done if we can prove that 
$$
\Rk(X) \le n_2 \cdot \Rk \left( \sum_{i=1}^{n_2} X_{i,i}\right)
$$
This follows from the following chain of inequalities:
$$
\Rk(X) \le \sum_{i=1}^{n_2} \Rk(X_{i,i}) \le n_2 \cdot \max_{\text{$i=1$ to $n_2$}} \Rk(X_{i,i}) \le n_2 \cdot \Rk \left(\sum_{i=1}^{n_2} X_{i,i} \right)
$$
The last inequality is true because for psd matrices $Y_1,Y_2$, $\Rk(Y_1 + Y_2) \ge \Rk(Y_1)$ and also the fact that $X_{i,i}$ are psd matrices since $X$ is psd. The second inequality is obvious. The first inequality follows from the following claim:
\begin{claim}
Let 
\[
X=
  \begin{bmatrix}
    X_{1,1} & X_{1,2}  \\
    X_{2,1} & X_{2,2} 
  \end{bmatrix}
\]
be a psd matrix s.t. $X_{1,1}$ and $X_{2,2}$ are square matrices of possibly different dimension. Then $\Rk(X) \le \Rk(X_{1,1}) + \Rk(X_{2,2})$. 
\end{claim}

\begin{proof}
We will prove that $\text{dim}(\text{ker}(X_{1,1})) + \text{dim}(\text{ker}(X_{2,2})) \le \text{dim}(\text{ker}(X))$ from which the claim easily follows. Let $v \in \text{ker}(X_{1,1})$. Then look at the vector 
\[v' =\begin{bmatrix}
         v \\
         0
        \end{bmatrix}
\]
Since $X_{1,1} v = 0$, we get that $v'^{\dagger} X v' = 0$. Since $X$ is psd, this implies that $v' \in \text{ker}(X)$. Similarly, if  $w \in \text{ker}(X_{2,2})$, the vector 
\[w' =\begin{bmatrix}
         0 \\
         w
        \end{bmatrix}
\]
lies in $\text{ker}(X)$. The proof is finished by observing that $v'$ and $w'$ lie in orthogonal subspaces. 
\end{proof}
\end{proof}

Now we will see that with Construction \ref{con:squaring-operator}, most of the results about square operators carry over to the rectangular case.

\begin{corollary}{\label{cap>0:expanding}} Let $T : M_{n_1}(\mathbb{C}) \rightarrow M_{n_2}(\mathbb{C})$ be a completely positive operator. Then $T$ is fractional-rank non-decreasing iff $\capac(T) > 0$. 
\end{corollary}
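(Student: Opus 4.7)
The plan is to reduce the rectangular case directly to the square case using Construction \ref{con:squaring-operator} and the two properties of the associated square operator $\widetilde{T}$ established in Lemma \ref{lem:capacity-expansion}. Since the hard analytic work has already been done both in Theorem \ref{thm:cap_RND} (for square operators) and in Lemma \ref{lem:capacity-expansion} (transferring capacity and the (fractional) rank-decreasing property between $T$ and $\widetilde{T}$), the corollary should fall out as a direct chain of equivalences, with essentially no new content.

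Concretely, I would argue as follows. Given a rectangular completely positive operator $T : M_{n_1}(\mathbb{C}) \to M_{n_2}(\mathbb{C})$, form the square operator $\widetilde{T} : M_{n_1 n_2}(\mathbb{C}) \to M_{n_1 n_2}(\mathbb{C})$ from Construction \ref{con:squaring-operator}. Then apply Lemma \ref{lem:capacity-expansion}(2) to replace ``$T$ is fractional-rank non-decreasing'' with ``$\widetilde{T}$ is rank non-decreasing''; apply Theorem \ref{thm:cap_RND} to the square operator $\widetilde{T}$ to replace the latter with ``$\capac(\widetilde{T}) > 0$''; and finally apply Lemma \ref{lem:capacity-expansion}(1), namely $\capac(\widetilde{T}) = \capac(T)^{n_1}$, together with the fact that $n_1 \geq 1$, to conclude $\capac(\widetilde{T}) > 0 \iff \capac(T) > 0$.

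There is essentially no obstacle here: the only sanity check needed is that $\widetilde{T}$ is itself a valid completely positive operator so that Theorem \ref{thm:cap_RND} actually applies, which is immediate from the explicit Kraus representation $\{\frac{1}{n_1} E_{i,j} \otimes A_k\}$ given in Construction \ref{con:squaring-operator}. One should also note that the exponent $n_1$ in $\capac(\widetilde{T}) = \capac(T)^{n_1}$ is a positive integer, so positivity of one side is equivalent to positivity of the other. This finishes the proof, which will be just a two-line chain of ``iff''s.
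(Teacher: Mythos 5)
Your proposal is correct and is exactly the paper's argument: the paper proves this corollary as "immediate from Theorem~\ref{thm:cap_RND} and Lemma~\ref{lem:capacity-expansion}," which is precisely the chain of equivalences you spell out via Construction~\ref{con:squaring-operator}. No issues.
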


\begin{proof}
Immediate from Theorem \ref{thm:cap_RND} and Lemma \ref{lem:capacity-expansion}.
\end{proof}

For any completely positive operator, we have that $\capac(T) > 0$ iff $\capac(T^*) > 0$. This is proved below:

\begin{proposition}\label{prop:cap-dual} Let $T : M_{n_1}(\mathbb{C}) \rightarrow M_{n_2}(\mathbb{C})$ be a completely positive operator. Then $\capac(T)^{1/n_2} = \frac{n_2}{n_1} \capac(T^*)^{1/n_1}$.
\end{proposition}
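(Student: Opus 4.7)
The plan is to evaluate both sides in a common form using the dual characterization of the determinant via a Minkowski/AM--GM identity, and then exploit the adjoint relation $\tr(T(X)Y) = \tr(X T^*(Y))$.

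The main tool is the identity, valid for any $n \times n$ matrix $M \succ 0$,
$$ \Det(M)^{1/n} \;=\; \inf_{\substack{Y \succ 0 \\ \Det(Y) = 1}} \frac{\tr(MY)}{n}, $$
which follows from AM--GM on the eigenvalues of $MY$ (the infimum is attained at $Y = \Det(M)^{1/n}\,M^{-1}$). First I would apply this identity, with $n = n_2$, inside the definition of $\capac(T)^{1/n_2}$. Pulling the scalar $n_2/n_1$ outside the $n_2$-th root and combining it with the $1/n_2$ coming from the Minkowski identity yields a double infimum
$$ \capac(T)^{1/n_2} \;=\; \frac{1}{n_1}\,\inf_{\substack{X,Y \succ 0 \\ \Det(X)=\Det(Y)=1}} \tr\bigl(T(X)\,Y\bigr), $$
where $X$ is $n_1 \times n_1$ and $Y$ is $n_2 \times n_2$.

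Next I would do the symmetric calculation for $\capac(T^*)^{1/n_1}$. Applying the Minkowski identity with $n = n_1$ and again absorbing the $n_1/n_2$ scalar, one obtains
$$ \capac(T^*)^{1/n_1} \;=\; \frac{1}{n_2}\,\inf_{\substack{X,Y \succ 0 \\ \Det(X)=\Det(Y)=1}} \tr\bigl(T^*(Y)\,X\bigr). $$
The duality $\tr(T(X)Y) = \tr(X T^*(Y))$ shows that the two infima above are the same real number. Multiplying the second line by $n_2/n_1$ then gives $\frac{n_2}{n_1}\capac(T^*)^{1/n_1} = \capac(T)^{1/n_2}$, as claimed.

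There is no serious obstacle: both quantities collapse to the same symmetric bilinear infimum, and the only piece to verify carefully is that the scalar prefactors $n_2/n_1$ in the rectangular definition of capacity combine with the $1/n_2$ (respectively $1/n_1$) from Minkowski's identity exactly so as to produce a common factor $1/n_1$ (respectively $1/n_2$), which forces the asymmetric exponents $1/n_2$ and $1/n_1$ in the statement. I would also remark that the infima are indeed finite and achieved in the limit by the same rescaling argument used in the earlier remark following the definition of capacity, so there is no issue about exchanging the two infima.
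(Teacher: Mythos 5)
Your proof is correct, and it organizes the argument differently from the paper. The paper writes the ratio $\capac(T)^{1/n_2}/\capac(T^*)^{1/n_1}$ as an inf--sup, substitutes the specific witness $Y = T(X)^{-1}$, applies AM--GM to bound $\Det^{1/n}$ by the normalized trace, and then obtains the reverse inequality by running the same argument with $T$ and $T^*$ interchanged; equality follows from the two one-sided bounds. You instead invoke the variational identity $\Det(M)^{1/n} = \inf\{\tr(MY)/n : Y\succ 0,\ \Det(Y)=1\}$ to collapse \emph{both} capacities into the single symmetric double infimum $\inf_{X,Y}\tr(T(X)Y)$, so that the adjoint relation $\tr(T(X)Y)=\tr(XT^*(Y))$ gives the identity in one stroke, with no witness substitution and no two-sided argument. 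The underlying ingredients (AM--GM on eigenvalues and the trace duality) are the same, but your route is cleaner and makes the exact constant $n_2/n_1$ transparent as pure bookkeeping of scalar prefactors. Two small points: (i) your stated identity is for $M \succ 0$, but $T(X)$ may be merely positive semidefinite and singular for some $X \succ 0$; the identity does extend to singular psd $M$ (both sides are then $0$, as one sees by concentrating the mass of $Y$ on the kernel of $M$), and you should say so, since otherwise the degenerate case $\capac(T)=0$ is not covered. (ii) Your closing worry about ``exchanging the two infima'' is unnecessary --- iterated infima over independent variables always equal the joint infimum, so no limiting or attainment argument is needed there.
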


\begin{proof}
$$
\capac(T)^{1/n_2} = \text{inf}_{X \succ 0} \left\{ \frac{\Det\left( \frac{n_2}{n_1} T(X) \right)^{1/n_2}}{\Det(X)^{1/n_1}} \right\}
$$
and 
$$
\frac{1}{\capac(T^*)^{1/n_1}} = \text{sup}_{Y \succ 0} \left\{ \frac{\Det(Y)^{1/n_2}}{\Det\left( \frac{n_1}{n_2} T^*(Y) \right)^{1/n_1}} \right\}
$$
Thus
\begin{align*}
\frac{\capac(T)^{1/n_2}}{\capac(T^*)^{1/n_1}} &= \text{inf}_{X \succ 0} \: \text{sup}_{Y \succ 0} \left\{ \frac{\Det \left( \frac{n_2}{n_1} T(X)\cdot Y \right)^{1/n_2}}{\Det \left( \frac{n_1}{n_2}X \cdot T^*(Y) \right)^{1/n_1}} \right\} \\
&\ge \text{inf}_{X \succ 0} \left\{ \frac{\Det \left( \frac{n_2}{n_1} I_{n_2} \right)^{1/n_2}}{\Det \left( \frac{n_1}{n_2} X \cdot T^* \left( T(X)^{-1}\right) \right)^{1/n_1}} \right\} \\
&\ge \frac{n_2}{n_1} \cdot \text{inf}_{X \succ 0} \left\{  \frac{n_1}{\tr \left[  \frac{n_1}{n_2} X \cdot T^* \left( T(X)^{-1}\right)\right]}  \right\} \\
&= \frac{n_2}{n_1} \cdot \text{inf}_{X \succ 0} \left\{  \frac{n_1}{\tr \left[  \frac{n_1}{n_2} T(X) \cdot  T(X)^{-1} \right]}  \right\} \\
&= \frac{n_2}{n_1}
\end{align*}
The first inequality follows by taking $Y = T(X)^{-1}$. Second inequality follows from AM-GM. Second equality follows from the fact that $\tr[A \cdot T^*(B)] = \tr[T(A) \cdot B]$ for all matrices $A,B$ of appropriate sizes. We have proved that 
$$
\frac{\capac(T)^{1/n_2}}{\capac(T^*)^{1/n_1}} \ge \frac{n_2}{n_1}
$$
By applying the same argument to $T^*$ instead of $T$, we get that
$$
\frac{\capac(T^*)^{1/n_1}}{\capac(T)^{1/n_2}} \ge \frac{n_1}{n_2}
$$
Combining these completes the proof. 
\end{proof}

Thus, we can deduce the following corollary:

\begin{corollary}\label{cor:dual-expanding}
	For any completely positive operator $T : M_{n_1}(\mathbb{C}) \rightarrow M_{n_2}(\mathbb{C})$,
	$\capac(T) > 0$ iff $T^*$ is fractional-rank non-decreasing.
\end{corollary}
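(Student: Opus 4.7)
The proof should be essentially immediate by chaining together Proposition~\ref{prop:cap-dual} and Corollary~\ref{cap>0:expanding}. The plan is as follows.

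First, I would observe that Proposition~\ref{prop:cap-dual} gives the identity $\capac(T)^{1/n_2} = \frac{n_2}{n_1}\capac(T^*)^{1/n_1}$, so in particular $\capac(T) > 0$ if and only if $\capac(T^*) > 0$, since the constant $n_2/n_1$ is strictly positive.

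Next, I would note that $T^*: M_{n_2}(\mathbb{C}) \to M_{n_1}(\mathbb{C})$ is itself a completely positive operator (its Kraus operators are $A_1^{\dagger}, \ldots, A_m^{\dagger}$ when $A_1, \ldots, A_m$ are Kraus operators for $T$). Hence Corollary~\ref{cap>0:expanding} applies directly to $T^*$, yielding that $T^*$ is fractional-rank non-decreasing if and only if $\capac(T^*) > 0$.

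Combining these two equivalences gives the desired statement: $\capac(T) > 0 \iff \capac(T^*) > 0 \iff T^*$ is fractional-rank non-decreasing. I do not anticipate any real obstacle here; the only thing worth double-checking is that Corollary~\ref{cap>0:expanding} is stated for arbitrary completely positive operators between rectangular matrix spaces (which it is) so that applying it with the roles of $n_1$ and $n_2$ swapped is legitimate.
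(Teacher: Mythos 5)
Your proposal is correct and is exactly the argument the paper intends: the corollary follows from Proposition~\ref{prop:cap-dual} (which gives $\capac(T)>0 \iff \capac(T^*)>0$) combined with Corollary~\ref{cap>0:expanding} applied to the completely positive operator $T^*$. Your check that Corollary~\ref{cap>0:expanding} holds for arbitrary rectangular operators (so the roles of $n_1$ and $n_2$ may be swapped) is the only point requiring care, and it holds.
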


Similar to the square case, whenever a rectangular operator $T$ is fractional-rank non-decreasing, there exists an operator scaling which is almost doubly stochastic. 

\begin{corollary} Let $T : M_{n_1}(\mathbb{C}) \rightarrow M_{n_2}(\mathbb{C})$ be a completely positive operator s.t. $\capac(T) > 0$. Then for every $\eps > 0$, there exists an operator scaling $T'$ of $T$ s.t. $\ds(T') \le \eps$. Furthermore this $T'$ can be found by running Algorithm \ref{Gurvits_alg_rect} for an appropriate number of steps $t(T, \eps)$. 
\end{corollary}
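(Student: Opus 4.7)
The natural approach is to reduce to the square case by applying Theorem \ref{thm:cap_scaling} to the operator $\widetilde{T}$ obtained from Construction \ref{con:squaring-operator}. Since $\capac(T) > 0$, Lemma \ref{lem:capacity-expansion} immediately gives $\capac(\widetilde{T}) = \capac(T)^{n_1} > 0$. Hence by Theorem \ref{thm:cap_scaling} (for square operators), for any $\eps' > 0$ there exists a scaling $\widetilde{S}$ of $\widetilde{T}$ with $\ds(\widetilde{S}) \le \eps'$, and such a scaling is produced by running Algorithm $G$ on $\widetilde{T}$ for some number of steps $t'(\widetilde{T},\eps')$.

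The next step is to verify that the iterates of Algorithm \ref{Gurvits_alg_rect} applied to $T$ are tightly linked to the iterates of Algorithm $G$ applied to $\widetilde{T}$. Using the Kraus representation $\{\frac{1}{n_1} E_{i,j} \otimes A_k\}$ of $\widetilde T$, direct calculation shows
$$\widetilde{T}^*(I_{n_1 n_2}) = \tfrac{1}{n_1}\, I_{n_2} \otimes T^*(I_{n_2}), \qquad \widetilde{T}(I_{n_1 n_2}) = \tfrac{n_2}{n_1}\, I_{n_1} \otimes T(I_{n_1}).$$
Both are block diagonal with identical blocks, so the normalization matrices used at each step of Algorithm $G$ on $\widetilde T$ have the tensor-product form $I \otimes (\cdot)$, and the inner factor is (up to a global scalar) exactly the normalization matrix used by Algorithm \ref{Gurvits_alg_rect} on $T$. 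A short induction based on the computations of $T_R, T_L$ and their counterparts for $\widetilde{T}$ shows that if $T_j$ denotes the $j$-th iterate of Algorithm \ref{Gurvits_alg_rect}, then the $j$-th iterate $\widetilde{T}_j$ of Algorithm $G$ is related to the ``squared'' version $\widetilde{T_j}$ of $T_j$ by a simple explicit scalar $c_j > 0$. In particular, at even steps the two iterates coincide.

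Finally, I would translate the square-sense bound $\ds(\widetilde{T}_t) \le \eps'$ into the rectangular-sense bound $\ds(T_t) \le \eps$. Using the block-diagonal structure of $\widetilde{T}(I)$ and $\widetilde{T}^*(I)$, together with the identity $\|I_n \otimes M\|_F^2 = n\,\|M\|_F^2$, the square $\ds$ controls each term of the rectangular $\ds$ up to factors polynomial in $n_1, n_2$, possibly after composing with a scalar operator scaling to reconcile the different normalization conventions (the square notion requires $\widetilde{T}^*(I) = I$, which translates to $T^*(I_{n_2}) = n_1 I_{n_1}$ rather than $I_{n_1}$). Choosing $\eps'$ sufficiently small relative to $\eps, n_1, n_2$ and taking $t = t'(\widetilde{T},\eps')$ at an appropriate parity then yields the desired scaling $T' = T_t$, establishing the corollary.

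The main obstacle is the last step: converting the square-doubly-stochastic conclusion on $\widetilde{T}_t$ into a rectangular-doubly-stochastic conclusion on $T_t$ requires careful scalar bookkeeping, as the factor of $1/n_1$ in the Kraus operators of $\widetilde{T}$ introduces a systematic mismatch of $n_1$ between the two normalizations. All other steps are essentially routine consequences of the square-case theory together with Lemma \ref{lem:capacity-expansion}.
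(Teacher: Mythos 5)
Your proposal follows exactly the paper's route: pass to the square operator $\widetilde{T}$ via Construction \ref{con:squaring-operator}, transfer positivity of capacity through Lemma \ref{lem:capacity-expansion}, invoke Theorem \ref{thm:cap_scaling}, check that the iterates of Algorithm \ref{Gurvits_alg_rect} on $T$ track those of Algorithm $G$ on $\widetilde{T}$, and convert the $\ds$ bound back (the paper asserts $\ds(\widetilde{Z}) = n_1\,\ds(Z)$ and takes $\eps' = \eps/n_1$, where you allow a general $\poly(n_1,n_2)$ loss). Your extra caution about a normalization mismatch is unnecessary --- the rectangular left/right normalizations square exactly to the square ones --- but this does not affect correctness.
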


\begin{proof}
Given a $T : M_{n_1}(\mathbb{C}) \rightarrow M_{n_2}(\mathbb{C})$, let us construct a $\widetilde{T} : M_{n_1 n_2}(\mathbb{C}) \rightarrow M_{n_1 n_2}(\mathbb{C})$ via Construction \ref{con:squaring-operator}. Since $\capac(T) > 0$, by Lemma \ref{lem:capacity-expansion}, $\capac(\widetilde{T}) > 0$. Then by Theorem \ref{thm:cap_scaling}, if we run Algorithm $G$ on $\widetilde{T}$ for an appropriate number of steps $t(\widetilde{T}, \eps')$, the output $\widetilde{Z} = \widetilde{T}_{t(\widetilde{T}, \eps')}$ would satisfy $\ds(\widetilde{Z}) \le \eps'$. It is not hard to see that $\widetilde{Z}$ is of the form 
$$
\widetilde{Z}(X) = I_{n_1} \otimes \left( \frac{1}{n_1} \sum_{i=1}^{n_2} Z(X_{i,i})\right)
$$
where $Z$ is the result of running Algorithm \ref{Gurvits_alg_rect} on $T$ for $t(\widetilde{T}, \eps')$ steps. 
Also $\ds(\widetilde{Z}) = n_1 \ds(Z)$, so choosing $\eps' = \eps/n_1$ proves that running Algorithm \ref{Gurvits_alg_rect} on $T$ for $t(\widetilde{T}, \eps'/n_1)$ steps outputs $T'$ s.t. $\ds(T') \le \eps$. 
\end{proof}

Algorithms for computing capacity of square operators can be used in a black box way to compute the capacity of rectangular operators because of Construction \ref{con:squaring-operator} and Lemma \ref{lem:capacity-expansion}.

\begin{corollary}\label{cor:GGOW} Let $T : M_{n_1}(\mathbb{C}) \rightarrow M_{n_2}(\mathbb{C})$ be a completely positive operator s.t. the bit-complexity of the description of $T$ $($in terms of its Kraus operators$)$ is $b$. Then there is a $\poly(n_1,n_2,b)$ time $($deterministic$)$ algorithm to test whether $T$ is fractional-rank non-decreasing. There is also a $\poly(n_1,n_2,,b,1/\eps)$ time algorithm to compute a $(1+\eps)$-multiplicative approximation to $\capac(T)$. Furthermore, the algorithm outputs a scaling $T'$ of $T$ which is almost doubly stochastic i.e. $\capac(T') \ge 1-\eps$. 
\end{corollary}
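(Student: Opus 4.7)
The plan is to reduce the rectangular case to the square case via Construction \ref{con:squaring-operator} and Lemma \ref{lem:capacity-expansion}, and then invoke the square-operator algorithm of \cite{GGOW} quoted earlier in this excerpt. Given $T : M_{n_1}(\mathbb{C}) \to M_{n_2}(\mathbb{C})$ with Kraus operators $A_1,\ldots,A_m$ of bit-complexity $b$, form $\widetilde{T}: M_{n_1 n_2}(\mathbb{C}) \to M_{n_1 n_2}(\mathbb{C})$ using the explicit Kraus operators $\{\tfrac{1}{n_1} E_{i,j} \otimes A_k\}$. Since $m \le b$ (each Kraus operator contributes at least one bit), $\widetilde{T}$ has at most $n_1 n_2 m$ Kraus operators, each of size $n_1 n_2 \times n_1 n_2$, with bit-complexity $\widetilde{b} = \poly(n_1, n_2, b)$.

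For the fractional-rank non-decreasing test, Lemma \ref{lem:capacity-expansion}(2) shows that $T$ is fractional-rank non-decreasing iff $\widetilde{T}$ is rank non-decreasing, so invoking the \cite{GGOW} algorithm on $\widetilde{T}$ settles the first claim in time $\poly(n_1 n_2, \widetilde{b}) = \poly(n_1, n_2, b)$. For the capacity approximation, use Lemma \ref{lem:capacity-expansion}(1), namely $\capac(\widetilde{T}) = \capac(T)^{n_1}$: run \cite{GGOW} on $\widetilde{T}$ to obtain $\widetilde{D}$ with $\capac(\widetilde{T}) \le \widetilde{D} \le (1+\eps')\capac(\widetilde{T})$, then output $D = \widetilde{D}^{1/n_1}$, yielding $\capac(T) \le D \le (1+\eps')^{1/n_1}\capac(T)$. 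Choosing $\eps' = (1+\eps)^{n_1} - 1$, which is at least $n_1 \eps$ and at most $\poly(n_1)\eps$ when $\eps \le 1/n_1$ (otherwise one can first reduce $\eps$ to $1$ without loss), gives a $(1+\eps)$-approximation, and the running time is $\poly(n_1, n_2, \widetilde{b}, 1/\eps') = \poly(n_1, n_2, b, 1/\eps)$.

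For the almost doubly stochastic scaling, rather than trying to translate the scaling matrices that \cite{GGOW} produces for $\widetilde{T}$ back to $T$ (which is awkward, since a scaling of $\widetilde{T}$ by arbitrary invertible matrices need not preserve the tensor-product form of its Kraus operators), I would instead apply the rectangular scaling corollary proved just above, which says that Algorithm \ref{Gurvits_alg_rect} run on $T$ for $t(\widetilde{T}, \eps/n_1)$ steps outputs a scaling $T'$ of $T$ with $\ds(T') \le \eps$. The number of iterations $t(\widetilde{T}, \cdot)$ inherits the polynomial bound from \cite{GGOW} applied to $\widetilde{T}$, yielding total running time $\poly(n_1, n_2, b, 1/\eps)$. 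A standard argument (relating distance-to-doubly-stochastic to capacity through log-concavity of the determinant on $I$-perturbations) converts $\ds(T') \le \poly(\eps)$ into $\capac(T') \ge 1 - \eps$.

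The main obstacle in executing this plan is just the bookkeeping of the approximation parameter across the $n_1$-th power relationship $\capac(\widetilde{T}) = \capac(T)^{n_1}$, and ensuring the bit-complexity blowup in Construction \ref{con:squaring-operator} remains polynomial; both are straightforward but must be stated carefully to guarantee the overall $\poly(n_1, n_2, b, 1/\eps)$ running time claimed in the statement.
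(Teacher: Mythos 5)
Your proposal is correct and follows essentially the same route as the paper, which states this corollary without a written proof, asserting only that the square-operator algorithms of \cite{GGOW} apply ``in a black box way'' via Construction~\ref{con:squaring-operator} and Lemma~\ref{lem:capacity-expansion} — precisely the reduction you carry out, including handling the $n_1$-th power relation $\capac(\widetilde{T}) = \capac(T)^{n_1}$ and obtaining the scaling via the rectangular Algorithm~\ref{Gurvits_alg_rect} (whose correctness the paper establishes in the corollary immediately preceding this one, by observing that Algorithm $G$'s normalizations preserve the block structure of $\widetilde{T}$, which also answers your worry about translating scalings back). The only cosmetic remark is that taking the $n_1$-th root only improves the multiplicative error, so one may simply set $\eps' = \eps$ rather than $(1+\eps)^{n_1}-1$.
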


Similarly, the algorithm in Theorem \ref{alg:IQS} \cite{IQS15b} for computing witnesses to rank decreasing property of square operators can be used in a black box way to compute witnesses to fractional-rank decreasing property of rectangular operators.

\begin{corollary}\label{cor:IQS_rect} Let $T : M_{n_1}(\mathbb{C}) \rightarrow M_{n_2}(\mathbb{C})$ be a completely positive operator s.t. the bit-complexity of the description of $T$ $($in terms of its Kraus operators$)$ is $b$. Then there is a $\poly(n_1, n_2, b)$ time $($deterministic$)$ algorithm to test whether $T$ is fractional-rank non-decreasing. Furthermore, if $T$ is fractional-rank decreasing, then the algorithm also outputs an $X \succeq 0$ s.t. $\frac{\text{Rank}(T(X))}{n_2} <  \frac{\text{Rank}(X)}{n_1}$.
\end{corollary}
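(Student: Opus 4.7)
The plan is to reduce directly to the square case (Theorem \ref{alg:IQS} of \cite{IQS15b}) via Construction \ref{con:squaring-operator} and Lemma \ref{lem:capacity-expansion}, essentially mirroring the black-box argument already used to deduce Corollary \ref{cor:GGOW} from the $\cite{GGOW}$ algorithm.

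First, given $T : M_{n_1}(\mathbb{C}) \to M_{n_2}(\mathbb{C})$ with Kraus operators $A_1,\ldots,A_m$ of bit-complexity $b$, I would explicitly form $\widetilde T : M_{n_1 n_2}(\mathbb{C}) \to M_{n_1 n_2}(\mathbb{C})$ as in Construction \ref{con:squaring-operator}, using the Kraus operators $\{\tfrac{1}{n_1} E_{i,j} \otimes A_k\}$. The bit-complexity of $\widetilde T$ is at most $\poly(n_1,n_2,b)$, so invoking Theorem \ref{alg:IQS} on $\widetilde T$ runs in time $\poly(n_1 n_2, \poly(n_1,n_2,b)) = \poly(n_1,n_2,b)$, as required.

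Next, by Lemma \ref{lem:capacity-expansion}(2), $T$ is fractional-rank non-decreasing if and only if $\widetilde T$ is rank non-decreasing. Therefore the decision output of the IQS algorithm for $\widetilde T$ directly yields the decision output for $T$. When $\widetilde T$ is rank-decreasing, the IQS algorithm further produces an $X \succeq 0$ with $\mathrm{Rank}(\widetilde T(X)) < \mathrm{Rank}(X)$. I would then output $Y := \sum_{i=1}^{n_2} X_{i,i}$, viewing $X$ as an $n_2 \times n_2$ block matrix with $n_1 \times n_1$ blocks. The fact that $Y$ is a valid witness is precisely the ``other direction'' calculation in the proof of Lemma \ref{lem:capacity-expansion}: from $n_1 \cdot \mathrm{Rank}(T(Y)) = \mathrm{Rank}(\widetilde T(X)) < \mathrm{Rank}(X) \le n_2 \cdot \mathrm{Rank}(Y)$, dividing by $n_1 n_2$ gives $\mathrm{Rank}(T(Y))/n_2 < \mathrm{Rank}(Y)/n_1$.

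There is no real obstacle here; every ingredient has been assembled already. The only point that merits care is bookkeeping: verifying that the rational entries of $\widetilde T$'s Kraus operators remain of polynomial bit-size (immediate, since scaling by $1/n_1$ and tensoring with an elementary matrix only blows up the bit-length by an additive $O(\log n_1)$), and that the matrix $Y = \sum_i X_{i,i}$ inherits positive semidefiniteness from $X$ (immediate, as diagonal blocks of a psd matrix are psd and sums of psd matrices are psd). Together these give exactly the statement of Corollary \ref{cor:IQS_rect}.
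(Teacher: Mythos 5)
Your proposal is correct and follows exactly the route the paper intends: reduce to the square case via Construction~\ref{con:squaring-operator}, use part~2 of Lemma~\ref{lem:capacity-expansion} for the decision equivalence, and extract the witness $Y=\sum_i X_{i,i}$ by the very computation appearing in that lemma's proof. The paper leaves this argument implicit (stating only that Theorem~\ref{alg:IQS} can be used ``in a black box way''), so your write-up simply makes the intended proof explicit, with the bit-size and psd bookkeeping handled correctly.
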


\section{Reduction from Brascamp-Lieb to completely positive operators}\label{sec:bl-to-cap}

In this section we will show how each Brascamp-Lieb datum $(\mathbf{B, p})$ corresponds 
to an operator $T_{(\mathbf{B,p})}$ for which 
$\capac(T_{(\mathbf{B,p})}) = 1/\BL(\mathbf{B, p})^2$. From this correspondence, and from 
the connection between capacity and rank expansion of an operator, we will show how to
derive the~\cite{BCCT} conditions for finiteness of $\BL(\mathbf{B, p})$. Recall that in the
Brascamp-Lieb setting, we are given a datum $(\mathbf{B, p})$ consisting of $m$ matrices 
$B_1,\ldots,B_m$, where each $B_i$ is a $n_i \times n$ matrix. We also have $m$ 
non-negative rational numbers $p_1, \ldots, p_m$ s.t. $p_i = c_i/d$ for some
$c_i, d \in \N$. Since one necessary condition for finiteness of the Brascamp-Lieb constant
is that $n = \sum_{i=1}^m p_i n_i$, we will assume this equality throughout the 
section.

We begin by recalling the characterization of the Brascamp-Lieb constant due to Lieb \cite{Lieb90}:

\begin{theorem}[\textbf{Brascamp-Lieb constant}] Given a Brascamp-Lieb datum 
$(\mathbf{B, p})$, the Brascamp-Lieb constant $\BL(\mathbf{B, p})$ equals:
$$
\BL(\mathbf{B, p}) = \sup \left\{ \left(
\frac{\prod_{i=1}^m \Det(X_i)^{p_i}}{\Det \left( \sum_{i=1}^m p_i B_i^{\dagger} X_i B_i \right)} 
\right)^{1/2} : X_i \succ 0 \: \text{is a $n_i \times n_i$ matrix} \right\}
$$
\end{theorem}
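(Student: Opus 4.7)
The plan is to prove the equality by showing both inequalities separately. The lower bound $\BL(\mathbf{B,p}) \geq \sup \{\cdots\}$ is straightforward and amounts to substituting centered Gaussian test functions into the Brascamp-Lieb inequality. The upper bound $\BL(\mathbf{B,p}) \leq \sup \{\cdots\}$ asserts that Gaussians exhaust (or at least approach) the supremum in the BL inequality; this is the substantive content of the theorem, and will be the hard part.

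For the lower bound, fix positive definite $X_i$ of size $n_i$ and take $f_i(y) = \exp(-\pi y^T X_i y)$. Then
\[
  \prod_{i=1}^m f_i(B_i x)^{p_i} \;=\; \exp\!\Bigl(-\pi\, x^T \Bigl(\sum_{i=1}^m p_i B_i^\dagger X_i B_i\Bigr) x\Bigr),
\]
and using the Gaussian integral identity $\int_{\R^k} \exp(-\pi x^T A x)\,dx = \Det(A)^{-1/2}$ recalled in the excerpt, the left-hand side of the BL inequality equals $\Det\!\bigl(\sum_i p_i B_i^\dagger X_i B_i\bigr)^{-1/2}$, while each $\int f_i = \Det(X_i)^{-1/2}$. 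Rearranging the resulting inequality for the smallest admissible constant $C$ and then taking the supremum over all $(X_1, \ldots, X_m)$ gives the desired lower bound.

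For the upper bound, the task is to show that for arbitrary non-negative integrable $f_i$, the ratio of the two sides of the BL inequality never exceeds what is achieved by some Gaussian input. The classical route, following Lieb, is a \emph{rotation / tensorization} trick: one replaces each $f_i$ by the symmetric tensor $f_i \otimes f_i$ on $\R^{2 n_i}$, exploits the rotational invariance of Gaussians in the product space under a suitable one-parameter family of rotations $R_\theta \in O(2 n_i)$, and shows that the BL ratio is monotone along this deformation. Iterating (or passing to the long-time limit of the associated Ornstein--Uhlenbeck/heat semigroup $f_i \mapsto f_i * g_t$) drives the $f_i$ towards centered Gaussians while only increasing the BL ratio, so the supremum can be restricted to Gaussian inputs. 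For Gaussians, the calculation from the lower-bound direction shows the ratio is exactly $\bigl(\prod_i \Det(X_i)^{p_i}/\Det(\sum_i p_i B_i^\dagger X_i B_i)\bigr)^{1/2}$.

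The main obstacle is rigorously establishing the monotonicity / saturation step: there is no elementary convexity argument, since the ratio being optimized is not concave in the $f_i$. One must either execute the rotation trick carefully (verifying that the rotation-averaged datum still yields a valid BL inequality with the same constant), or equivalently recast the problem in entropic form and use the Gaussian extremality of entropy-power type inequalities in the spirit of Carlen--Lieb--Loss. Either way, the argument is genuinely analytic and is the content of Lieb's paper; for our algorithmic purposes we take the theorem as a black box.
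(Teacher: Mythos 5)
This statement is Lieb's theorem, which the paper does not prove: it is imported verbatim from \cite{Lieb90} (the introduction explicitly describes the heart of Lieb's proof as showing that the optimal constant is attained or approached by centered Gaussians, and then uses the result as a black box). Your treatment is therefore at least as complete as the paper's. The lower-bound direction you carry out is correct: plugging $f_i(y) = \exp(-\pi y^T X_i y)$ into the BL inequality and using $\int \exp(-\pi x^T A x)\,dx = \Det(A)^{-1/2}$ gives
$\Det\bigl(\sum_i p_i B_i^\dagger X_i B_i\bigr)^{-1/2} \le C \prod_i \Det(X_i)^{-p_i/2}$,
hence $C \ge \bigl(\prod_i \Det(X_i)^{p_i}/\Det(\sum_i p_i B_i^\dagger X_i B_i)\bigr)^{1/2}$, and the degenerate case where $\sum_i p_i B_i^\dagger X_i B_i$ is singular is consistent (both sides force $C=\infty$). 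For the upper bound you correctly identify that Gaussian extremizability is the substantive content and name the standard routes; one small caveat is that the $f\otimes f$ rotation trick you describe is really the Carlen--Lieb--Loss / heat-flow line of argument, whereas Lieb's original proof goes through multiplicativity of the Gaussian constant under tensoring and composition together with a limiting argument --- but since you explicitly take this direction as a black box, exactly as the paper does, there is no gap relative to the paper. If a self-contained proof were demanded, the saturation step is the part that remains to be supplied.
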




We now describe our main reduction from the BL datum to (rectangular) completely positive operators. 

\begin{construction}\label{cons:main_reduction}
Let $m' = \sum_{i=1}^m c_i$. We will construct $m'$ $nd \times n$ matrices 
$A_1,\ldots,A_{m'}$ from $B_1,\ldots,B_m$. Essentially $A_1,\ldots,A_{m'}$ will consist of 
$c_i$ copies of $B_i$ acting on different parts of the space. Consider the following matrix $A$:
\[ 
A = 
\begin{pmatrix}
	\ B_1 \ \\
	\ \vdots \ \\
	\ B_1 \ \\
	\ B_2 \ \\
	\ \vdots \ \\
	\ B_2 \ \\
	\\
	\ \vdots \ \\
	\\
	\ B_m \ \\
	\ \vdots \ \\
	\ B_m \ \\
\end{pmatrix}
\begin{matrix}
	\begin{rcases*}
		\\
		\\
		\\
	\end{rcases*} c_1 \text{ times }  \\
	\begin{rcases*}
		\\
		\\
		\\
	\end{rcases*} c_2 \text{ times } \\
	\\
	\vdots \\
	\\
	\begin{rcases*}
		\\
		\\
		\\
	\end{rcases*} c_m \text{ times }	
\end{matrix}
\]


$A$ has a natural block structure consisting of a total of $m'$ blocks. The matrices $A_j$ are obtained 
by keeping one block intact and zeroing out all other blocks. More formally, suppose $j$ is such 
that $\sum_{i=1}^k c_i < j \le \sum_{i=1}^{k+1} c_i$. Then all entries of $A_j$ will be 
zero except rows 
$$
\sum_{i=1}^k c_i n_i + \left(j - \sum_{i=1}^k c_i - 1\right) \cdot n_{k+1} + 1
\text{ to }
\sum_{i=1}^k c_i n_i + \left(j - \sum_{i=1}^k c_i - 1\right) \cdot n_{k+1} + n_{k+1}
$$ 
which are occupied by $B_{k+1}$.
Define the operator $T_{(\mathbf{B, p})} : M_{nd}(\mathbb{C}) \rightarrow M_{n}(\mathbb{C})$ as 
$$
T_{(\mathbf{B, p})}(X) = \sum_{j=1}^{m'} A_j^{\dagger} X A_j
$$
\end{construction}

\begin{definition}[\textbf{Brascamp-Lieb operator}]
	The completely positive operator $T_{(\mathbf{B, p})}$ constructed in Construction~\ref{cons:main_reduction}
	is called the {\em Brascamp-Lieb operator} for datum $(\mathbf{B, p}).$
\end{definition}

The next lemma relates important properties of a BL datum and its Brascamp-Lieb operator. 

\begin{lemma}\label{lem:main_reduction}
Let $(\mathbf{B, p})$ be a Brascamp-Lieb datum, where 
	$p_i = c_i/d$ for $c_i, d \in \N$ and $B_i : \R^n \to \R^{n_i}$. Moreover, assume that
	$n = \sum_{i=1}^m c_i n_i/d.$ Let $T_{(\mathbf{B, p})}$ be the Brascamp-Lieb operator corresponding to 
	$(\mathbf{B, p})$. Then
	\begin{enumerate}
	\item If $\BL(\mathbf{B, p}) < \infty$, then $$\capac(T_{(\mathbf{B, p})}) = 1/\BL(\mathbf{B, p})^2$$
	\item  The inequality 
	$$
	\dim (V) \leq \sum_j p_j  \dim(B_jV)
	$$
	holds for all subspaces $V \subseteq \R^n$ iff $T_{(\mathbf{B, p})}^*$ is fractional-rank non-decreasing. 
	\end{enumerate}
	Moreover, if $B_i$'s are integer matrices such that $\| B_i \|_\infty \le M$, 
	then the same holds for the matrices $A_1, \ldots, A_{m'}$. 
\end{lemma}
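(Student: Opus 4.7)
The plan is to prove all three parts by exploiting the block-support structure of the Kraus operators $A_j^\dagger$: each $A_j$ has its nonzero rows in exactly one of $m'$ disjoint row-blocks of the tall matrix $A$, and this single structural fact drives both the capacity computation and the rank analysis. Before starting, I note that the assumption $n = \sum_i c_i n_i/d$ makes $A$ an $nd \times n$ matrix, so that $T_{(\mathbf{B,p})}$ is a well-defined map from $M_{nd}(\C)$ to $M_n(\C)$ with $n_2/n_1 = 1/d$.

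For part 1, I plan to reduce the capacity computation to Lieb's variational formula via two successive restrictions on the optimizer. Writing
\[
\capac(T_{(\mathbf{B,p})}) \;=\; \inf_{X\succ 0}\; \Det\bigl(\tfrac{1}{d}\,T_{(\mathbf{B,p})}(X)\bigr)\big/\Det(X)^{1/d},
\]
first observe that each $A_j^\dagger X A_j$ depends only on the diagonal block of $X$ matching the support of $A_j$, so $T_{(\mathbf{B,p})}(X)$ depends only on the block-diagonal part of $X$. By Fischer's inequality, zeroing the off-diagonal blocks can only increase $\Det(X)$, and hence only decrease the ratio; so the infimum is attained on block-diagonal $X$ with blocks $X_i^{(k)}$, $i\in[m]$, $k\in[c_i]$. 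Second, for fixed sums $Y_i := \sum_k X_i^{(k)}$, the value $T_{(\mathbf{B,p})}(X)=\sum_i B_i^\dagger Y_i B_i$ is fixed, while the product $\prod_k \Det(X_i^{(k)})$ is maximized by concavity of $\log\Det$ when all $X_i^{(k)} = Y_i/c_i$. Substituting $Y_i = c_i Z_i$, the multiplicities $c_i$ combine with the $1/d$ exponent to produce exactly $p_i = c_i/d$ in every slot, giving
\[
\capac(T_{(\mathbf{B,p})}) \;=\; \inf_{Z_i \succ 0}\; \frac{\Det\bigl(\sum_i p_i B_i^\dagger Z_i B_i\bigr)}{\prod_i \Det(Z_i)^{p_i}} \;=\; \frac{1}{\BL(\mathbf{B,p})^2},
\]
the last equality being Lieb's Theorem.

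For part 2, I plan to compute $\rk(T^*_{(\mathbf{B,p})}(Y))$ directly. Since $T^*_{(\mathbf{B,p})}(Y) = \sum_j A_j Y A_j^\dagger \succeq 0$, its range equals $\sum_j A_j V$, where $V = \mathrm{range}(Y)$. The disjoint-row-block structure of the $A_j$'s makes this an orthogonal direct sum inside $\R^{nd}$, and since exactly $c_i$ of the $A_j$'s are copies of $B_i$ sitting in different row-blocks,
\[
\rk\bigl(T^*_{(\mathbf{B,p})}(Y)\bigr) \;=\; \sum_{i=1}^m c_i \dim(B_i V) \;=\; d \sum_i p_i \dim(B_i V).
\]
The fractional-rank inequality $\rk(T^*(Y))/(nd) \ge \rk(Y)/n$ is therefore precisely $\sum_i p_i \dim(B_i V) \ge \dim V$ with $V = \mathrm{range}(Y)$. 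Since every subspace $V \subseteq \R^n$ arises as the range of some $Y \succeq 0$ (e.g.\ the orthogonal projection onto $V$), the equivalence with the BL subspace condition follows.

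Part 3 is immediate from Construction~\ref{cons:main_reduction}, since every entry of each $A_j$ is either $0$ or an entry of some $B_i$. The only delicate point in the whole argument is the bookkeeping in part 1: the multiplicity $c_i$ has to cancel precisely against the $1/d$ capacity exponent to reproduce $p_i$, and this is exactly why the construction replicates $B_i$ a total of $c_i$ times and lifts the domain dimension to $nd$. I do not expect any genuine obstacle beyond this accounting; the structural observations (Fischer, concavity of $\log\Det$, orthogonal direct-sum decomposition) are all standard.
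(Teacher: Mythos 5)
Your proposal is correct and follows essentially the same route as the paper's proof: restricting to the block-diagonal part via Fischer's (generalized Hadamard) inequality, equalizing the $c_i$ blocks by log-concavity of $\Det$ so that the multiplicities cancel against the $1/d$ exponent, and reading off the rank of $T^*(Y)$ from the orthogonal direct-sum structure of the disjointly supported $A_j$'s. No gaps.
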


\begin{proof}
Let us start by proving the first part of the lemma. 
For $j$ s.t. $\sum_{i=1}^k c_i < j \le \sum_{i=1}^{k+1} c_i$ (let us denote this by $i_j = k+1$), 
lets denote by $X_j$ the $n_{k+1} \times n_{k+1}$ submatrix of $X$ indexed by rows 
and columns from 
$$
\sum_{i=1}^k c_i n_i + \left(j - \sum_{i=1}^k c_i - 1\right) \cdot n_{k+1} + 1
\text{ to }
\sum_{i=1}^k c_i n_i + \left(j - \sum_{i=1}^k c_i - 1\right) \cdot n_{k+1} + n_{k+1}
$$ 
Then it can be seen that 
$$
T_{(\mathbf{B, p})}(X) = \sum_{j=1}^{m'} B_{i_j}^{\dagger} X_j B_{i_j}
$$
Now let us calculate the capacity of $T_{(\mathbf{B, p})}$:
\begin{align*}
\capac(T_{(\mathbf{B, p})}) &= \inf \left\{ \Det\left( \frac{T_{(\mathbf{B, p})}(X)}{d} \right) : X \succ 0, \Det(X) = 1\right\} \\
&= \inf \left\{ \Det\left( \frac{\sum_{j=1}^{m'} B_{i_j}^{\dagger} X_j B_{i_j}}{d} \right) : X \succ 0, \Det(X) = 1\right\} \\
&= \inf \left\{ \Det\left( \frac{\sum_{j=1}^{m'} B_{i_j}^{\dagger} X_j B_{i_j}}{d} \right) : X_{j} \succ 0, \prod_{j=1}^{m'} \Det(X_j) = 1, X_j \: \text{is }n_{i_j} \times n_{i_j} \right\} \\
&= \inf \left\{ \Det\left( \frac{\sum_{i=1}^{m} c_i B_{i}^{\dagger} Y_i B_{i}}{d} \right) : Y_{i} \succ 0, \prod_{i=1}^{m} \Det(Y_i)^{c_i} = 1, Y_i \: \text{is }n_{i} \times n_{i} \right\} \\
&= \inf \left\{ \frac{\Det\left( \frac{\sum_{i=1}^{m} c_i B_{i}^{\dagger} Y_i B_{i}}{d} \right)}{\prod_{i=1}^m \Det(Y_i)^{c_i/d}} : Y_{i} \succ 0, Y_i \: \text{is }n_{i} \times n_{i} \right\} \\
&= \inf \left\{ \frac{\Det\left(\sum_{i=1}^{m} p_i B_{i}^{\dagger} Y_i B_{i} \right)}{\prod_{i=1}^m \Det(Y_i)^{p_i}} : Y_{i} \succ 0, Y_i \: \text{is }n_{i} \times n_{i} \right\} \\
&= 1/\BL(\mathbf{B, p})^2
\end{align*}
For third equality, we use the generalization of Hadamard's inequality. For fourth equality, we 
use log-concavity of determinant after replacing $\sum_{j : i_j = k} X_{j}$ by $c_k Y_k$. 

Now we move on to prove the second part of the lemma. 
The condition that the dual operator $T_{(\mathbf{B, p})}^* : \cM_n(\C) \to \cM_{nd}(\C)$ is fractional-rank 
	non-decreasing is the following:
	$$ \Rk(T_{(\mathbf{B, p})}^*(X))/d \ge \Rk(X), \ \text{ for all } X \succeq 0$$
	
	As the matrices $A_j$ defined in Construction~\ref{cons:main_reduction} correspond to distinct 
	copies of $B_j$ acting on $X$, the dual operator corresponds to the following 
	transformation:
	\begin{equation}\label{eq:dir-sum}
	T_{(\mathbf{B, p})}^*(X) = \bigoplus_{j=1}^m 
	\left( \bigoplus_{i=1}^{c_j} B_i X B_i^\dagger \right)
	\end{equation}
	Hence, the condition $\Rk(T^*(X))/d \ge \Rk(X)$ becomes:
	$$   \Rk(X) \leq \Rk(T^*(X))/d = \frac{1}{d} \cdot \sum_{j=1}^m c_j \Rk(B_j X B_j^\dagger)
	= \sum_{j=1}^m p_j \Rk(B_j X B_j^\dagger). $$
	Let $V = \langle \bv_1, \ldots, \bv_k \rangle$ be an orthonormal eigenvector basis for $X$, 
	assuming $\Rk(X) = k$. Hence, we have $\lambda_i > 0$ such that
	$X = \sum_{i=1}^k \lambda_i \bv_i \bv_i^\dagger$ and the inequality above becomes:
	\begin{align*}
		\dim(V) = \Rk(X) &\leq \sum_{j=1}^m p_j \Rk(B_j X B_j^\dagger) \\
		&= \sum_{j=1}^m p_j \Rk\left(B_j 
		\left( \sum_{i=1}^k  \lambda_i \bv_i \bv_i^\dagger \right)  B_j^\dagger \right) \\ 
		&= \sum_{j=1}^m p_j \dim(B_j V).
	\end{align*}	   
	As this inequality must hold for any $X \succeq 0$, it must be the case that it holds
	for all spaces $V \subseteq \mathbb{R}^n$. This is precisely the condition of~\cite{BCCT}. 

Notice that by our construction of the matrices $A_j$, if $B_i$'s are integer matrices with 
$\| B_i \|_\infty \le M$, then matrices $A_j$ will also be integer matrices whose entries 
are bounded by $M$.
\end{proof}

Now we will see that with Construction \ref{cons:main_reduction} and Lemma \ref{lem:main_reduction}, we can deduce the following corollaries for the BL datum. The next corollary is Theorem \ref{feasibility} from the introduction proved in \cite{BCCT}. 

\begin{corollary}\label{cor:BCCT} Let $(\mathbf{B, p})$ be a Brascamp-Lieb datum, where 
	$p_i = c_i/d$ for $c_i, d \in \N$ and $B_i : \R^n \to \R^{n_i}$. Moreover, assume that
	$n = \sum_{i=1}^m c_i n_i/d.$ Then $\BL(\mathbf{B, p})$ is finite iff for every subspace 
$V \subseteq \mathbb{R}^n$, it holds that
$$
\dim(V) \le \sum_{i=1}^m p_i \dim(B_i(V))
$$
\end{corollary}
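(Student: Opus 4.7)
The plan is to chain together the tools developed in the preceding sections: Lieb's variational formula for $\BL(\mathbf{B,p})$, the identification $\capac(T_{(\mathbf{B,p})}) = 1/\BL(\mathbf{B,p})^2$ from Lemma~\ref{lem:main_reduction}(1), the equivalence between positivity of capacity and the fractional-rank non-decreasing property of the dual (Corollary~\ref{cor:dual-expanding}), and the equivalence between that fractional-rank property of $T_{(\mathbf{B,p})}^*$ and the BCCT dimension inequality (Lemma~\ref{lem:main_reduction}(2)). Putting these together yields the chain
\[
\BL(\mathbf{B,p}) < \infty \;\Longleftrightarrow\; \capac(T_{(\mathbf{B,p})}) > 0 \;\Longleftrightarrow\; T_{(\mathbf{B,p})}^*\text{ is fractional-rank non-decreasing} \;\Longleftrightarrow\; \text{dim. ineq.}
\]

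For the forward direction, I would assume $\BL(\mathbf{B,p}) < \infty$. Then Lemma~\ref{lem:main_reduction}(1) gives $\capac(T_{(\mathbf{B,p})}) = 1/\BL(\mathbf{B,p})^2 > 0$, so by Corollary~\ref{cor:dual-expanding} the dual $T_{(\mathbf{B,p})}^*$ is fractional-rank non-decreasing, and Lemma~\ref{lem:main_reduction}(2) then converts this into the subspace inequality $\dim(V) \le \sum_i p_i \dim(B_i V)$ for every $V \subseteq \R^n$.

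For the reverse direction, I would assume the dimension inequality and run the chain backwards: Lemma~\ref{lem:main_reduction}(2) gives fractional-rank non-decreasing for $T_{(\mathbf{B,p})}^*$, Corollary~\ref{cor:dual-expanding} gives $\capac(T_{(\mathbf{B,p})}) > 0$, and it remains to conclude $\BL(\mathbf{B,p}) < \infty$. The main step here is to notice that the chain of equalities computing $\capac(T_{(\mathbf{B,p})})$ in the proof of Lemma~\ref{lem:main_reduction}(1) goes through \emph{unconditionally}, showing
\[
\capac(T_{(\mathbf{B,p})}) \;=\; \inf_{Y_i \succ 0} \frac{\det\bigl(\sum_i p_i B_i^\dagger Y_i B_i\bigr)}{\prod_i \det(Y_i)^{p_i}},
\]
which by Lieb's theorem is exactly $1/\BL(\mathbf{B,p})^2$ (with the convention $1/\infty = 0$). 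Hence $\capac(T_{(\mathbf{B,p})}) > 0$ forces $\BL(\mathbf{B,p}) < \infty$.

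The only delicate point is the last one: Lemma~\ref{lem:main_reduction}(1) as stated assumes $\BL(\mathbf{B,p}) < \infty$, but its proof really establishes the identity above for arbitrary feasibility status of the datum. I would either extend the statement of that lemma to the infeasible case (trivially, since the computation uses only the variational form and Hadamard's inequality), or equivalently observe directly from Lieb's formula that $\BL(\mathbf{B,p}) = \infty$ is equivalent to the infimum defining $\capac(T_{(\mathbf{B,p})})$ being $0$. No new analytic input is required beyond what has already been proven; the entire corollary is an immediate consequence of the reduction and the operator-scaling characterization of capacity positivity.
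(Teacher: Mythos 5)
Your proof is correct and follows exactly the paper's route: the paper's own proof of this corollary is a one-line citation of Corollary~\ref{cor:dual-expanding} and Lemma~\ref{lem:main_reduction}, which is precisely the chain you spell out. Your extra remark that the capacity identity in Lemma~\ref{lem:main_reduction}(1) holds unconditionally (not just when $\BL(\mathbf{B,p}) < \infty$) is a valid and worthwhile clarification of a point the paper glosses over, but it does not change the argument.
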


\begin{proof}
Follows immediately from Corollary \ref{cor:dual-expanding} and Lemma \ref{lem:main_reduction}. 
\end{proof}

Next we will derive an algorithm for checking finiteness of the Brascamp-Lieb constant from Construction \ref{cons:main_reduction} and Corollary~\ref{cor:IQS_rect}. 

\begin{corollary}\label{cor:bl_feasibility} Let $(\mathbf{B, p})$ be a Brascamp-Lieb datum, where 
	$p_i = c_i/d$ for $c_i, d \in \N$ and $B_i : \R^n \to \R^{n_i}$. Moreover, assume that
	$n = \sum_{i=1}^m c_i n_i/d.$ Let $b$ be the total bit size of $\mathbf{B}$. Then there is a $\poly(b,d)$ time 
	$($deterministic$)$ algorithm to check if $\BL(\mathbf{B, p})$ is finite. Furthermore, if $\BL(\mathbf{B, p})$ is not finite, then
	the algorithm outputs a subspace $V \subseteq \mathbb{R}^n$ s.t.
	$$
\dim(V) > \sum_{i=1}^m p_i \dim(B_i(V))
$$
\end{corollary}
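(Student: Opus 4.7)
The plan is to combine Construction~\ref{cons:main_reduction}, Lemma~\ref{lem:main_reduction}, Corollary~\ref{cor:BCCT}, and the algorithm of~\cite{IQS15b} stated as Corollary~\ref{cor:IQS_rect}. Essentially everything is already assembled; what remains is to verify polynomial bit-size for the reduction, to apply the operator-scaling algorithm to the \emph{dual} operator $T_{(\mathbf{B,p})}^*$, and to translate a fractional-rank-decreasing witness back into a violated subspace for the Brascamp--Lieb inequalities.

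First I would run Construction~\ref{cons:main_reduction} on $(\mathbf{B},\mathbf{p})$ to produce the operator $T_{(\mathbf{B,p})} : M_{nd}(\mathbb{C}) \to M_n(\mathbb{C})$ with Kraus operators $A_1,\ldots,A_{m'}$, where $m' = \sum_i c_i \le md$. Since each $A_j$ is obtained by placing a single copy of some $B_i$ into an $nd \times n$ block-structured matrix and padding the rest with zeros, the total bit-size of the Kraus description is $O(bd)$, and in particular polynomial in $b$ and $d$. Thus the size of the operator-scaling instance is polynomial in the input size as required.

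Next I would invoke Corollary~\ref{cor:IQS_rect} on $T_{(\mathbf{B,p})}^*$ (which is also completely positive with easily-computable Kraus operators $A_j^\dagger$ of the same bit-size). This runs in time $\poly(n,d,b)$ and decides whether $T_{(\mathbf{B,p})}^*$ is fractional-rank non-decreasing. By Lemma~\ref{lem:main_reduction}(2) together with Corollary~\ref{cor:BCCT}, this is exactly the condition that $\BL(\mathbf{B,p}) < \infty$, so the decision problem is settled.

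For the witness, suppose the algorithm reports that $T_{(\mathbf{B,p})}^*$ is fractional-rank decreasing and returns an $X \succeq 0$ with $\Rk(T_{(\mathbf{B,p})}^*(X))/n < \Rk(X)/(nd)$\textemdash equivalently $\Rk(T_{(\mathbf{B,p})}^*(X)) < d \cdot \Rk(X)$. I would set $V \subseteq \mathbb{R}^n$ to be the column space of $X$ (computable in polynomial time by, e.g., Gaussian elimination), and output an orthonormal basis of $V$. Using the direct-sum formula~\eqref{eq:dir-sum}, namely
$$ T_{(\mathbf{B,p})}^*(X) \;=\; \bigoplus_{j=1}^m \bigoplus_{i=1}^{c_j} B_j X B_j^\dagger, $$
we get $\Rk(T_{(\mathbf{B,p})}^*(X)) = \sum_{j} c_j \Rk(B_j X B_j^\dagger) = \sum_j c_j \dim(B_j V)$, so dividing by $d$ gives $\sum_j p_j \dim(B_j V) < \Rk(X) = \dim(V)$, which is the required violated inequality. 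The only mild subtlety\textemdash and where I would focus the cleanest part of the write-up\textemdash is verifying that the column-space-of-$X$ trick correctly converts the operator-theoretic witness into the combinatorial subspace witness; but this is immediate from the eigendecomposition argument already used inside the proof of Lemma~\ref{lem:main_reduction}, so no new ideas are required.
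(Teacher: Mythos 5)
Your proposal is correct and follows essentially the same route as the paper's proof: reduce via Construction~\ref{cons:main_reduction}, run the algorithm of Corollary~\ref{cor:IQS_rect} on the dual operator $T_{(\mathbf{B,p})}^*$, and convert a fractional-rank-decreasing witness $X$ into the subspace $\mathrm{Im}(X)$ via the direct-sum formula~\eqref{eq:dir-sum}. (One typo: since $T_{(\mathbf{B,p})}^* : \cM_n(\C)\to\cM_{nd}(\C)$, the fractional-rank condition should read $\Rk(T_{(\mathbf{B,p})}^*(X))/(nd) < \Rk(X)/n$, not with the denominators swapped; your ``equivalently'' form $\Rk(T_{(\mathbf{B,p})}^*(X)) < d\cdot\Rk(X)$ and the rest of the derivation are the correct ones.)
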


\begin{proof}
	Let $T_{(\mathbf{B, p})} : \cM_{nd}(\C) \to \cM_n(\C)$ be the operator defined 
	in Construction~\ref{cons:main_reduction} and let
	$T_{(\mathbf{B, p})}^* : \cM_n(\C) \to \cM_{nd}(\C)$ be its dual. By 
	Lemma~\ref{lem:main_reduction} and Proposition~\ref{prop:cap-dual}, we have that 
	$\BL(\mathbf{B, p})$ is finite iff $T_{(\mathbf{B,p})}^*$ is fractional-rank non-decreasing. 
	Corollary~\ref{cor:IQS_rect} gives us a deterministic $\poly(b,d)$-time algorithm to test
	whether the operator $T_{(\mathbf{B,p})}^*$ is fractional-rank non-decreasing, and if it is
	not, the algorithm outputs $X \succeq 0$ such that 
	\begin{equation}\label{eq:frac-rk-ineq}
		\dfrac{\rk(T(X))}{nd} < \dfrac{\rk(X)}{n}
	\end{equation} 
	In the latter case, to find the subspace $V$
	which violates the dimension inequality, let $V = \langle \bv_1, \ldots, \bv_k \rangle$ be
	an orthonormal eigenvector basis for $X$. Hence, we have 
	$X = \dst\sum_{i=1}^k \lambda_i \bv_i \bv_i^\dagger$. According to 
	equation~\eqref{eq:dir-sum},
	$$ T_{(\mathbf{B, p})}^*(X) = \bigoplus_{j=1}^m 
	\left( \bigoplus_{i=1}^{c_j} B_i X B_i^\dagger \right) $$
	Therefore, from~\eqref{eq:frac-rk-ineq} we obtain:
	\begin{align*}
		\dfrac{\rk(X)}{n} > \dfrac{\rk(T(X))}{nd} \then
		\dim(V) = \rk(X) &> \dfrac{\rk(T(X))}{d} \\
		&= \sum_{j=1}^m p_j \Rk(B_j X B_j^\dagger) \\
		&= \sum_{j=1}^m p_j \Rk\left(B_j 
		\left( \sum_{i=1}^k  \lambda_i \bv_i \bv_i^\dagger \right)  B_j^\dagger \right) \\ 
		&= \sum_{j=1}^m p_j \dim(B_j V).
	\end{align*}
	The above proves that $Im(X)$ is a vector space which violates the dimension inequalities. 
\end{proof}

\subsection{Connection to quiver semi-invariants}\label{sec:quiver}
In this short subsection, we provide an invariant theoretic view of our reduction\footnote{We remark that other connections between BL inequalities and quiver representations are mentioned in \cite{BCCT}.}. It is far from being self-contained and we refer the reader to the survey \cite{DW2000} about semi-invariants of quiver representations. In this theory, there is a general way of generating semi-invariants for general quivers by a reduction to the Kronecker quiver (left-right action) (\cite{DW2000, DZ2001, SVdB2001}, also see Section $5$ in \cite{DM15}). 
As it happens, the data to both BL inequalities and to completely positive operators fits the descriptions of the ``star quiver''and the ``Kronecker quiver'' respectively, and with hindsight,  the reduction described in Construction~\ref{cons:main_reduction} (when composed with the reduction from rectangular to square operators) can be viewed as a special case of the reduction above. More specifically, semi-invariants for the star quiver with dimensions $(n, n_1, n_2, \ldots, n_m)$ and weight vector $(d, -c_1,-c_2,\ldots,-c_m)$ can be generated by reducing to a Kronecker quiver of appropriate dimensions (see Figure \ref{fig:star_to_Kronecker}). It is worth emphasizing here the intriguing connection (that this paper exploits) between analysis (BL inequalities) and algebra (semi-invariants of the star quiver): a BL datum $(B_1,\ldots,B_m, c_1/d,\ldots,c_m/d)$ is not feasible iff $(B_1,\ldots,B_m)$ is the common zero of the semi-invariants of the star quiver with dimensions $(n, n_1, n_2, \ldots, n_m)$ and weight vectors \linebreak $\{ (l\cdot d, -l \cdot c_1,-l \cdot c_2,\ldots,-l \cdot c_m):  l \in \mathbb{N}\}$\footnote{One can even define such a connection for the finiteness of a specific BL constant: $\BL(\mathbf{B,p}) = \infty$ iff $\mathbf{B}$ is in the null cone of the subgroup of $\text{GL}_n(\C) \times \text{GL}_{n_1}(\C) \times \cdots \text{GL}_{n_m}(\C)$ satisfying $\Det(C) = \prod_{i=1}^m \Det(C_i)^{p_i}$.}.

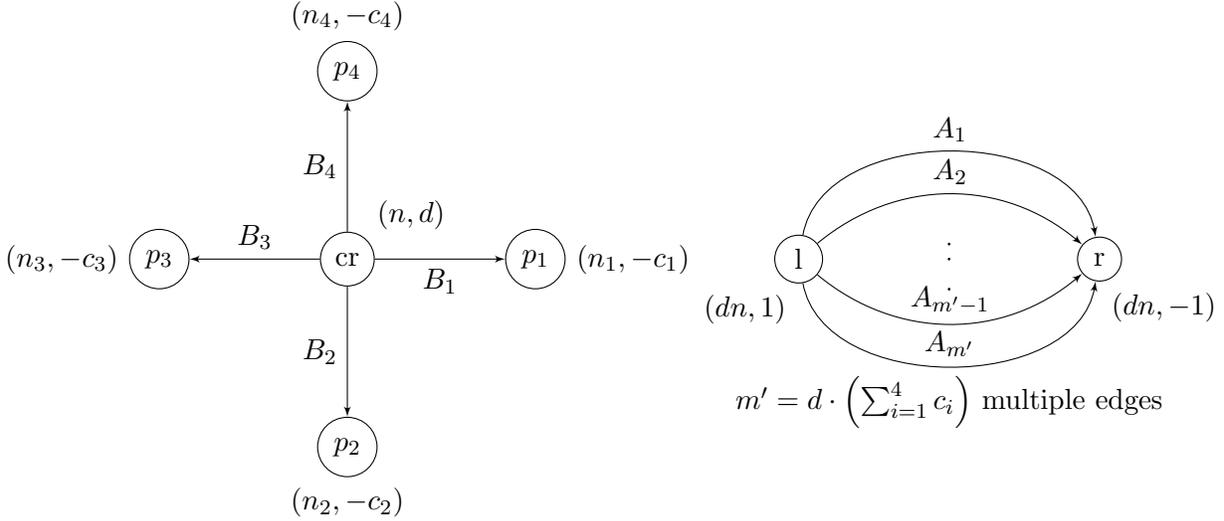
\begin{figure}[H]
\centering
\begin{tikzpicture}

\tikzset{vertex/.style = {shape=circle,draw,minimum size=1.5em}}
\tikzset{edge/.style = {->,> = latex'}}
\node[vertex] (cr) at  (0,0) [label={[label distance=0.01cm]45:$(n,d)$}] {cr};
\node[vertex] (p_1) at  (2.5,0) [label={[label distance=0.01cm]0:$(n_1,-c_1)$}] {$p_1$};
\node[vertex] (p_2) at  (0,-2.5) [label={[label distance=0.01cm]-90:$(n_2,-c_2)$}] {$p_2$};
\node[vertex] (p_3) at  (-2.5,0) [label={[label distance=0.01cm]180:$(n_3,-c_3)$}] {$p_3$};
\node[vertex] (p_4) at  (0,2.5)  [label={[label distance=0.01cm]90:$(n_4,-c_4)$}]  {$p_4$};

\draw[edge] (cr) -- node[below] {$B_1$} ++ (p_1);
\draw[edge] (cr) -- node[left] {$B_2$} ++ (p_2);
\draw[edge] (cr) -- node[above] {$B_3$} ++ (p_3);
\draw[edge] (cr) -- node[left] {$B_4$} ++ (p_4);

\node[vertex] (l) at  (6,0) [label={[label distance=0.01cm]-95:$(dn,1)$}] {l};
\node[vertex] (r) at  (10,0) [label={[label distance=0.01cm]-85:$(dn,-1)$}] {r};

\draw[edge][yshift=0.5cm]  (l)  to [bend right=40] node[sloped,midway,above] {$A_{m'-1}$} (r);
\draw[edge][yshift=0.5cm]  (l)  to [bend left=40] node[sloped,midway,above] {$A_2$} (r);

\draw[edge][yshift=0.5cm]  (l)  to [bend right=80] node[sloped,midway,above] {$A_{m'}$} (r);
\draw[edge][yshift=0.5cm]  (l)  to [bend left=80] node[sloped,midway,above] {$A_1$} (r);

\node[fill,circle,scale=0.1] at (8,0) {};
\node[fill,circle,scale=0.1] at (8,0.2) {};
\node[fill,circle,scale=0.1] at (8,-0.4) [label={[label distance=1cm]-90:$m'=d \cdot \left( \sum_{i=1}^4 c_i\right)$ multiple edges}] {};

\end{tikzpicture}
\caption{Reduction from star quiver to the Kronecker quiver. Nodes are labelled with $(d,w)$: $d$ is the dimension of the vector space at this node and $w$ is the weight corresponding to this node. Edges are labelled with maps from the vector space at the head of the edge to the vector space at the tail of the edge.} \label{fig:star_to_Kronecker}
\end{figure}

\section{Upper bound on the Brascamp-Lieb constant}\label{sec:ub_bl}

In this section, we will prove an exponential upper bound on the BL constant $\BL(\mathbf{B}, \mathbf{p})$ (Theorem \ref{thmintro:BL_ub}). The main ingredient is a  lower bound on the capacity of square completely positive operators with integer entries from \cite{GGOW}.

\begin{theorem}[\textbf{Capacity of Square Operators}, Theorem 2.18 in \cite{GGOW}]\label{lem:cap-square}
	Suppose $T_A$ is a completely positive operator which has positive capacity and has Kraus operators 
	$A_1, \ldots, A_m \in \cM_n(\Z)$. In this case:
	$$ \capac(T_A) \ge  \expon(-2 n \log(n)) $$
\end{theorem}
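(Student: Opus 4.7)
The plan is to combine invariant theory with the integrality of the Kraus operators.

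By Theorem~\ref{thm:cap_RND}, the hypothesis $\capac(T_A) > 0$ means $T_A$ is rank non-decreasing, which via the Hilbert--Mumford criterion translates to $(A_1, \ldots, A_m)$ lying \emph{outside} the null cone of the left--right action of $SL_n \times SL_n$ on $\mathcal{M}_n(\mathbb{C})^m$. Equivalently, some nonzero homogeneous polynomial invariant $P$ of this action satisfies $P(A_1, \ldots, A_m) \neq 0$. The classical Procesi--Formanek description of this invariant ring, combined with the recent polynomial degree bound of Derksen--Makam, allows us to choose $P$ of the explicit form $P(A) = \det\bigl(\sum_i D_i \otimes A_i\bigr)$ where $D_i \in \mathcal{M}_d(\mathbb{Z})$ with $d \leq n-1$ and entries of magnitude $\mathrm{poly}(n)$; in particular $\deg P = O(n^2)$.

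Given such a $P$, integrality of the $A_i$ forces $|P(A)| \geq 1$, since $P(A)$ is a nonzero integer. The heart of the proof is then converting this into a lower bound on $\capac(T_A)$. I would use the variational characterization $\capac(T_A) = \inf\{\det(T_A(X)) : X \succ 0,\ \det X = 1\}$ and build a test matrix $X$ out of the $D_i$ (for example, $X$ adapted to the SVD of the blown-up matrix $\sum_i D_i \otimes A_i$) so that $\det(T_A(X))$ can be bounded below by $|P(A)|^{2/d}$ up to a factor $\exp(O(n \log n))$ coming from the norms of the $D_i$ and the blow-up dimension. Combining with $|P(A)| \geq 1$ then yields the claimed bound $\capac(T_A) \geq \exp(-2n \log n)$.

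The main obstacle will be the second half: obtaining a quantitatively sharp translation from the nonzero-integer-invariant statement into the capacity bound. It is fairly easy to get \emph{some} explicit exponential lower bound this way, but pinning down the constant $2$ in the exponent $-2n \log n$ requires leveraging the tight polynomial (rather than exponential) degree bound of Derksen--Makam, together with a careful determinantal inequality (a tensor Cauchy--Binet expansion combined with log-concavity of the determinant) rather than a cruder AM--GM bound. As a sanity check, the naive alternative of directly running Gurvits' scaling algorithm and tracking the multiplicative change in capacity per normalization step appears circular without an initial capacity lower bound, which is exactly what the invariant-theoretic route above provides.
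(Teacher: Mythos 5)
The paper does not actually prove this statement; it imports it verbatim as Theorem 2.18 of \cite{GGOW}, and the introduction only hints that the proof ``requires degree bounds on the generating polynomial invariants in the null-cone of the quiver associated to the group action.'' Your route is exactly that invariant-theoretic one: positive capacity $\Rightarrow$ rank non-decreasing (Theorem~\ref{thm:cap_RND}) $\Rightarrow$ $(A_1,\dots,A_m)$ lies outside the null cone of the left--right action $\Rightarrow$ some blow-up determinant $P(A)=\Det\bigl(\sum_i D_i\otimes A_i\bigr)$ with $d\le n-1$ (Derksen--Makam) and integer $D_i$ of polynomially bounded entries (Schwartz--Zippel, since the degree in the $D_i$'s is $nd\le n^2$) is a nonzero integer, hence at least $1$ in absolute value. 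Two small repairs: the description of the semi-invariants of the left--right action as blow-up determinants is due to Derksen--Weyman, Domokos--Zubkov and Schofield--Van den Bergh, not Procesi--Formanek (whose theorem concerns the conjugation action); and to get a bound independent of $m$ you should first pass to a subtuple of at most $n^2$ of the $A_i$ spanning the same matrix space, which preserves rank non-decreasingness and only decreases the capacity since $\Det(T_A(X))\ge\Det(T_{A_S}(X))$.

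The genuine gap is in your conversion step. Capacity is an \emph{infimum} over $X\succ 0$ with $\Det(X)=1$, so ``building a test matrix $X$ out of the $D_i$'' with $\Det(T_A(X))$ large proves nothing about $\capac(T_A)$: a single witness can only certify a lower bound on the supremum, not on the infimum. The inequality has to be established for \emph{every} admissible $X$, and the standard argument runs in the opposite direction from yours: given an arbitrary $X\succ0$ with $\Det(X)=1$, set $Y=I_d\otimes X$ and $B=\sum_i D_i\otimes A_i$, so that $1\le|\Det(B)|^2=\Det(BYB^{\dagger})$. Writing $Z=BYB^{\dagger}=\sum_{i,j}D_iD_j^{\dagger}\otimes A_iXA_j^{\dagger}$ as a $d\times d$ array of $n\times n$ blocks, Fischer's inequality and log-concavity of the determinant give $\Det(Z)\le\Det\bigl(\tfrac1d\sum_k Z_{kk}\bigr)^d$, while the sum of diagonal blocks equals $\sum_{i,j}\tr(D_iD_j^{\dagger})A_iXA_j^{\dagger}\preceq\bigl(\sum_i\|D_i\|_F^2\bigr)T_A(X)$ because the Gram matrix of the $D_i$'s is dominated by its trace times the identity. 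Chaining these yields $\Det(T_A(X))\ge\bigl(d/\sum_i\|D_i\|_F^2\bigr)^{n}$ uniformly in $X$, hence $\capac(T_A)\ge\exp(-O(n\log n))$ once $|\Det(B)|\ge1$, $m\le n^2$, $d\le n-1$ and the entries of the $D_i$ are $\poly(n)$. So your architecture is the right one (and the scaling-algorithm alternative is indeed circular, as you note), but as written the key determinantal inequality is applied to the wrong side of the variational problem, and the sketch does not yet deliver the stated constant in the exponent.
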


We now derive the upper bound on the BL-constant from the lower bound on capacity above via our main reduction. We do so first for BL-data with integer entries (which is independent of their magnitude!), and then for general ones with rational entries.

\begin{theorem}[\textbf{Lower Bound on Brascamp-Lieb Capacity}]\label{thm:lb-bl-capac}
	Let $(\mathbf{B, p})$ be a Brascamp-Lieb datum, with $B_i \in \cM_{n_i \times n}(\Z)$
	and $p_i = c_i / d$, where $c_i, d \in \N$. Let $T_{(\mathbf{B, p})}$ be the Brascamp-Lieb operator, as defined 
	in Construction~\ref{cons:main_reduction}. If $T_{(\mathbf{B, p})}$ has positive capacity, 
	then the following bound holds:
	$$ \capac(T_{(\mathbf{B, p})})  \geq  \exp(-2n\log(n^2 d)).$$
\end{theorem}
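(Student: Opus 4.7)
The plan is to reduce the rectangular-operator case to the square-operator capacity bound of Theorem~\ref{lem:cap-square} via the squaring construction of Section~\ref{sec:rect}. Let $T = T_{(\mathbf{B, p})}: \cM_{nd}(\C) \to \cM_n(\C)$ be the Brascamp-Lieb operator of Construction~\ref{cons:main_reduction}. First, I would apply Construction~\ref{con:squaring-operator} to obtain the associated square operator $\widetilde{T}: \cM_{n^2 d}(\C) \to \cM_{n^2 d}(\C)$. By Lemma~\ref{lem:capacity-expansion}, $\capac(\widetilde{T}) = \capac(T)^{nd}$, and the hypothesis $\capac(T) > 0$ ensures $\capac(\widetilde{T}) > 0$ (alternatively via Corollary~\ref{cap>0:expanding}). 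Thus a lower bound on $\capac(\widetilde{T})$ yields one on $\capac(T)$ after taking an $(nd)$-th root.

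Second, I would exploit the integrality hypothesis on $\mathbf{B}$. By the last sentence of Lemma~\ref{lem:main_reduction}, the Kraus operators $A_j$ produced by Construction~\ref{cons:main_reduction} have integer entries. Inspecting Construction~\ref{con:squaring-operator}, the Kraus operators of $\widetilde{T}$ have the form (uniform scalar depending only on $n_1 = nd$) $\times$ (integer matrix built as $E_{i,j} \otimes A_k$), so rescaling $\widetilde{T}$ by an appropriate power of $nd$ produces an operator $\widehat{T}$ on $\cM_{n^2 d}(\C)$ whose Kraus operators are integer matrices. Theorem~\ref{lem:cap-square} in dimension $n^2 d$ then gives
$$ \capac(\widehat{T}) \ \geq \ \exp\bigl(-2 n^2 d \log(n^2 d)\bigr). $$

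Third, I would undo the rescaling to express $\capac(\widetilde{T})$ in terms of $\capac(\widehat{T})$, picking up a compensating polynomial factor in $nd$; taking the $(nd)$-th root via Lemma~\ref{lem:capacity-expansion} converts the leading exponent $-2n^2 d \log(n^2 d)$ into $-2n\log(n^2 d)$, while the rescaling correction, once divided by $nd$, becomes a subdominant term bounded by a multiple of $\log(nd) \le \log(n^2 d)$ that can be absorbed into the constant.

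The step I expect to take the most care is the bookkeeping of the normalization scalar in the second and third parts: it must be traced precisely so that after the $(nd)$-th root the exponent comes out as $-2n\log(n^2 d)$ rather than a strictly weaker bound. A potentially cleaner alternative to Construction~\ref{con:squaring-operator} is the direct dilation $T'(X) = I_d \otimes T(X)$, which gives a square operator on $\cM_{nd}(\C)$ whose Kraus operators are just $e_j \otimes K_k$ (already integer, with no scalar to clear), and applies Theorem~\ref{lem:cap-square} in dimension $nd$ instead of $n^2 d$; this variant can be used as an independent cross-check of the final constant.
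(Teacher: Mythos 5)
Your proposal follows the paper's proof essentially verbatim: square $T_{(\mathbf{B,p})}$ via Construction~\ref{con:squaring-operator}, use item 1 of Lemma~\ref{lem:capacity-expansion} to get $\capac(T_{(\mathbf{B,p})}) = \capac(\widetilde{T}_{(\mathbf{B,p})})^{1/nd}$, and invoke Theorem~\ref{lem:cap-square} in dimension $n^2d$. The one place you diverge is the $1/n_1$ scalar in the Kraus operators of $\widetilde{T}_{(\mathbf{B,p})}$: the paper applies Theorem~\ref{lem:cap-square} to $\widetilde{T}_{(\mathbf{B,p})}$ directly without comment, whereas you rescale to restore integrality. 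Your instinct there is reasonable, but the bookkeeping does not come out as cleanly as you hope: clearing a scalar $c$ from the Kraus operators of an operator on $\cM_{N}(\C)$ with $N = n^2d$ multiplies the capacity by $c^{2N}$, so after the $(nd)$-th root the correction is of the \emph{same} order as the main term (roughly another $-2n\log(nd)$ in the exponent), not subdominant; this yields a bound of the form $\exp(-cn\log(n^2d))$ for a larger constant $c$ (around $4$) rather than the stated $2$ --- same shape, weaker constant. Your alternative dilation $T'(X) = I_d \otimes T(X)$ has the same feature (it produces an exponent like $-2n\log n - 3n\log d$ after accounting for the $n_2/n_1$ normalization in the rectangular capacity), so either way the theorem holds with its stated form but possibly not its literal constant unless one tracks the normalization more carefully than either you or the paper does.
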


\begin{proof}
	According to Construction~\ref{cons:main_reduction}, we construct an operator 
	$T_{(\mathbf{B, p})} : \cM_{nd}(\C) \to \cM_n(\C)$ such that
	$\capac(T_{(\mathbf{B, p})}) = 1/\BL(\mathbf{B, p})^2$ and
	$T_{(\mathbf{B, p})}(X) = \dst\sum_{j=1}^{m'} A_j^{\dagger} X A_j$, where $m' \le nd$, 
	each $A_j$ has integer coordinates. From operator $T_{(\mathbf{B, p})}$, construct operator 
	$\widetilde{T}_{(\mathbf{B, p})} : \cM_{n^2d} \to \cM_{n^2d}$ according
	to Construction~\ref{con:squaring-operator}. Item 1 of 
	Lemma~\ref{lem:capacity-expansion} implies that 
	$$ \capac(T_{(\mathbf{B, p})}) = \capac(\widetilde{T}_{(\mathbf{B, p})})^{1/nd}. $$

	Moreover, item 1 of Lemma~\ref{lem:capacity-expansion}, together with the assumption
	that $T_{(\mathbf{B, p})}$ has positive capacity, implies that $\widetilde{T}_{(\mathbf{B, p})}$
	has positive capacity. Hence, Theorem~\ref{lem:cap-square} implies that 
	$$ \capac(\widetilde{T}_{(\mathbf{B, p})}) \ge \exp(-2 n^2 d \log(n^2 d)).$$
	Thus, we have:
	$$ \capac(T_{(\mathbf{B, p})}) = 
	\capac(\widetilde{T}_{(\mathbf{B, p})})^{1/nd} \ge \exp(-2n\log(n^2 d)). $$	
\end{proof}

From the theorem above we obtain the following upper bound on the Brascamp-Lieb 
constant. Note that we get a bound which is independent of the magnitude of the entries of $A_i$'s. This is possible because we restrict $A_i$'s to be integer matrices.

\begin{corollary}[\textbf{Upper Bound on Brascamp-Lieb Constant}]\label{cor:bl_upperbound}
	Let $(\mathbf{B, p})$ be a Brascamp-Lieb datum, with $B_i \in \cM_{n_i \times n}(\Z)$
	and $p_i = c_i / d$, where $c_i, d \in \N$. If $\BL(\mathbf{B, p}) < \infty$, then the following bound holds:
	$$ \BL(\mathbf{B, p}) \le \exp(4n\log(n^2 d)).$$
\end{corollary}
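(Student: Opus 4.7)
The plan is to derive Corollary~\ref{cor:bl_upperbound} as an essentially immediate consequence of the two results that immediately precede it, namely Lemma~\ref{lem:main_reduction}(1) (the identity $\capac(T_{(\mathbf{B,p})}) = 1/\BL(\mathbf{B,p})^2$) and Theorem~\ref{thm:lb-bl-capac} (the lower bound on the capacity of the Brascamp-Lieb operator). In other words, once the reduction from a BL datum to the Brascamp-Lieb operator has been set up and the capacity lower bound via the ``squaring'' construction has been established, the upper bound on $\BL(\mathbf{B,p})$ follows by taking reciprocals and square roots.

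First, I would verify that Theorem~\ref{thm:lb-bl-capac} actually applies to the datum at hand. The hypothesis of the theorem is that $T_{(\mathbf{B,p})}$ has positive capacity. But by Lemma~\ref{lem:main_reduction}(1), the assumption $\BL(\mathbf{B,p}) < \infty$ (given in the corollary) translates exactly into $\capac(T_{(\mathbf{B,p})}) = 1/\BL(\mathbf{B,p})^2 > 0$, so the hypothesis of Theorem~\ref{thm:lb-bl-capac} is satisfied.

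Next, I would invoke Theorem~\ref{thm:lb-bl-capac} directly to obtain
$$\capac(T_{(\mathbf{B,p})}) \;\ge\; \exp\bigl(-2n\log(n^2 d)\bigr).$$
Substituting the identity from Lemma~\ref{lem:main_reduction}(1) gives
$$\frac{1}{\BL(\mathbf{B,p})^2} \;\ge\; \exp\bigl(-2n\log(n^2 d)\bigr),$$
and taking reciprocals and square roots yields
$$\BL(\mathbf{B,p}) \;\le\; \exp\bigl(n\log(n^2 d)\bigr) \;\le\; \exp\bigl(4n\log(n^2 d)\bigr),$$
which is the claimed bound (with a comfortable factor-of-four slack).

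There is no real obstacle here; all the work has already been front-loaded into the reduction of Construction~\ref{cons:main_reduction}, the capacity identity of Lemma~\ref{lem:main_reduction}, the rectangular-to-square reduction of Construction~\ref{con:squaring-operator} together with Lemma~\ref{lem:capacity-expansion}, and the degree/capacity bound of Theorem~\ref{lem:cap-square} imported from~\cite{GGOW}. The only thing to emphasize in the write-up is that the bound is \emph{independent of the magnitude of the entries of $\mathbf{B}$}; this is inherited from Theorem~\ref{lem:cap-square}, whose capacity lower bound $\exp(-2n\log n)$ likewise depends only on the dimension (and not on the size of the integer entries of the Kraus operators). Accordingly, the corollary as stated restricts to \emph{integer} $B_i$, leaving the rational case to a subsequent corollary that clears denominators to reduce to this one.
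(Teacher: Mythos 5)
Your proposal is correct and is exactly the derivation the paper intends (it states the corollary with no written proof precisely because it follows from Theorem~\ref{thm:lb-bl-capac} and Lemma~\ref{lem:main_reduction}(1) by taking reciprocals and square roots, as you do). Your sharper intermediate bound $\exp(n\log(n^2 d))$ indeed implies the stated $\exp(4n\log(n^2 d))$, and your observation that positivity of capacity follows from $\BL(\mathbf{B,p}) < \infty$ via the identity $\capac(T_{(\mathbf{B,p})}) = 1/\BL(\mathbf{B,p})^2$ correctly discharges the hypothesis of Theorem~\ref{thm:lb-bl-capac}.
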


As a corollary, we can also get an exponential bound that is independent of $d$. 

\begin{corollary}\label{cor:bl_upperbound_ind}
	Let $(\mathbf{B, p})$ be a Brascamp-Lieb datum, with $B_i \in \cM_{n_i \times n}(\Z)$
	and $p_1,\ldots,p_m$ arbitrary positive reals. If $\BL(\mathbf{B, p}) < \infty$, then the following bound holds:
	$$ \BL(\mathbf{B, p}) \le \exp(12nm\log(mn)).$$
\end{corollary}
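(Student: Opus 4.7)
The plan is to reduce to Corollary \ref{cor:bl_upperbound} by writing $\mathbf{p}$ as a convex combination of vertices of the BL polytope and exploiting that $\log \BL(\mathbf{B}, \cdot)$ is convex on the feasible region.

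The main step is the convexity claim: $\log \BL(\mathbf{B}, \mathbf{p})$ is a convex function of $\mathbf{p}$ on $P_\mathbf{B}$. By Lieb's formula (Theorem \ref{BLconstant}),
\[ 2\log \BL(\mathbf{B}, \mathbf{p}) \;=\; \sup_{X_j \succ 0} \left[\,\sum_j p_j \log \det X_j \;-\; \log \det\!\left(\sum_j p_j B_j^\dagger X_j B_j\right)\right]. \]
For each fixed tuple $\mathbf{X}=(X_1,\ldots,X_m)$ of positive definite matrices, the first term is linear in $\mathbf{p}$; the second equals $-\log\det L(\mathbf{p})$ with $L(\mathbf{p}) = \sum_j p_j M_j$ and $M_j = B_j^\dagger X_j B_j$ fixed PSD matrices. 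Since $\log\det$ is concave on the PSD cone and $L$ is linear in $\mathbf{p}$, the second term is convex in $\mathbf{p}$. Summing with a linear term preserves convexity, and a supremum of convex functions is convex, so the claim follows.

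Next, since $\BL(\mathbf{B}, \mathbf{p}) < \infty$, Corollary \ref{cor:BCCT} places $\mathbf{p}$ inside the BL polytope $P_\mathbf{B} \subseteq \R^m$. This polytope is defined by the equality $\sum_j p_j n_j = n$, the non-negativity constraints $p_j \geq 0$, and at most $n^m$ dimension inequalities, each with integer coefficients of absolute value at most $n$. Moreover $P_\mathbf{B}$ is bounded (the equality forces $p_j \leq n/n_j \leq n$), so $\mathbf{p}$ is a convex combination $\mathbf{p} = \sum_k \lambda_k \mathbf{v}_k$ of vertices of $P_\mathbf{B}$. Each vertex is the unique solution of $m$ linearly independent active constraints with integer coefficients bounded by $n$, so Cramer's rule together with Hadamard's inequality bounds the common denominator of $\mathbf{v}_k$ by $d_k \leq (n\sqrt{m})^m$.

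Finally, applying Corollary \ref{cor:bl_upperbound} at each vertex (which is feasible since it lies in $P_\mathbf{B}$) and invoking convexity,
\[ \log \BL(\mathbf{B}, \mathbf{p}) \;\leq\; \max_k \log \BL(\mathbf{B}, \mathbf{v}_k) \;\leq\; 4n\,\log\!\bigl(n^2 \cdot (n\sqrt{m})^m\bigr), \]
which simplifies to at most $12nm\log(mn)$ after expanding the logarithm. I expect the only conceptual step to be the convexity argument; the remainder is a standard vertex-counting argument for integer polytopes.
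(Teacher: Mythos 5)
Your proof is correct and follows essentially the same route as the paper: log-convexity of $\log\BL(\mathbf{B},\cdot)$ in $\mathbf{p}$ reduces the bound to the vertices of $P_\mathbf{B}$, whose common denominators are bounded by a quantity of size $\exp(O(m\log(mn)))$ and then plugged into Corollary~\ref{cor:bl_upperbound}. The only difference is that you prove the log-convexity directly from Lieb's formula (as a supremum over $\mathbf{X}$ of functions that are linear-plus-convex in $\mathbf{p}$), whereas the paper cites an external reference for this fact; your self-contained argument is valid.
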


\begin{proof}
It is proven in \textnormal{\cite{CDKSY15}} that $\log(\BL(\mathbf{B, p}))$ is convex in $\mathbf{p}$. This can also be seen easily from the entropy formulation of Brascamp-Lieb inequalities \cite{Carlen09, Liu16}. Thus to upper bound, $\BL(\mathbf{B, p})$, it is enough to look at the vertices of the BL polytope $P_{\mathbf{B}}$ (from the log-convexity, it follows that if $\BL(\mathbf{B, p}) < \infty$ for some $\mathbf{p}$, then the maximum finite value will be achieved at the vertices of $P_{\mathbf{B}}$).  It is not hard to see that the vertices of the BL polytope $P_{\mathbf{B}}$ have common denominator $d \le m! \cdot n^m$. Plugging this into Corollary \ref{cor:bl_upperbound} gives us the desired bound. 
\end{proof}

From the bound on the BL constant for integer datum, we can also get a bound for arbitrary rational datum in terms of the bit sizes (which is a restatement of Theorem \ref{thmintro:BL_ub} in the introduction.).

\begin{corollary} \label{cor:bl_upperbound_rational}
	Let $(\mathbf{B, p})$ be a Brascamp-Lieb datum, with $B_i \in \cM_{n_i \times n}(\Q)$
	and $p_i = c_i / d$, where each entry of $B_i$ has bit-size at most $\widetilde{b}$ and $c_i, d \in \N$. If $\BL(\mathbf{B, p}) < \infty$, then the following bound holds:
	$$ \BL(\mathbf{B, p}) \le \exp(4n\log(n^2 d) + n \widetilde{b}).$$
\end{corollary}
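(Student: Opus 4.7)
The plan is to reduce the rational case to the already-established integer case (Corollary~\ref{cor:bl_upperbound}) by clearing denominators and then tracking the effect of this scaling on the BL constant via Proposition~\ref{basis-change}. All the machinery is essentially in place; the proof amounts to carefully collecting factors.

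First I would, for each $i \in [m]$, choose a positive scalar $\lambda_i$ such that $B_i' := \lambda_i B_i$ is an integer matrix. The natural choice is the common denominator of the entries of $B_i$, for which the hypothesis that each entry has bit-size at most $\widetilde{b}$ gives $\log_2 \lambda_i \le \widetilde{b}$. The new datum $(\mathbf{B'},\mathbf{p})$ has the same exponent vector and is still feasible, since multiplying each $B_i$ by a nonzero scalar does not affect the dimension inequalities of Theorem~\ref{feasibility}. Next, I would apply Proposition~\ref{basis-change} with $C = I_n$ and $C_i = \lambda_i^{-1} I_{n_i}$, noting that $C_i^{-1} B_i C = \lambda_i B_i = B_i'$, $\det(C) = 1$ and $\det(C_i) = \lambda_i^{-n_i}$, to obtain
$$ \BL(\mathbf{B'},\mathbf{p}) = \prod_{j=1}^m \lambda_j^{-n_j p_j}\,\BL(\mathbf{B},\mathbf{p}), \qquad \text{equivalently} \qquad \BL(\mathbf{B},\mathbf{p}) = \prod_{j=1}^m \lambda_j^{n_j p_j}\,\BL(\mathbf{B'},\mathbf{p}). $$

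Then I would invoke Corollary~\ref{cor:bl_upperbound} on the integer datum $(\mathbf{B'},\mathbf{p})$, which is applicable since $B_i' \in \cM_{n_i \times n}(\Z)$ and $\BL(\mathbf{B'},\mathbf{p}) < \infty$. Combined with the feasibility identity $\sum_j p_j n_j = n$ (the first condition of Theorem~\ref{feasibility}) and the bound $\lambda_j \le 2^{\widetilde{b}}$, this yields
$$ \BL(\mathbf{B},\mathbf{p}) \;\le\; \Bigl(\prod_{j=1}^m \lambda_j^{n_j p_j}\Bigr)\,\exp\bigl(4n\log(n^2 d)\bigr) \;\le\; 2^{\widetilde{b}\sum_j n_j p_j}\,\exp\bigl(4n\log(n^2 d)\bigr) \;\le\; \exp\bigl(n\widetilde{b} + 4n\log(n^2 d)\bigr), $$
which is the desired inequality.

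There is no serious obstacle here: the argument is bookkeeping once the scaling invariance of $\BL$ and the integer case are in hand. The only mildly delicate point is the choice of $\lambda_i$ and the convention for per-entry bit-size. Under the natural reading in which each $B_i$ is stored as $\lambda_i^{-1} M_i$ with $M_i$ an integer matrix and $\log_2 \lambda_i \le \widetilde{b}$, the estimate $\log_2 \lambda_i \le \widetilde{b}$ is immediate; otherwise one may have to absorb a factor into $\widetilde{b}$, but the structure of the proof is unchanged. The key identity making the bound come out as $n\widetilde{b}$ rather than something larger is precisely $\sum_j n_j p_j = n$, which is exactly what allows the per-matrix scalings to combine into a single factor of $2^{n\widetilde{b}}$.
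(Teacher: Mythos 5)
Your proposal is correct and follows essentially the same route as the paper: clear denominators to reduce to the integer case of Corollary~\ref{cor:bl_upperbound}, then track the effect of the scaling on $\BL$ via Proposition~\ref{basis-change} together with the identity $\sum_j p_j n_j = n$. The only cosmetic difference is that the paper uses the single uniform scalar $2^{\widetilde{b}}$ for all the $B_i$ (so the factor $2^{n\widetilde{b}}$ appears immediately), whereas you allow per-matrix scalars $\lambda_i$ and recombine them at the end; the substance is identical.
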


\begin{proof}
Note that $\widetilde{B}_i = 2^{\widetilde{b}} B_i$ are integer matrices. Also it is easy to see that $\BL(\mathbf{B,p}) = \BL(\mathbf{\widetilde{B},p}) \cdot 2^{n \widetilde{b}}$. Then the desired bound follows from Corollary \ref{cor:bl_upperbound}.
\end{proof}

\begin{remark}
As before, $d$ can be eliminated from the above corollary if needed but the bound in the exponential would become quadratic.
\end{remark}

\section{Computing the Brascamp-Lieb constant}\label{sec:compute_bl}

In this section we show how Algorithm G with truncation, described in~\cite{gurvits2004, GGOW}, 
computes the Brascamp-Lieb constant up to a multiplicative factor of $(1 \pm \epsilon)$.
We begin by restating Theorem 3.5 from~\cite{GGOW}, which states that we can approximate
the capacity of a completely positive operator.

\begin{theorem}[Theorem 3.5 from~\cite{GGOW}]\label{thm:computing-capacity-truncated} 
Let $T$ be a completely positive operator, whose Kraus 
operators are given by $n \times n$ matrices $A_1, \ldots, A_m \in \cM_{n}(\Z)$, 
such that each entry of $A_i$ has absolute value at most $M$.  
Algorithm G, with truncation parameter 
$$
P(n, 1/\epsilon, \log(M)) = \frac{1}{\epsilon} \cdot (n^{12}\log^4(Mn)) \cdot \log(n^4/\epsilon^2)
$$ 
when applied for $t = \dfrac{4n^3}{\eps^2} \cdot \left(1 + 10n^2\log(Mn) \right)$ 
steps approximates $\capac(T)$ within a multiplicative factor of $1 \pm \epsilon$. 
\end{theorem}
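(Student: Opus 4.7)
The approach is to treat $\capac(T)$ itself as the potential function driving Algorithm G and to analyze how it evolves, step by step, along the alternating trajectory $T_0, T_1, \ldots, T_t$. The key structural fact is the scaling identity $\capac(T_{B,C}) = |\Det(B)|^2|\Det(C)|^2 \capac(T)$, from which one reads off that a left normalization multiplies the capacity by $\Det(T(I))^{-1}$ and a right normalization multiplies it by $\Det(T^*(I))^{-1}$. After the first normalization, the operator is trace preserving in one direction, so $\tr(T(I)) = n$ and AM-GM gives $\Det(T(I)) \leq 1$; thus $\capac$ is non-decreasing along the trajectory and is bounded above by $1$ after step one. This makes $\log\capac$ a monotone bounded potential, and the telescoping identity $\capac(T_i) = \capac(T_0) \prod_j \Det(M_j)^{-1}$ shows that tracking the normalization factors is enough to recover $\capac(T_0)$.

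With this setup, the plan is a three-part argument. First, bound the total range of $\log\capac$ along the trajectory: the lower bound $\capac(T_0) \geq \exp(-2n\log n)$ is immediate from Theorem~\ref{lem:cap-square} (using the integer Kraus operators and the assumption that $\capac(T_0) > 0$), while an elementary Frobenius-norm estimate on $T(I/n)$ gives $\capac(T_0) \leq (mM^2)^n$, so the log-range is $O(n \log(Mn))$. Second, prove a progress-per-iteration lemma: if $T_i$ is right-normalized with $\ds(T_i) \geq \delta$, a quantitative (robust) AM-GM bound on the eigenvalues of $T_i(I)$ (which average to $1$ and deviate in squared-norm by at least $\delta$) gives $\Det(T_i(I)) \leq 1 - \Omega(\delta / n)$, hence $\log(\capac(T_{i+1})/\capac(T_i)) = \Omega(\delta / n)$. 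Dividing the log-range by this per-step gain bounds the iteration count needed until some $T_i$ satisfies $\ds(T_i) \leq \delta$, and choosing $\delta = \Theta(\epsilon^2/n^2)$ yields $t = O(n^3 \log(Mn)/\epsilon^2)$ as stated. Third, a standard continuity argument shows that $\ds(T_i)$ sufficiently small implies $\capac(T_i) \in [1-\epsilon, 1]$, which combined with the telescoping identity gives a multiplicative $(1\pm\epsilon)$-approximation to $\capac(T_0)$ from the recorded product of determinants.

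The main obstacle is the truncation analysis, since $T_i(I)^{-1/2}$ is generically irrational and must be rounded to $P$ bits before the next step. Here one must certify that the truncated trajectory remains close enough to the ideal one that both the progress lemma and the telescoping identity still deliver an $\epsilon$-approximation at the end. The right way to do this is a Lipschitz analysis of a single normalization step viewed as a map on Kraus operators: perturbing the inputs by $2^{-P}$ perturbs the outputs by at most $2^{-P} \cdot \textnormal{poly}(n, \|T_i\|)$, and a crude but uniform norm bound on $\|T_i\|$ can be maintained along the trajectory (because a large norm would force some $\Det(M_j)^{-1}$ to be large, contradicting the $\capac \leq 1$ ceiling). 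Composing these Lipschitz estimates over all $t$ iterations and demanding that the accumulated error in $\log\capac$ remain below $\epsilon$ forces $P = \textnormal{poly}(n, \log M, 1/\epsilon)$, matching the stated truncation parameter. I would expect this error-propagation bookkeeping, rather than the clean potential argument, to be the technically most involved part of a complete write-up; the potential part is essentially Sinkhorn-style, while the truncation part is what makes the algorithm a polynomial-time numerical procedure rather than merely a convergent iteration.
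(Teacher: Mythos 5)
Your reconstruction is correct and follows essentially the same route as the source: Theorem~\ref{thm:computing-capacity-truncated} is imported from \cite{GGOW} without proof here, but the strategy the paper attributes to it (Theorem~\ref{analysis}) and the analogous untruncated analysis carried out in Section~\ref{sec:irrational} use exactly your scheme --- capacity as a monotone potential, bounded below via the integrality bound of Theorem~\ref{lem:cap-square} and above by $1$ once normalized, per-step progress from a robust AM--GM inequality (Lemma~\ref{lem:AM-GM_robust}), and a continuity step converting small $\ds(T_i)$ into capacity near $1$, followed by the telescoping determinant identity. Two minor remarks: the sharper progress bound $\Det(T_i(I)) \le \exp(-\delta/6)$, rather than your $1-\Omega(\delta/n)$, is what keeps the iteration count within the stated $t$, and the uniform norm control needed for your truncation/Lipschitz argument comes more directly from the fact that right normalization forces $\sum_i A_i^\dagger A_i = I$ (hence $\|A_i\| \le 1$) than from the capacity ceiling.
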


Theorem~\ref{thm:computing-capacity-truncated} tells us how to approximate the capacity of an operator.
Construction~\ref{cons:main_reduction} gives us an operator whose capacity is the inverse square of 
the Brascamp-Lieb constant. Hence, to approximate the Brascamp-Lieb constant corresponding
to datum $(\mathbf{B, p})$, we need only obtain operator $T_{(\mathbf{B, p})}$ and use algorithm
$G$ to compute an approximation to $\capac(T_{(\mathbf{B, p})})$. Then, 
$\BL(\mathbf{B, p}) = \left( \dfrac{1}{\capac(T_{(\mathbf{B, p})})} \right)^{1/2}$.
This yields the following algorithm, and theorem (from which Theorem \ref{constant-alg} in the introduction follows
by a simple scaling argument):

\begin{Algorithm}
\textbf{Input}: Brascamp-Lieb datum $(\mathbf{B, p})$, where each $p_i = c_i/d$, with
$c_i, d \in \mathbb{N}$, $d \neq 0$, $B_i \in \cM_{n_i \times n}(\Z)$ and approximation parameter $\epsilon > 0$. 
Assume that each entry of $B_i$ has absolute value at most $M$. \\
\textbf{Output}: $\BL(\mathbf{B, p})$ with multiplicative error of $(1\pm\epsilon)$. 

\begin{enumerate}
	\item Check if $n = \dst\sum_{i=1}^m p_i n_i$. If yes, go to step 2. If not, return $\infty$.
	\item Let $T_{(\mathbf{B, p})}$ be the operator constructed in Construction~\ref{cons:main_reduction}.
		Let $\widetilde{T}_{(\mathbf{B, p})}$ be the operator obtained from $T_{(\mathbf{B, p})}$
		via Construction~\ref{con:squaring-operator}. Go to step 3.
	\item Apply Algorithm $G$ on $\widetilde{T}_{(\mathbf{B, p})}$, with approximation
	parameter $\epsilon$. Let $\alpha$ be the output of Algorithm $G$.
	\item Return $(1/\alpha)^{1/2nd}$.
\end{enumerate}
\caption{Algorithm $\widetilde{G}$ computing the BL constant}
\label{bl_alg}
\end{Algorithm}

\begin{theorem}[\textbf{Approximating the Brascamp-Lieb Constant}]\label{thm:computing_bl}
		Given a Brascamp-Lieb datum $(\mathbf{B, p})$, where $p_i = c_i/d$, with  
		$c_i, d \in \mathbb{N}$, $d \neq 0$, $B_i \in \cM_{n_i \times n}(\Z)$ and approximation parameter 
		$\epsilon > 0$. Assume that each entry of $B_i$ has absolute value at most $M$.
		Then Algorithm $\widetilde{G}$ runs in time $\poly(n, d, \log M, 1/\epsilon)$ 
		and computes $\BL(\mathbf{B, p})$ to a 
		multiplicative approximation factor of $(1\pm \epsilon)$.
\end{theorem}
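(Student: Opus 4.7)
The plan is to chain together the three reductions developed earlier in the paper. First, given a Brascamp-Lieb datum $(\mathbf{B,p})$, Construction~\ref{cons:main_reduction} produces an operator $T_{(\mathbf{B,p})} : \cM_{nd}(\C) \to \cM_n(\C)$ whose Kraus operators $A_1,\ldots,A_{m'}$ are integer matrices of size $n \times nd$ with entries bounded by $M$ in absolute value, and Lemma~\ref{lem:main_reduction} certifies that
$$ \capac(T_{(\mathbf{B,p})}) = \frac{1}{\BL(\mathbf{B,p})^2}.$$
Second, because $T_{(\mathbf{B,p})}$ is rectangular while Theorem~\ref{thm:computing-capacity-truncated} is stated for square operators, I would apply Construction~\ref{con:squaring-operator} to obtain the square operator $\widetilde{T}_{(\mathbf{B,p})} : \cM_{n^2 d}(\C) \to \cM_{n^2 d}(\C)$, for which item~1 of Lemma~\ref{lem:capacity-expansion} gives
$$ \capac(\widetilde{T}_{(\mathbf{B,p})}) = \capac(T_{(\mathbf{B,p})})^{nd} = \frac{1}{\BL(\mathbf{B,p})^{2nd}}. $$
Third, I would feed a suitable integer representative of $\widetilde{T}_{(\mathbf{B,p})}$ to Algorithm~G and invoke Theorem~\ref{thm:computing-capacity-truncated} with accuracy parameter $\epsilon$; this returns a number $\alpha$ approximating $\capac(\widetilde{T}_{(\mathbf{B,p})})$ to within a multiplicative factor of $(1 \pm \epsilon)$.

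The output of Algorithm~$\widetilde{G}$ is then $(1/\alpha)^{1/(2nd)}$. A direct computation gives
$$(1/\alpha)^{1/(2nd)} \in \BL(\mathbf{B,p}) \cdot \bigl[(1+\epsilon)^{-1/(2nd)},\, (1-\epsilon)^{-1/(2nd)}\bigr],$$
and for $\epsilon \leq 1/2$ both endpoints lie within a factor of $1 \pm \epsilon/(nd)$ of $1$. Hence the approximation of $\BL(\mathbf{B,p})$ that Algorithm~$\widetilde{G}$ returns is in fact much better than $(1 \pm \epsilon)$. For the runtime, Theorem~\ref{thm:computing-capacity-truncated} applied to a square operator of dimension $n^2 d$ with integer Kraus operators of magnitude at most $M$ runs in time $\poly(n^2 d, \log M, 1/\epsilon) = \poly(n, d, \log M, 1/\epsilon)$, which dominates the polynomial-time preprocessing costs of the two constructions.

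The main obstacle is not an intellectual one, but the careful bookkeeping of scalar factors. In particular, Construction~\ref{con:squaring-operator} yields Kraus operators of the form $\frac{1}{nd}(E_{i,j} \otimes A_k)$, which are not integer and hence not directly admissible by Theorem~\ref{thm:computing-capacity-truncated}. I would handle this by feeding Algorithm~G the rescaled operator $T' = (nd)^2 \widetilde{T}_{(\mathbf{B,p})}$, whose Kraus operators $E_{i,j} \otimes A_k$ are integer matrices of size $n^2 d \times n^2 d$ with entries bounded by $M$, and then converting back via the exactly known identity $\capac(T') = (nd)^{2 n^2 d} \cdot \capac(\widetilde{T}_{(\mathbf{B,p})})$ obtained from the homogeneity of the determinant in dimension $n^2 d$. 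Once this rescaling factor and the exponent $1/(2nd)$ are tracked through, the theorem follows immediately from Lemma~\ref{lem:main_reduction}, Lemma~\ref{lem:capacity-expansion}, and Theorem~\ref{thm:computing-capacity-truncated}.
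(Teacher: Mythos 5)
Your proposal is correct and follows essentially the same route as the paper's proof: reduce via Construction~\ref{cons:main_reduction}, square via Construction~\ref{con:squaring-operator}, run Algorithm~G, and return $(1/\alpha)^{1/(2nd)}$, with the same error and runtime accounting. Your explicit handling of the non-integer scalar $\frac{1}{nd}$ in the Kraus operators of $\widetilde{T}_{(\mathbf{B,p})}$ via the rescaling $T' = (nd)^2\,\widetilde{T}_{(\mathbf{B,p})}$ and the identity $\capac(T') = (nd)^{2n^2d}\capac(\widetilde{T}_{(\mathbf{B,p})})$ is a detail the paper glosses over, and your fix is correct.
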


\begin{proof}
	If $n \neq \dst\sum_{i=1}^m p_i n_i$ then we know that $\BL(\mathbf{B, p}) = \infty$. 
	Otherwise we correctly proceed.
	In step 2, we construct the operator $\widetilde{T}_{(\mathbf{B, p})}$ according to
	Construction~\ref{con:squaring-operator}. By item 1 of Lemma~\ref{lem:capacity-expansion}
	we have that $\capac(T_{(\mathbf{B, p})}) = \capac(\widetilde{T}_{(\mathbf{B, p})})^{1/nd}$.
	By Lemma~\ref{lem:main_reduction}, we have $\capac(T_{(\mathbf{B, p})}) = 1/\BL(\mathbf{B, p})^2$.
	Thus, 
	$$ \BL(\mathbf{B, p}) = (1/\capac(\widetilde{T}_{(\mathbf{B, p})}))^{1/2nd}. $$
	As step 3 returns $\alpha$ in the interval 
	$[(1- \eps) \cdot \capac(\widetilde{T}_{(\mathbf{B, p})}), \capac(\widetilde{T}_{(\mathbf{B, p})})]$
	we have that 
	$$ (1/\alpha)^{1/2nd} \in 
	[\capac(\widetilde{T}_{(\mathbf{B, p})})^{1/2nd}, 
	(1+\eps) \capac(\widetilde{T}_{(\mathbf{B, p})})^{1/2nd}] = 
	[\BL(\mathbf{B, p}), (1+\eps)\BL(\mathbf{B, p})]. $$ 
	Since algorithm $\wt{G}$ only uses once algorithm $G$ on
	an operator of dimension $nd \times nd$ with entries bounded by $M$, the running time of
	algorithm $\wt{G}$ is given by $\poly(n, d, \log(M), 1/\epsilon)$. 
	This concludes the analysis of algorithm $\wt{G}$. 
\end{proof}

\section{Continuity of the Brascamp-Lieb constant}\label{sec:cont_bl}

In this section, we note that the explicit bounds on the continuity of capacity of completely positive operators (at rational points) in \cite{GGOW} directly translate to bounds on the continuity of Brascamp-Lieb constants (at rational points). We will then discuss how this can be extended to irrational points.

The following theorem is proved in \cite{GGOW}. 

\begin{theorem}[Theorem 4.5 in \cite{GGOW}]\label{capacity_continuity}
Suppose $A_1,\ldots,A_m$ and $B_1,\ldots,B_m$ are two tuples of $n \times n$ matrices such that the bit-complexity of elements of $A_i$'s is $b$ and $||A_i - B_i|| \le \delta$ for all $i$. Let $T_A$ be the operator defined by $A_1,\ldots,A_m$ and $T_B$ be the operator defined by  $B_1,\ldots,B_m$. Then there exists a polynomial $P(n, b, \log(m))$ s.t. if $\delta \le \expon(-P(n,b, \log(m)))$ and $\capac(T_A) > 0$, then $\capac(T_B) > 0$. Furthermore
$$
\left(1 - \frac{P(n,b, \log(m))}{\log(1/\delta)^{1/3}} \right) \le \frac{\capac(T_A)}{\capac(T_B)} \le \left( 1 + \frac{P(n,b, \log(m))}{\log(1/\delta)^{1/3}} \right)
$$
\end{theorem}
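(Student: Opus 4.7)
The plan is to use Algorithm $G$ itself as the bridge between $T_A$ and $T_B$. The key observation is that each iteration of the scaling algorithm produces left- and right-normalization matrices that are simple algebraic functions (matrix inverse square roots of $T(I)$ or $T^*(I)$) of the current Kraus operators, so if two operators start close, applying the \emph{same} normalization matrices to both should keep them close throughout. Combined with the capacity scaling formula $\capac(L T(R\cdot R^\dagger)L^\dagger) = |\det L|^2 |\det R|^2 \capac(T)$, this lets us compare $\capac(T_A)$ to $\capac(T_B)$ by comparing their scaled versions in a near-doubly-stochastic regime where capacity is explicitly close to $1$.

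First, I would run Algorithm $G$ on $T_A$ for $t = \poly(n,b,\log m, 1/\eps')$ steps to obtain accumulated scaling matrices $L, R$ such that the scaled operator $T_A'(X) = L\, T_A(R X R^\dagger) L^\dagger$ satisfies $\ds(T_A') \le \eps'$. The polynomial step count comes from the capacity lower bound $\capac(T_A) \ge \exp(-O(nb))$ for integer operators with positive capacity (Theorem~\ref{lem:cap-square}) together with the per-step multiplicative progress $\capac(T_{i+1})/\capac(T_i) \ge 1 + \Omega(\ds(T_i))$. The same analysis yields norm bounds $\|L\|,\|R\|,\|L^{-1}\|,\|R^{-1}\| \le \exp(Q(n,b,\log m)/\eps'^{\,2})$ for some polynomial $Q$, since the normalization factors at each step, $T(I)^{-1/2}$ and $T^*(I)^{-1/2}$, are controlled by $\capac$, which is bounded away from $0$ throughout.

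Next, apply the identical scaling matrices $L,R$ to $T_B$ to obtain $T_B'$, whose Kraus operators are $L B_i R$. Entry-wise, $\|L A_i R - L B_i R\| \le \|L\|\|R\|\delta \le \exp(2Q/\eps'^{\,2})\,\delta$. Thus if $\delta \le \exp(-3Q/\eps'^{\,2})$ the Kraus operators of $T_A'$ and $T_B'$ differ in norm by at most $\eps'$, which by direct expansion of $T(I)$ and $T^*(I)$ gives $\ds(T_B') \le O(\eps')$. The final quantitative ingredient is a local smoothness lemma for capacity: using Lieb's characterization together with the AM-GM identity $\det(X) \le 1 - \|X-I\|_F^{\,2}/2 + o(\|X-I\|^2)$ for $\tr(X)=n$, one shows $1 - O(\eta^{1/2}) \le \capac(S) \le 1$ whenever $\ds(S) \le \eta$ (the upper bound is Theorem~\ref{geometric}'s operator analog, the lower bound follows by plugging $X = I$ into the capacity infimum after a perturbation argument). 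Applying this to both $T_A'$ and $T_B'$ and using that the $|\det L|^2 |\det R|^2$ factors cancel in the ratio $\capac(T_A)/\capac(T_B) = \capac(T_A')/\capac(T_B')$, we conclude that this ratio lies in $1 \pm O(\eps'^{\,1/2})$.

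The main obstacle is the tight coordination of three competing quantities: the number of steps $t$ of the scaling algorithm (which grows like $1/\eps'^{\,2}$), the magnitude of the accumulated scaling matrices (which grows exponentially in $t$, hence like $\exp(\poly/\eps'^{\,2})$), and the admissible perturbation $\delta$ (which must be small enough that $\|L\|\|R\|\delta$ is still $O(\eps')$). Solving $\log(1/\delta) \gtrsim Q(n,b,\log m)/\eps'^{\,2}$ for $\eps'$ gives $\eps' \gtrsim P(n,b,\log m)/\log(1/\delta)^{1/2}$, and the final quadratic loss in the capacity smoothness lemma above then degrades the exponent to $1/3$ as in the theorem. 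I would expect the cleanest way to execute this to be through a single joint induction that tracks, at every scaling step $i$, both the progress measure $\ds(T_A^{(i)})$ and the distance $\|T_A^{(i)} - T_B^{(i)}\|$ in Kraus norm; this avoids ever needing to bound $\|L\|,\|R\|$ in isolation, which is both cleaner and where the $1/3$ exponent most naturally arises.
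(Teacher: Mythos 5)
Your overall strategy is the right one and is essentially the argument behind Theorem 4.5 of \cite{GGOW} (which this paper only cites; compare the sketch in Section \ref{sec:cont_bl} for the irrational case): use the scaling algorithm to produce scaling matrices of controlled magnitude that bring $T_A$ close to doubly stochastic, apply the \emph{same} scalings to $T_B$, note that the $|\det L|^2|\det R|^2$ factors cancel in the ratio of capacities, and finish with a local estimate saying that small $\ds$ forces capacity close to $1$. The three-way trade-off you identify (number of steps, accumulated scaling norm, admissible $\delta$) is indeed where the $\log(1/\delta)^{-1/3}$ shape of the bound comes from.

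Two points need repair. First, the key local estimate $\capac(S)\ge 1-O(\ds(S)^{1/2})$ cannot be obtained ``by plugging $X=I_n$ into the capacity infimum'': substituting any particular $X$ into an infimum only yields the \emph{upper} bound $\capac(S)\le\Det(S(I))\le 1$, which is the easy direction. The lower bound requires showing $\Det(S(X))\ge\bigl(1-O(\sqrt{n\,\ds(S)})\bigr)\Det(X)$ for \emph{every} $X\succ 0$; this is precisely Lemma 3.8 of \cite{GGOW} and is the substantive analytic ingredient of the whole argument, proved there by a separate argument rather than by perturbing the single point $X=I$. As written, your proposal assumes the hard half of the lemma it needs. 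Second, your own bookkeeping ($\eps'\gtrsim P/\log(1/\delta)^{1/2}$ followed by a square-root loss in the local lemma) delivers an error of order $\log(1/\delta)^{-1/4}$, not $\log(1/\delta)^{-1/3}$; this still proves continuity, but it does not literally match the stated exponent, so the balance among $t$, $\|L\|\|R\|$, and the local lemma must be redone to recover $1/3$. A minor further point: to pass from $\|L(A_i-B_i)R\|\le\eps'$ to $\ds(T_B')=O(\eps')$ you also need $\|LA_iR\|\le 1+O(\eps')$ (which follows from $T_A'$ being nearly doubly stochastic) and you must absorb the factor of $m$ from summing over Kraus operators; harmless, but it should be said.
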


From this we get the following theorem (continuity of BL constant at rational points) as an immediate corollary, due to Construction \ref{cons:main_reduction}.

\begin{theorem}[Theorem \ref{stability} restated]\label{BL_continuity}
Suppose $(\mathbf{B, p})$ and $(\mathbf{\widetilde{B},p})$ are two tuples of Brascamp-Lieb datum. Let $(n,n_1,\ldots,n_m)$ be the dimension vector and $p_1,\ldots,p_m = c_1/d, \ldots, c_m/d$. Also suppose the entries of $B_i$'s can be described using $b$ bits and $||B_i - \widetilde{B}_i|| \le \delta$ for all $i$. Then there exists a polynomial $P(n, d, b, \log(m))$ s.t. if $\delta \le \expon(-P(n, d, b, \log(m)))$ and $\BL(\mathbf{B, p})$ is finite, then $\BL(\mathbf{\widetilde{B},p})$ is finite. Furthermore
$$
\left(1 - \frac{P(n,d, b, \log(m))}{\log(1/\delta)^{1/3}} \right) \le \frac{\BL(\mathbf{B, p})}{\BL(\mathbf{\widetilde{B},p})} \le \left( 1 + \frac{P(n,d, b, \log(m))}{\log(1/\delta)^{1/3}} \right)
$$
\end{theorem}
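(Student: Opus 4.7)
The plan is to transport the continuity bound from the operator-scaling world (Theorem~\ref{capacity_continuity}) to the Brascamp--Lieb world via the reduction of Construction~\ref{cons:main_reduction} composed with the rectangular-to-square passage of Construction~\ref{con:squaring-operator}. I would first apply Construction~\ref{cons:main_reduction} to both data, yielding rectangular operators $T_{(\mathbf{B, p})}, T_{(\mathbf{\widetilde{B},p})}: M_{nd}(\C) \to M_n(\C)$ with Kraus operators $A_j$ and $\widetilde{A}_j$. Because each $A_j$ is simply one of the $B_i$'s embedded with zero padding into a larger matrix, the spectral-norm perturbation is preserved pointwise: $\|A_j - \widetilde{A}_j\| = \|B_{i_j} - \widetilde{B}_{i_j}\| \le \delta$, and the bit-complexity of the entries of $A_j$ matches that of $B_i$ (namely $b$).

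Next I would apply Construction~\ref{con:squaring-operator} to both rectangular operators to obtain square operators $\widetilde{T}_{(\mathbf{B, p})}, \widetilde{T}_{(\mathbf{\widetilde{B},p})}: M_{n^2 d}(\C) \to M_{n^2 d}(\C)$. Their Kraus operators take the form $\tfrac{1}{nd}\, E_{i,j} \otimes A_k$, so the pairwise spectral-norm perturbations are bounded by $\delta/(nd) \le \delta$, the bit-complexity grows by only $O(\log(nd))$, and the number of Kraus operators grows to at most $n^3 d^2 m$. Theorem~\ref{capacity_continuity} then applies and yields a polynomial $P'$ such that if $\delta \le \exp(-P'(n,d,b,\log m))$, then $\capac(\widetilde{T}_{(\mathbf{\widetilde{B},p})}) > 0$ and the ratio of the two capacities lies in $1 \pm P'/\log(1/\delta)^{1/3}$.

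To convert the capacity ratio back to a Brascamp--Lieb ratio, I would combine Lemma~\ref{lem:main_reduction}, giving $\capac(T_{(\mathbf{B, p})}) = 1/\BL(\mathbf{B, p})^2$ (and similarly for $\widetilde{\mathbf{B}}$), with item~1 of Lemma~\ref{lem:capacity-expansion}, giving $\capac(\widetilde{T}_{(\mathbf{B, p})}) = \capac(T_{(\mathbf{B, p})})^{nd} = \BL(\mathbf{B, p})^{-2nd}$. Positivity of $\capac(\widetilde{T}_{(\mathbf{\widetilde{B},p})})$ therefore implies finiteness of $\BL(\mathbf{\widetilde{B},p})$, and the ratio $\BL(\mathbf{B, p})/\BL(\mathbf{\widetilde{B},p})$ is the $(-1/(2nd))$-th power of the capacity ratio, so a $1 \pm \epsilon$ capacity ratio translates to a $1 \pm O(\epsilon/(nd))$ Brascamp--Lieb ratio. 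Absorbing the $1/(2nd)$ exponent, the dimension blow-up $n \to n^2 d$, and the $O(\log(nd))$ bit-complexity growth into a single polynomial $P(n,d,b,\log m)$ yields the desired bound.

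The argument is really bookkeeping on top of the two constructions and Theorem~\ref{capacity_continuity}; the only mildly delicate point is ensuring that the polynomial thresholds and rate functions absorb the dimension blow-up and the $1/(2nd)$ exponent cleanly, which is immediate because Theorem~\ref{capacity_continuity} already depends polynomially on dimension and bit-length. No new structural facts about the Brascamp--Lieb constant are required beyond those already encoded in Lemmas~\ref{lem:main_reduction} and~\ref{lem:capacity-expansion}.
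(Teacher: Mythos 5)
Your proposal is correct and follows essentially the same route as the paper, which derives Theorem~\ref{BL_continuity} as an immediate corollary of Theorem~\ref{capacity_continuity} via Construction~\ref{cons:main_reduction} (composed, as you make explicit, with the rectangular-to-square passage of Construction~\ref{con:squaring-operator} so that the square-operator continuity bound applies). The bookkeeping you supply — zero-padding preserving $\|A_j-\widetilde{A}_j\|\le\delta$, the polynomial growth of dimension and bit-length, and the conversion $\capac(\widetilde{T}_{(\mathbf{B,p})})=\BL(\mathbf{B,p})^{-2nd}$ turning a $1\pm\epsilon$ capacity ratio into a $1\pm O(\epsilon/(nd))$ BL ratio — is exactly the content the paper leaves implicit.
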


\paragraph{\textit{Continuity at irrational points}}

We note that Theorem \ref{capacity_continuity} can be extended to continuity of capacity at irrational $A_1,\ldots,A_m$ (without any explicit bounds). We describe the changes that need to be made to the proof of Theorem 4.5 in \cite{GGOW}. 
\begin{enumerate}
\item First prove that for every $\eta > 0$, there is an $\eta$-approximate fixed point $C$ of the operator $T_A$ defined by $A_1,\ldots,A_m$ (if $\capac(T_A) > 0$). 
\item Using the fact that $||A_i-B_i|| \le \delta$, prove that $C$ is also an $\eta'$-approximate fixed point of the operator $T_B$ defined by  $B_1,\ldots,B_m$. $\delta$ can be chosen to be sufficiently small based on the lowest and highest eigenvalues of $C$. Now an application of Lemma 3.8 in \cite{GGOW} can finish the proof.
\end{enumerate}

When $\mathbf{p}$ is not rational, our reduction from BL datum to operators doesn't work anymore. Nonetheless, Algorithm \ref{The BL scaling algorithm} described in the introduction can be applied directly to the BL datum. Since this algorithm is continuous in $\mathbf{B}$, this should imply continuity of $\BL(\mathbf{B,p})$ in $\mathbf{B}$. However, it is not clear to us at the moment how long this algorithm will have to be run before the BL constant gets sufficiently close to $1$. This is because the progress per step in Theorem \ref{analysis} depends on the common denomintor $d$ of $\mathbf{p}$, which doesn't make sense in the case when $\mathbf{p}$ is not rational. But perhaps the fact that the vertices of the BL polytope $P_{\mathbf{B}}$ are rational can come in handy.



\section{Interesting special cases of BL polytopes} \label{sec:mat_int}

In this section, we explore special cases of BL polytopes that give rise to natural polytopes which are well studied in the combinatorial optimization literature. As mentioned before, Barthe \cite{Barthe2} proved that the BL polytopes corresponding to rank-$1$ BL datum are exactly the basis polytopes.

Now we show that some special cases of BL polytopes corresponding to rank-$2$ BL datum give rise to linear 
matroid intersection polytopes. Given a collection of vectors in $\mathbb{R}^n$, $\mathbf{v} = (v_1,\ldots,v_m)$, 
the linear matroid $\mathcal{M}_{\mathbf{v}}$ associated with this collection (over the ground set $[m]$) is the 
following set $\{I \subseteq [m]: (v_i)_{i \in I} \: \text{are linearly independent}\} $\footnote{Linear 
matroids can be defined w.r.t. any field $\mathbb{F}$ but we will restrict ourselves to $\mathbb{R}$ as this is 
the case which embeds into BL.}.

Given two linear matroids $\Matrv$ and $\Matrw$, their intersection polytope $P_{\Matrv, \Matrw}$ is the convex hull of the characteristic vectors of their common bases (sometimes this is defined as the convex hull of the common independent sets) i.e.
$$
P_{\Matrv, \Matrw} = \text{conv} \left\{ 1_I, I \subseteq [m]: (v_i)_{i \in I}, (w_i)_{i \in I} \: \: \text{both form a basis for } \mathbb{R}^n\right\}
$$

\noindent For a linear matroid $\Matrv$ and a set $J \subseteq [m]$, we define the subspace 
$$V_J  = \text{span}\left( (v_j)_{j \in J}\right)$$
It is well known (e.g. see Chapter 41 in \cite{schrijver_B}) $P_{\Matrv, \Matrw}$ can be described by the following linear program (with exponentially many inequalities!) on a variable vector  $\mathbf{p}$:
\begin{align}
&\sum_{j=1}^m p_j = n \nonumber \\
&\sum_{j \in J} p_j \le \text{dim} (V_J) \: \: \forall J \subseteq [m] \label{eqn:ankit4}\\
&\sum_{j \in J'} p_j \le \text{dim} (W_{J'}) \: \: \forall J' \subseteq [m] \label{eqn:ankit5}\\
&\mathbf{p} \ge 0 \nonumber
\end{align}

More generally, we can define equations~\eqref{eqn:ankit4} and~\eqref{eqn:ankit5} with respect to any subspace
$U$ of $\R^n$, so that it resembles the BL inequalities. By doing that we obtain:
\begin{align}
&\sum_{j =1}^m p_j \cdot 1\{ v_j \in U \} \le \text{dim} (U \cap V_{[m]}) \: \: \ \ \forall  
U \subseteq \R^n \label{eqn:ankit4-p}\\
&\sum_{j=1}^m p_j \cdot 1\{ w_j \in U \} \le \text{dim} (U \cap W_{[m]}) \: \: \forall  
U \subseteq \R^n \label{eqn:ankit5-p}
\end{align}


Now consider the following BL datum: for every $j\in [m]$, $B_j : \mathbb{R}^{2n} \rightarrow \mathbb{R}^2$ 
is given by 
$$
B_j(x,y) = (\langle v_j,x \rangle, \langle w_j, y \rangle)
$$
We will show that the BL polytope $P_{\mathbf{B}}$ corresponding to this BL datum is the 
same as the matroid intersection polytope $ P_{\Matrv, \Matrw}$. Our feasibility algorithm for BL-polytopes 
thus automatically gives a polynomial time algorithm for the linear matroid intersection problem 
(here over $\mathbb{R}$). Of course it was already known that this problem is in $P$, and moreover 
Gurvits~\cite{gurvits2004} already noticed that operator scaling efficiently solves this problem; our hope is that 
such encodings are possible for other optimization problems with exponential size linear programs.

\begin{lemma}\label{lem:BL-eq-mat} 
The BL polytope $P_{\mathbf{B}}$ corresponding to the above BL datum 
is the same as $P_{\Matrv, \Matrw}$.
\end{lemma}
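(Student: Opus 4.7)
The plan is to prove $P_{\mathbf{B}} = P_{\Matrv, \Matrw}$ by double inclusion, matching the BL dimension inequalities with the matroid rank inequalities. First observe both polytopes sit in the same affine hyperplane: the BL normalization $n = \sum_j p_j n_j$ with $n_j = 2$ gives $\sum_j p_j = n$, and nonnegativity $p_j \ge 0$ is already part of the BL datum definition.

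For the inclusion $P_\mathbf{B} \subseteq P_{\Matrv, \Matrw}$, I plan to instantiate the BL dimension inequality at ``axis-aligned'' subspaces of $\mathbb{R}^{2n}$. Given $J \subseteq [m]$, set $V = V_J^\perp \oplus \{0\} \subseteq \mathbb{R}^{2n}$, which has dimension $n - \dim V_J$. Since $V$ lies in the first coordinate slab, $B_j V$ is contained in $\mathbb{R} \times \{0\}$ and has dimension $1$ iff $v_j \notin V_J$. Writing $\widetilde J = \{j : v_j \in V_J\}$, the BL inequality (after using $\sum_j p_j = n$) reduces to $\sum_{j \in \widetilde J} p_j \le \dim V_J$, which implies the rank inequality $\sum_{j \in J} p_j \le \dim V_J$ since $J \subseteq \widetilde J$ and $p_j \ge 0$. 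The $\Matrw$-rank inequality follows symmetrically from $V = \{0\} \oplus W_{J'}^\perp$.

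For the reverse inclusion $P_{\Matrv, \Matrw} \subseteq P_\mathbf{B}$, I need to verify $\dim V \le \sum_j p_j \dim B_j V$ for every $V \subseteq \mathbb{R}^{2n}$. The key idea is the asymmetric rank-nullity decomposition
$$ \dim V = \dim V_1 + \dim V_2^0, \qquad V_1 := \pi_1(V), \quad V_2^0 := V \cap (\{0\} \oplus \mathbb{R}^n), $$
where $V_2^0$ is identified with a subspace of the second copy of $\mathbb{R}^n$. Let $J_1 = \{j : v_j \notin V_1^\perp\}$ and $J_2^0 = \{j : w_j \notin (V_2^0)^\perp\}$. The matroid rank inequalities, rewritten in the equivalent subspace form $\sum_j p_j \mathbf{1}\{v_j \in U\} \le \dim U$ (a routine equivalence obtained by taking $U = V_J$ in one direction and $J = \{j : v_j \in U\}$ in the other), applied with $U = V_1^\perp$ and $U = (V_2^0)^\perp$, yield $\dim V_1 \le \sum_{j \in J_1} p_j$ and $\dim V_2^0 \le \sum_{j \in J_2^0} p_j$, hence $\dim V \le \sum_{j \in J_1} p_j + \sum_{j \in J_2^0} p_j$. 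On the other side, $\dim B_j V \ge 1$ whenever $j \in J_1 \cup J_2^0$, and crucially $\dim B_j V = 2$ when $j \in J_1 \cap J_2^0$: any witness $(x_1, y_1) \in V$ with $\langle v_j, x_1 \rangle \ne 0$ contributes an image with nonzero first coordinate, while any witness $(0, y_2) \in V_2^0$ with $\langle w_j, y_2 \rangle \ne 0$ contributes an image $(0, \langle w_j, y_2\rangle)$ with zero first coordinate and nonzero second coordinate; these are linearly independent. Summing yields $\sum_j p_j \dim B_j V \ge \sum_{j \in J_1} p_j + \sum_{j \in J_2^0} p_j \ge \dim V$.

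The main obstacle is the last observation: the asymmetric choice of $V_2^0$ (rather than $V_2 = \pi_2 V$) is essential, because only elements of the form $(0, y)$ guarantee that their $B_j$-images are automatically independent of the images of other elements of $V$. A symmetric attempt combining $\dim V \le \dim V_1 + \dim V_2$ with $\dim B_j V \ge 1$ on $J_1 \cup J_2$ would fail whenever the $v_j$- and $w_j$-actions on $V$ produce collinear images in $\mathbb{R}^2$, so the extra structure that $V_2^0$ forces is what makes the second inclusion go through.
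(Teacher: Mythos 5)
Your proof is correct, and while the easier inclusion $P_\mathbf{B} \subseteq P_{\Matrv,\Matrw}$ is essentially the paper's argument (plugging axis-aligned complements of $V_J$ into the BL dimension inequality and using $\sum_j p_j = n$), your proof of the reverse inclusion takes a genuinely different and leaner route. The paper decomposes an arbitrary $U \subseteq \R^{2n}$ symmetrically as $U = \tilde{Z}_1 \oplus \tilde{Z}_2 \oplus Z$ with $\tilde{Z}_i = U \cap (\text{$i$-th slab})$ and $Z$ orthogonal to both, derives the identity $2\dm(U) = \dm(Z_1)+\dm(Z_2)+\dm(U_1)+\dm(U_2)$, proves by case analysis a four-indicator lower bound $\dm(B_j(U)) \ge \tfrac12\left(4 - 1\{v_j \in U_1^\perp\} - 1\{v_j \in Z_1^\perp\} - 1\{w_j \in U_2^\perp\} - 1\{w_j \in Z_2^\perp\}\right)$, and then sums four instances of the subspace-form matroid inequalities. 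Your asymmetric rank--nullity split $\dm(V) = \dm(\pi_1(V)) + \dm(V \cap (\{0\}\oplus\R^n))$ needs only two instances of the matroid inequalities and a two-term lower bound on $\dm(B_j V)$, whose only nontrivial case ($j \in J_1 \cap J_2^0$) is handled cleanly by your observation that a witness in $V_2^0$ has image $(0,c)$ with $c \neq 0$, automatically independent of any image with nonzero first coordinate. What the paper's symmetric version buys is mainly aesthetic evenhandedness between the two matroids; what yours buys is shorter bookkeeping and a case analysis with no subcases. Your closing remark correctly identifies why the asymmetry is essential rather than incidental: the naive symmetric bound via $\pi_1(V)$ and $\pi_2(V)$ fails exactly when the two images in $\R^2$ are collinear, and restricting the second factor to $V \cap (\{0\}\oplus\R^n)$ is what rules that out.
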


\begin{proof}

Recall that by Theorem~\ref{feasibility}, a point $\mathbf{p} \in P_{\mathbf{B}}$ is given by the following constraints:
\begin{align}
&\sum_{j=1}^m p_j \cdot 2 = 2n \nonumber \\
&\sum_{j=1}^m p_j \text{dim}(B_j(U)) \ge \text{dim}(U) \: \: \forall \: \: \text{subspaces } U \text{ of } \mathbb{R}^{2n} \label{eqn:ankit3}\\
&\mathbf{p} \ge 0 \nonumber
\end{align}

To show that $P_\mathbf{B} \subseteq P_{\Matrv, \Matrw}$, it is enough to show that the set of 
equations~\eqref{eqn:ankit3} generate the set of equations~\eqref{eqn:ankit4} and~\eqref{eqn:ankit5}.
We can do this as follows: let 
$\tilde{V}_J = \{ (v_j, 0) \in \R^{2n} : v_j \in V_J \}$ (or 
$\tilde{W}_{J'} =\{ (0, w_j) \in \R^{2n} : w_j \in W_{J'} \}$). In~\eqref{eqn:ankit3}, setting 
$U$ to be the space $\tilde{V}_J^\perp$ (or $\tilde{W}_{J'}^\perp$), that is,
the space of vectors orthogonal to $\tilde{V}_J$ (or $\tilde{W}_{J'}$) we obtain:
\begin{align*}
	\sum_{j=1}^m p_j \cdot \text{dim}(B_j(\tilde{V}_J^\perp)) \ge \text{dim}(\tilde{V}_J^\perp) 
	\ &\iff \ \sum_{j=1}^m p_j \cdot (2 - 1\{ v_j \in V_J \}) \ge 2n - \text{dim}(V_J) \\
	&\iff \ \sum_{j=1}^m p_j \cdot 1\{ v_j \in V_J \} \leq \text{dim}(V_J).  
\end{align*} 
We have thus generated equation~\eqref{eqn:ankit4}. To generate equation~\eqref{eqn:ankit5} we only need
to repeat the procedure above with $U = \tilde{W}_{J'}^\perp$. This completes this part of the proof.

Now, we need to show that $P_{\Matrv, \Matrw} \subseteq P_\mathbf{B}$. To prove this, 
it is enough to show that equations~\eqref{eqn:ankit4-p} and~\eqref{eqn:ankit5-p} 
generate the set of equations~\eqref{eqn:ankit3}.
Given a subspace $U \subseteq \mathbb{R}^{2n}$, let $U_1 = \pi_1(U) \subseteq \R^n$ 
($U_2 = \pi_2(U)$) be the projection of $U$ onto the first (second) half of $\R^{2n}$.
Additionally, let  
\begin{align*}
	\tilde{Z}_1 &= \{ (u, 0) \in U \ : \ u \in \R^n \} \\
	\tilde{Z}_2 &= \{ (0, u) \in U \ : \ u \in \R^n \}
\end{align*}
that is, $\tilde{Z}_1$ (resp. $\tilde{Z}_2$) are the vectors of $U$ whose second 
(resp. first) half of coordinates being zero. Let $Z_1 = \pi_1(\tilde{Z}_1)$ and
$Z_2 = \pi_2(\tilde{Z}_2)$. We can decompose $U$ as a direct sum 
as follows: $U = \tilde{Z}_1 \oplus \tilde{Z}_2 \oplus Z$, where $Z$ is orthogonal to $\tilde{Z}_1 \oplus \tilde{Z}_2$. 
Note that $\dm(Z) = \dm(\pi_1(Z)) = \dm(\pi_2(Z))$, as
any basis $\{z_1, \ldots, z_d\}$ of $Z$ yields a basis 
$\{\pi_j(z_1), \ldots, \pi_j(z_d)\}$ for $\pi_j(Z)$, where $j \in \{1,2\}$. 
This is because for $z \in Z$, $\pi_j(z) = 0$ implies $z = 0$ since $Z$ is orthogonal to $\tilde{Z}_1 \oplus \tilde{Z}_2$.
Since $U = \tilde{Z}_1 \oplus \tilde{Z}_2 \oplus Z$, we also have that
$\pi_j(U) = Z_j \oplus \pi_j(Z)$, for $j \in \{1,2\}$.

From $U = \tilde{Z}_1 \oplus \tilde{Z}_2 \oplus Z$ and the properties above we get  
\begin{align}\label{eq:dim-eq}
	2\cdot\dm(U) &= 2 \cdot (\dm(Z_1) + \dm(Z_2) + \dm(Z))  \nonumber \\
	&= \dm(Z_1) + \dm(Z_2) + (\dm(Z_1) + \dm(\pi_1(Z))) + (\dm(Z_2) + \dm(\pi_2(Z)))
	\nonumber \\
	&= \dm(Z_1) + \dm(Z_2) + \dim(U_1) + \dm(U_2)
\end{align}

Letting $U_1^\perp, U_2^\perp, Z_1^\perp, Z_2^\perp$ denote the orthogonal complements 
of $U_1, U_2, Z_1, Z_2$ respectively (inside $\mathbb{R}^n$) we have that the following
inequality always holds:
\begin{equation}\label{eq:dim-ineq}
	\dm(B_j(U)) \geq \dfrac{1}{2} \cdot 
	(4 - 1\{v_j \in U_1^{\bot}\} - 1\{v_j \in Z_1^{\bot}\} 
	- 1\{w_j \in U_2^{\bot}\} - 1\{w_j \in Z_2^{\bot}\})  
\end{equation}

The proof follows from the case analysis below:

\textbf{Case 1:} $\dm(B_j(U)) = 2$. In this case, we are done, as RHS is always
no greater than 2.

\textbf{Case 2:} $\dm(B_j(U)) = 1$. Here, we have two subcases. 
\begin{enumerate}
	\item If $v_j \not\in Z_1^\perp$ (or similarly $w_j \not\in Z_2^\perp$), then 
	it must be the case that $w_j \in U_2^\perp$ (because of $\dm(B_j(U)) = 1$), which implies that 
	$w_j \in Z_2^\perp$, as $U_2^\perp \subseteq Z_2^\perp$. This implies that the 
	RHS is $\leq 1$.
	\item If $v_j \in Z_1^\perp$ and $w_j \in Z_2^\perp$ we have that the RHS is
	already $\leq 1$.
\end{enumerate}

\textbf{Case 3:} $\dm(B_j(U)) = 0$. In this case, we must have 
$v_j \in U_1^\perp \subseteq Z_1^\perp$ and $w_j \in U_2^\perp \subseteq Z_2^\perp$.
The RHS then becomes zero.

Now, if we apply equations~\eqref{eqn:ankit4-p} and~\eqref{eqn:ankit5-p} to the
spaces $U_1^\perp, U_2^\perp, Z_1^\perp, Z_2^\perp$, we obtain:

\begin{align*}
	\sum_{j =1}^m p_j \cdot 1\{ v_j \in U_1^\perp \} &\le \dm(U_1^\perp \cap V_{[m]}) \leq \dm(U_1^\perp) \\
	\sum_{j =1}^m p_j \cdot 1\{ v_j \in Z_1^\perp \} &\le \dm(Z_1^\perp \cap V_{[m]}) \leq \dm(Z_1^\perp) \\
	\sum_{j =1}^m p_j \cdot 1\{ w_j \in U_2^\perp \} &\le \dm(U_2^\perp \cap W_{[m]}) \leq \dm(U_2^\perp) \\
	\sum_{j =1}^m p_j \cdot 1\{ w_j \in Z_2^\perp \} &\le \dm(Z_2^\perp \cap W_{[m]}) \leq \dm(Z_2^\perp) 
\end{align*}

Adding the equations above, we obtain:

\begin{align}
&\: \sum_{j =1}^m p_j \cdot (1\{ v_j \in U_1^\perp \} + 1\{ v_j \in Z_1^\perp \} +
1\{ w_j \in U_2^\perp \} + 1\{ w_j \in Z_2^\perp \}) \nonumber \\
&\leq 
\dm(U_1^\perp) +\dm(Z_1^\perp) + \dm(U_2^\perp) + \dm(Z_2^\perp) \label{eq:bl-pol-ineq}
\end{align}

By inequality~\eqref{eq:dim-ineq}, we have that 
$$
1\{ v_j \in U_1^\perp \} + 1\{ v_j \in Z_1^\perp \} +
1\{ w_j \in U_2^\perp \} + 1\{ w_j \in Z_2^\perp \} 
\geq 
4 - 2 \cdot \dm(B_j (U))
$$
And by equation~\eqref{eq:dim-eq}, we have that 
\begin{align*}
\dm(U_1^\perp) +\dm(Z_1^\perp) + \dm(U_2^\perp) + \dm(Z_2^\perp)
&= 4n - \dm(U_1) +\dm(Z_1) + \dm(U_2) + \dm(Z_2) \\
&= 4n - 2\cdot\dm(U)
\end{align*}

Putting these facts altogether, inequality~\eqref{eq:bl-pol-ineq} implies:
$$
\sum_{j =1}^m p_j \cdot (4 - 2\cdot\dm(B_j(U))) \leq 4n - 2\cdot\dm(U)
$$
which is equivalent to $\sum_{j =1}^m p_j \cdot \dm(B_j(U)) \geq \dm(U)$, as we wanted.
\end{proof}

Since the perfect bipartite matching polytopes are a special case of linear 
matroid intersection polytopes, we see that the bipartite matching polytopes are 
a special case of BL polytopes for rank-$2$ BL datum. 
As mentioned before, it is an interesting question if the general perfect 
matching polytopes are a special case of BL polytopes or not. 

\section{Additional properties of the Brascamp-Lieb constant}\label{sec:properties}

In this section, we prove that under appropriate normalization conditions on the BL datum, the BL constant is at least $1$, with equality iff the datum is geometric. This is a strengthening of the following theorem proved in \cite{Vald11}. We will assume that exponent vectors $\mathbf{p}$ satisfy $p_i > 0$ for all $i$.

\begin{theorem}[Theorem 1,5 of \cite{Vald11}] Let $(\mathbf{B,p})$ be a projection-normalized and feasible BL datum. Then $\BL(\mathbf{B,p}) \ge 1$ and equality holds iff the datum is geometric.
\end{theorem}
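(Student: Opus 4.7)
My plan is to push the statement through the BL-to-operator reduction of Section~\ref{sec:bl-to-cap} so that it becomes a one-line AM-GM estimate on the capacity. By Lemma~\ref{lem:main_reduction}, feasibility of $(\mathbf{B,p})$ gives $\capac(T_{(\mathbf{B,p})}) = 1/\BL(\mathbf{B,p})^2$, so it suffices to prove that under projection-normalization $\capac(T_{(\mathbf{B,p})}) \le 1$, with equality exactly when the datum is also isotropy-normalized (hence geometric).

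First I would evaluate the infimum defining the rectangular capacity at the test point $X = I_{nd}$, for which $\Det(X)=1$. Construction~\ref{cons:main_reduction} gives
$$T_{(\mathbf{B,p})}(I_{nd}) \;=\; \sum_j A_j^{\dagger} A_j \;=\; \sum_i c_i\, B_i^{\dagger} B_i \;=\; d\sum_i p_i\, B_i^{\dagger} B_i,$$
so the rectangular capacity formula with $n_2/n_1 = 1/d$ yields
$$\capac\bigl(T_{(\mathbf{B,p})}\bigr) \;\le\; \Det\!\left(\tfrac{1}{d}\,T_{(\mathbf{B,p})}(I_{nd})\right) \;=\; \Det\!\left(\sum_i p_i\, B_i^{\dagger} B_i\right).$$

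The projection hypothesis $B_i B_i^{\dagger} = I_{n_i}$ says each $B_i^{\dagger} B_i$ is an orthogonal projection of rank $n_i$, so has trace $n_i$. Combined with the (necessary for feasibility) identity $n = \sum_i p_i n_i$, this gives $\tr\bigl(\sum_i p_i B_i^{\dagger} B_i\bigr) = n$. AM-GM applied to the eigenvalues of the $n\times n$ psd matrix $\sum_i p_i B_i^{\dagger} B_i$ then bounds its determinant by $(n/n)^n = 1$, with equality iff all eigenvalues equal $1$, iff $\sum_i p_i B_i^{\dagger} B_i = I_n$, i.e.\ iff isotropy holds. This already establishes $\BL(\mathbf{B,p}) \ge 1$. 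If equality holds, then chasing the chain $\capac(T_{(\mathbf{B,p})}) = 1 \le \Det(\sum_i p_i B_i^{\dagger} B_i) \le 1$ forces $\sum_i p_i B_i^{\dagger} B_i = I_n$, which together with projection is the geometric condition. The converse direction ``geometric $\Rightarrow \BL(\mathbf{B,p}) = 1$'' is Theorem~\ref{geometric}.

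I do not expect a substantive obstacle: the whole argument collapses to AM-GM once one is inside the operator-scaling framework, sidestepping the Gaussian-input analysis used in the original proof of Valdimarsson. The only bookkeeping subtlety is keeping track of the dimension ratio $n_2/n_1 = 1/d$ in the rectangular capacity definition, and verifying that $\sum_j A_j^{\dagger} A_j$ telescopes to $d\sum_i p_i B_i^{\dagger} B_i$; both are immediate from the block structure of the $A_j$'s in Construction~\ref{cons:main_reduction}.
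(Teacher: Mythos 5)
Your computation is correct as far as it goes, and at its core it is the same argument the paper uses for its strengthening (Theorem~\ref{thm:BLge1}): evaluate the infimum in Lieb's formula at the identity, observe that projection-normalization plus $n=\sum_i p_i n_i$ forces $\tr\bigl(\sum_i p_i B_i^\dagger B_i\bigr)=n$, and apply AM-GM to the eigenvalues (Proposition~\ref{prop:AM-GM_psd}), with the equality case forcing isotropy and the converse supplied by Ball--Barthe. Indeed, unwinding your capacity bound $\capac(T_{(\mathbf{B,p})}) \le \Det\bigl(\tfrac{1}{d}T_{(\mathbf{B,p})}(I_{nd})\bigr) = \Det\bigl(\sum_i p_i B_i^\dagger B_i\bigr)$ gives exactly the paper's step of plugging $X_i = I_{n_i}$ into Lieb's formula. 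Your dimension bookkeeping ($n_2/n_1 = 1/d$, $\sum_j A_j^\dagger A_j = d\sum_i p_i B_i^\dagger B_i$) is right, and since projection is a hypothesis here you legitimately avoid the paper's second computation with $X_i = (B_iB_i^\dagger)^{-1}$, which is only needed for the weaker trace-normalization hypothesis of Theorem~\ref{thm:BLge1}.

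The one genuine gap is the rationality restriction you silently import. Construction~\ref{cons:main_reduction} and Lemma~\ref{lem:main_reduction} only exist when $p_i = c_i/d$ with $c_i, d \in \N$; for irrational exponents there is no Brascamp--Lieb operator, and the identity $\capac(T_{(\mathbf{B,p})}) = 1/\BL(\mathbf{B,p})^2$ that your whole plan rests on is unavailable. The statement you are proving makes no rationality assumption, and the paper explicitly flags this: Section~\ref{sec:properties} opens by noting that the result "would follow from the reduction to operator scaling" for rational $\mathbf{p}$, but that the point of the section is precisely not to assume rationality. The fix is trivial and makes your proof shorter, not longer: skip the operator entirely and plug $X_i = I_{n_i}$ directly into Lieb's formula $1/\BL(\mathbf{B,p})^2 = \inf \Det\bigl(\sum_i p_i B_i^\dagger X_i B_i\bigr)/\prod_i \Det(X_i)^{p_i}$, which is valid for arbitrary positive reals $p_i$. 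Everything after that point in your argument (trace computation, AM-GM, equality analysis) survives verbatim.
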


We prove the following strengthening of the above theorem. 

\begin{theorem}\label{thm:BLge1} Let $(\mathbf{B,p})$ be a feasible BL datum that satisfies 
$$
\sum_{i=1}^m p_i \tr\left[ B_i^{\dagger} B_i\right] = n
$$
Then $\BL(\mathbf{B,p}) \ge 1$ and equality holds iff the datum is geometric.
\end{theorem}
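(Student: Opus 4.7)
The plan is to exhibit an explicit test point in Lieb's variational formula that already certifies $\BL(\mathbf{B,p}) \ge 1$ under the stated trace normalization, and then to read off the geometric conditions from the equality case of the underlying inequality plus a first-order optimality argument. Specifically, I will take $X_j = I_{n_j}$ for every $j$ in the sup in Theorem~\ref{BLconstant}; the denominator becomes $\det(M)$ with $M := \sum_j p_j B_j^{\dagger} B_j$, so Lieb's formula gives $\BL(\mathbf{B,p})^2 \ge 1/\det(M)$, and the problem reduces to bounding $\det(M)$ using the normalization.

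For the lower bound I would invoke the AM--GM inequality on the eigenvalues of the positive semidefinite matrix $M$, in the form $\det(M) \le (\tr(M)/n)^n$. The hypothesis
\[
\tr(M) \;=\; \sum_{j=1}^m p_j \tr(B_j^{\dagger} B_j) \;=\; n
\]
gives $\det(M) \le 1$, hence $\BL(\mathbf{B,p}) \ge 1$. (Feasibility ensures $M \succ 0$, so the expression is well-defined.)

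For the equivalence with the geometric case, one direction is already Theorem~\ref{geometric}. For the converse, suppose $\BL(\mathbf{B,p}) = 1$. Then the test point $X_j = I_{n_j}$ gives value $1/\det(M) \ge 1$, and since the sup equals $1$, we must have $\det(M) = 1$. Combined with $\tr(M) = n$, the equality case of AM--GM forces all eigenvalues of $M$ equal to $1$, so $M = I_n$, which is exactly \emph{isotropy}. Moreover, $(X_1,\dots,X_m) = (I_{n_1},\dots,I_{n_m})$ attains the supremum and lies in the open cone of positive definite tuples, so it is an interior maximizer of the smooth log-objective
\[
f(X_1,\dots,X_m) \;=\; \sum_j p_j \log\det(X_j) \;-\; \log\det\Bigl(\sum_j p_j B_j^{\dagger} X_j B_j\Bigr).
\]
Computing the matrix derivative in $X_j$ yields $\nabla_{X_j} f = p_j X_j^{-1} - p_j B_j \bigl(\sum_k p_k B_k^{\dagger} X_k B_k\bigr)^{-1} B_j^{\dagger}$; evaluated at the optimizer, where $X_j = I$ and $M = I$, the vanishing of the gradient reduces to $p_j(I_{n_j} - B_j B_j^{\dagger}) = 0$, which (using $p_j > 0$) is the \emph{projection} condition $B_j B_j^{\dagger} = I_{n_j}$. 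Together these give that $(\mathbf{B,p})$ is geometric.

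The proof has no real obstacle: the key insight is simply that the trace normalization is exactly what is needed to make the natural test point $X_j = I$ give value $\ge 1$ in Lieb's formula. The only mild subtlety is checking that the sup is attained (forced here because $X_j = I$ already achieves value $1$ once $\BL = 1$), which legitimizes applying first-order conditions to pin down the projection property. Finally, one should remark that this is genuinely a strengthening of \cite{Vald11}: if $(\mathbf{B,p})$ is projection-normalized then $\tr(B_j^{\dagger} B_j) = \tr(B_j B_j^{\dagger}) = n_j$, and feasibility (Theorem~\ref{feasibility}) gives $\sum_j p_j n_j = n$, so the trace hypothesis of Theorem~\ref{thm:BLge1} is automatic.
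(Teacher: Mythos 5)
Your proof is correct, and the first half (the test point $X_j = I_{n_j}$ plus AM--GM on the eigenvalues of $M = \sum_j p_j B_j^\dagger B_j$, with the equality case forcing $M = I_n$) coincides exactly with the paper's argument. Where you genuinely diverge is in extracting the \emph{projection} condition from $\BL(\mathbf{B,p}) = 1$. The paper does this with a second explicit test point, $X_i = (B_i B_i^\dagger)^{-1}$ (well-defined since feasibility with $p_i>0$ forces each $B_i$ to be surjective), which yields $1/\BL^2 \le \prod_i \Det(B_i B_i^\dagger)^{p_i}$, and then shows this product is at most $1$ by two applications of concavity of $\log$, reading off $B_iB_i^\dagger = I_{n_i}$ from the equality cases. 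You instead observe that once $\det(M)=1$ the identity tuple actually \emph{attains} the supremum in Lieb's formula, so it is an interior global maximizer of the smooth objective $f$, and the vanishing of $\nabla_{X_j} f = p_j\bigl(X_j^{-1} - B_j M^{-1} B_j^\dagger\bigr)$ at $X_j = I$, $M=I$ gives $B_jB_j^\dagger = I_{n_j}$ directly. Your stationarity argument is legitimate (the matrix $I_{n_j} - B_jB_j^\dagger$ is Hermitian, so pairing against Hermitian directions suffices to kill it), and it is arguably more conceptual: it explains the projection condition as the Euler--Lagrange equation of Lieb's variational problem. What the paper's route buys in exchange is that it never needs attainment of the supremum or any calculus --- only two concrete test points and scalar AM--GM/log-concavity --- which keeps the argument parallel to the capacity computations used elsewhere in the paper. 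Your closing remark that the trace hypothesis subsumes both the projection-normalized and isotropy-normalized cases is also correct and matches the paper's Corollary~\ref{cor:isotropy_or_projection}.
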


It is easy to see that we get the following corollary from the above theorem.

\begin{corollary}\label{cor:isotropy_or_projection} Let $(\mathbf{B,p})$ be a feasible BL datum that is either projection-normalized or isotropy normalized. Then $\BL(\mathbf{B,p}) \ge 1$ and equality holds iff the datum is geometric.
\end{corollary}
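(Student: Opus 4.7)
The plan is to apply Lieb's characterization (Theorem~\ref{BLconstant}) at the ``trivial'' test point $X_j = I_{n_j}$ for every $j$, and couple this with the basic AM--GM inequality for the determinant: for any $n \times n$ positive semidefinite matrix $M$, one has $\det(M) \le (\tr(M)/n)^n$, with equality iff $M$ is a scalar multiple of $I_n$.

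First, to prove $\BL(\mathbf{B,p}) \ge 1$, I plug $X_j = I_{n_j}$ into Lieb's formula. The numerator $\prod_j \det(X_j)^{p_j}$ equals $1$, while the denominator is $\det(M)$ with $M := \sum_j p_j B_j^{\dagger} B_j \succeq 0$. By AM--GM and the hypothesis,
$$\det(M) \;\le\; \left(\tfrac{\tr(M)}{n}\right)^{n} \;=\; \left(\tfrac{1}{n}\sum_j p_j \tr(B_j^\dagger B_j)\right)^{n} \;=\; 1,$$
so the ratio at this test point is at least $1$, giving $\BL(\mathbf{B,p})^2 \ge 1$.

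Second, suppose $\BL(\mathbf{B,p})=1$. Then the ratio is at most $1$ for every choice of $X_j \succ 0$. At $X_j = I_{n_j}$ this forces $\det(M) \ge 1$; combined with the AM--GM bound $\det(M) \le 1$ we get equality in AM--GM. Since $\tr(M)=n$, all eigenvalues equal $1$, so $M = I_n$, which is precisely isotropy. To extract projection, I use that the tuple $(I_{n_1},\ldots,I_{n_m})$ is now an interior maximizer of
$$f(X_1,\ldots,X_m) \;=\; \frac{\prod_j \det(X_j)^{p_j}}{\det\left(\sum_j p_j B_j^\dagger X_j B_j\right)},$$
so the first-order derivative along any Hermitian perturbation must vanish. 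Setting $X_j = I_{n_j} + t H_j$ with $H_j$ Hermitian, the standard identity $\tfrac{d}{dt}\big|_{t=0}\log\det(I+tA) = \tr(A)$ yields
$$0 \;=\; \tfrac{d}{dt}\Big|_{t=0}\log f \;=\; \sum_j p_j \tr(H_j) \;-\; \sum_j p_j \tr(B_j B_j^\dagger H_j) \;=\; \sum_j p_j \tr\bigl((I_{n_j} - B_j B_j^\dagger) H_j\bigr).$$
Choosing $H_j$ supported on a single index and varying over all Hermitian $H_j$, and using $p_j>0$ together with the fact that a Hermitian matrix $A$ satisfying $\tr(AH) = 0$ for every Hermitian $H$ vanishes, we conclude $B_j B_j^\dagger = I_{n_j}$ for every $j$, i.e., the projection condition. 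Hence the datum is geometric.

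Conversely, if $(\mathbf{B,p})$ is geometric, then Theorem~\ref{geometric} of Ball--Barthe gives $\BL(\mathbf{B,p})=1$. The only mildly delicate point in the whole argument is the first-order optimality calculation used to recover projection from isotropy; everything else is an immediate application of AM--GM and the hypothesis $\sum_j p_j \tr(B_j^\dagger B_j) = n$.
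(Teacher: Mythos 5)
Your proof is correct, and it diverges from the paper's in one substantive place. The paper proves the corollary by reducing to Theorem~\ref{thm:BLge1}, which assumes only the trace identity $\sum_j p_j \tr(B_j^\dagger B_j)=n$; the corollary's proof there consists of checking that either normalization implies this identity (in the projection-normalized case via $\sum_j p_j n_j = n$, which comes from feasibility --- you invoke the trace identity as ``the hypothesis'' without this one-line verification, so make it explicit). Your lower bound and your derivation of isotropy from equality (test point $X_j=I_{n_j}$ plus the AM--GM bound $\det(M)\le(\tr(M)/n)^n$) coincide with the paper's. Where you genuinely differ is in extracting the projection condition: the paper plugs in the second explicit test point $X_j=(B_jB_j^\dagger)^{-1}$ and runs an AM--GM/log-concavity chain to show $\prod_j\det(B_jB_j^\dagger)^{p_j}\le 1$ with equality iff each $B_jB_j^\dagger=I_{n_j}$, whereas you observe that $(I_{n_1},\ldots,I_{n_m})$ is an interior maximizer of Lieb's functional (since $f(I,\ldots,I)=1/\det(M)\ge 1=\sup f$) and impose first-order stationarity. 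Your computation is right, the smoothness of $\log f$ near the identity is clear because the denominator equals $1$ there, and the step needs $p_j>0$, which the paper assumes throughout that section. The variational route is shorter and more systematic; the paper's route is calculus-free and yields an extra explicit inequality that can be made quantitative, which it reuses in the iteration analysis of Section~\ref{sec:irrational}.
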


\begin{proof} We will apply Theorem \ref{thm:BLge1}. The only thing that needs to be verified is that when $(\mathbf{B,p})$ is projection-normalized or isotropy normalized, then 
$$
\sum_{i=1}^m p_i \tr\left[ B_i^{\dagger} B_i\right] = n
$$
When $(\mathbf{B,p})$ is projection-normalized, then $\tr\left[ B_i^{\dagger} B_i\right] = \tr\left[B_i B_i^{\dagger}\right] = \tr[I_{n_i}] = n_i$ and hence
$$
\sum_{i=1}^m p_i \tr\left[ B_i^{\dagger} B_i\right] = \sum_{i=1}^m p_i n_i = n
$$
Note that $\sum_{i=1}^m p_i n_i = n$ is satisfied since $(\mathbf{B,p})$ is feasible. When $(\mathbf{B,p})$ is isotropy-normalized, then
$$
\sum_{i=1}^m p_i \tr\left[ B_i^{\dagger} B_i\right] = \tr\left[\sum_{i=1}^m p_i  B_i^{\dagger} B_i\right] = \tr[I_n] = n
$$
This completes the proof.
\end{proof}

For rational exponent vectors $\mathbf{p}$, Theorem \ref{thm:BLge1} would follow from the reduction to operator scaling and invoking appropriate theorems from \cite{GGOW}. However we don't assume rationality of $\mathbf{p}$ in this section. It turns out that one can imitate the proofs of corresponding statements in the operator scaling world\footnote{Let $T$ be a positive operator acting on $n \times n$ matrices s.t. $\tr[T(I)] = n$. Then $\capac(T) \le 1$ and equality holds iff $T$ is doubly stochastic.} and the proofs work even in the irrational case. For the proof, we would need the following easy proposition, which is a consequence of the AM-GM inequality applied to the eigenvalues of $A$.

\begin{proposition}\label{prop:AM-GM_psd} Let $A \in \mathcal{M}_n(\C)$ be an $n \times n$ Hermitian positive semidefinite matrix s.t. $\tr[A] = n$. Then $\Det(A) \le 1$ and equality holds iff $A = I_n$. 
\end{proposition}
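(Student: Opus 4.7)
The plan is to apply the AM-GM inequality directly to the eigenvalues of $A$, exactly as the proposition hints. First I would use the spectral theorem: since $A$ is Hermitian and positive semidefinite, it admits an orthonormal eigenbasis with real non-negative eigenvalues $\lambda_1, \ldots, \lambda_n \ge 0$. Then $\tr[A] = \sum_{i=1}^n \lambda_i = n$ and $\Det(A) = \prod_{i=1}^n \lambda_i$, so the problem reduces to a purely scalar statement about the $\lambda_i$.

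Next I would invoke the classical AM-GM inequality on the $\lambda_i$:
$$
\left(\prod_{i=1}^n \lambda_i\right)^{1/n} \le \frac{1}{n}\sum_{i=1}^n \lambda_i = 1,
$$
which immediately yields $\Det(A) \le 1$. For the equality case, AM-GM is tight iff all the $\lambda_i$ coincide; combined with the trace condition this forces $\lambda_i = 1$ for every $i$. Since $A$ is Hermitian and all its eigenvalues equal $1$, the spectral decomposition gives $A = U I_n U^\dagger = I_n$, and conversely $A = I_n$ trivially satisfies $\tr[A] = n$ and $\Det(A) = 1$.

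There is no real obstacle here: the statement is essentially a one-line consequence of AM-GM applied spectrally, and the only thing to be mildly careful about is to note that Hermitian positive semidefiniteness is what legitimizes both the existence of a real non-negative spectrum and the reconstruction $A = I_n$ from the equality case (otherwise one would only recover that $A$ is unitarily similar to the identity, which of course still equals $I_n$). Hence a short two-paragraph proof suffices, with no need for additional machinery from the preceding sections.
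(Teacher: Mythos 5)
Your proof is correct and follows exactly the route the paper indicates: the paper states the proposition as "a consequence of the AM-GM inequality applied to the eigenvalues of $A$," which is precisely your spectral-theorem-plus-AM-GM argument, including the equality analysis. Nothing further is needed.
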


Now we are ready to prove our theorem. 

\begin{proof}(Of Theorem \ref{thm:BLge1}) Recall that
$$
1/\BL(\mathbf{B,p})^2 =  \inf \frac{\Det \left(\sum_{i=1}^m p_i B_i^{\dagger} X_i B_i \right)} {\prod_{i=1}^m \Det (X_i)^{p_i}}
$$
where the infimum is taken over all choices of positive definite matrices $X_i$ in dimension $n_i$. By plugging in $X_i = I_{n_i}$, we can see that 
\begin{align*}
1/\BL(\mathbf{B,p})^2 \le \Det \left(\sum_{i=1}^m p_i B_i^{\dagger} B_i \right)
\le 1
\end{align*}
Here the second inequality follows from Proposition \ref{prop:AM-GM_psd} and the fact that $\tr\left[ \sum_{i=1}^m p_i B_i^{\dagger} B_i\right] = n$, which follows from the assumption in the theorem statement. We also get from Proposition \ref{prop:AM-GM_psd} that equality holds only if $\sum_{i=1}^m p_i B_i^{\dagger} B_i = I_n$ i.e. $(\mathbf{B,p})$ is isotropy-normalized.

We also know that for a geometric datum the BL constant is $1$. So the only thing that is left is to show that if the BL constant is $1$, then $(\mathbf{B,p})$ is projection-normalized as well. For this, we plug in $X_i = \left( B_i B_i^{\dagger}\right)^{-1}$ (note that $B_i B_i^{\dagger}$ is invertible since $(\mathbf{B,p})$ is feasible) to get:
\begin{align*}
1/\BL(\mathbf{B,p})^2 \le \Det \left(\sum_{i=1}^m p_i B_i^{\dagger} \left(B_i B_i^{\dagger} \right)^{-1} B_i \right) \cdot \prod_{i=1}^m \Det \left(B_i B_i^{\dagger}\right)^{p_i}
\end{align*}
First note that
\begin{align*}
\tr\left[\sum_{i=1}^m p_i B_i^{\dagger} \left(B_i B_i^{\dagger} \right)^{-1} B_i \right] &= \sum_{i=1}^m p_i \tr\left[B_i^{\dagger} \left(B_i B_i^{\dagger} \right)^{-1} B_i \right] \\
&= \sum_{i=1}^m p_i \tr\left[ \left(B_i B_i^{\dagger} \right)^{-1} B_i B_i^{\dagger}\right] \\
&= \sum_{i=1}^m p_i \tr[I_{n_i}] = \sum_{i=1}^m p_i n_i = n
\end{align*}
Hence, by Proposition \ref{prop:AM-GM_psd}, we get that 
$$
\Det \left(\sum_{i=1}^m p_i B_i^{\dagger} \left(B_i B_i^{\dagger} \right)^{-1} B_i \right) \le 1
$$
and thus
\begin{align*}
1/\BL(\mathbf{B,p})^2 \le \prod_{i=1}^m \Det \left(B_i B_i^{\dagger}\right)^{p_i}
\end{align*}
Let $\lambda_{i,1},\ldots, \lambda_{i,n_i}$ be the eigenvalues of $B_i B_i^{\dagger}$. Then
\begin{align*}
 \log \left( \prod_{i=1}^m \Det \left(B_i B_i^{\dagger}\right)^{p_i} \right) &= \sum_{i=1}^m p_i \sum_{j=1}^{n_i} \log\left(\lambda_{i,j} \right) \\
&= \sum_{i=1}^m p_i n_i \cdot \frac{1}{n_i} \sum_{j=1}^{n_i} \log\left(\lambda_{i,j} \right) \\
& \le \sum_{i=1}^m p_i n_i \log\left(\frac{1}{n_i} \sum_{j=1}^{n_i} \lambda_{i,j}\right) \\
&= n \cdot  \sum_{i=1}^m \frac{p_i n_i}{n} \log\left(\tr\left[B_i B_i^{\dagger} \right]/n_i\right) \\
&\le n \cdot \log\left( \sum_{i=1}^m \frac{p_i}{n} \cdot \tr\left[B_i B_i^{\dagger} \right] \right) \\
&= 0
\end{align*}
The two inequalities follow from concavity of log. The last equality follows from the assumption that 
$$
\sum_{i=1}^m p_i \tr\left[ B_i^{\dagger} B_i\right] = n
$$
In the first inequality, equality holds iff for all $i$, $\lambda_{i,j}$'s are equal for all $j$ i.e. for all $i$, $B_i B_i^{\dagger} = c_i I_{n_i}$ for some $c_i > 0$. In the second inequality, equality holds iff $\tr\left[B_i B_i^{\dagger} \right]/n_i$ is equal for all $i$ i.e. $c_i = c$ for some $c > 0$. By the normalization condition
$$
\sum_{i=1}^m p_i \tr\left[ B_i^{\dagger} B_i\right] = n
$$
we get that $c = 1$ and hence $B_i B_i^{\dagger} = I_{n_i}$ for all $i$ i.e. $(\mathbf{B,p})$ is isotropy normalized. This completes the proof.
\end{proof}

\section{Iteration analysis for Algorithm \ref{The BL scaling algorithm}}\label{sec:irrational}

In this section, we give an analysis for the number of iterations required in Algorithm \ref{The BL scaling algorithm} to get close to the geometric position. While for the rational exponent case, such an analysis follows from the reduction to operators, here we give an analysis that works even in the irrational case. The proof closely mimicks the argument of \cite{GGOW} in the operator world.

Let us first define the analogue of the ds measure for BL datum which measures how close the BL datum is to geometric position.

\begin{definition}[\textbf{Distance to geometric position}] 
$$
g(\mathbf{B,p}) = \tr\left[ \left( \sum_{i=1}^m p_i B_i^{\dagger} B_i - I_n\right)^2\right] + \sum_{i=1}^n \tr \left[ \left(B_i B_i^{\dagger} - I_{n_i} \right)^2\right]
$$
\end{definition}

Note that $(\mathbf{B,p})$ is geometric iff $g(\mathbf{B,p})  = 0$. We will need the following lemma which is a robust version of the AM-GM inequality and is a consequence of Lemma 3.10 in \cite{LSW}. 

\begin{lemma}\label{lem:AM-GM_robust} Let $A \in \cM_n(\mathbb{C})$ be a Hermitian positive semidefinite matrix s.t. $\tr[A] = n$ and $\tr\left[ \left( A-I_n\right)^2\right] \ge \eps$, where $\eps \le 1$. Then $\Det(A) \le \expon(-\eps/6)$.
\end{lemma}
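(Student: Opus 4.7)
The plan is to reduce to a statement about the eigenvalues of $A$ and then handle it with a short case split. Diagonalize $A = U \Lambda U^\dagger$ with $\Lambda = \mathrm{diag}(\lambda_1, \dots, \lambda_n)$ and $\lambda_i \ge 0$. The hypotheses become $\sum_i \lambda_i = n$ and $\sum_i (\lambda_i - 1)^2 \ge \eps$, and the conclusion $\Det(A) \le \expon(-\eps/6)$ is equivalent to $\sum_i f(\lambda_i) \ge \eps/6$, where $f(x) := x - 1 - \log x$. (This reformulation uses $\sum_i \lambda_i = n$, which rewrites $-\sum_i \log \lambda_i$ as $\sum_i f(\lambda_i)$.) Recall that $f \ge 0$ on $(0, \infty)$, with equality only at $x = 1$. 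If some $\lambda_i = 0$ the conclusion is trivial since $\Det(A) = 0$, so I will assume all $\lambda_i > 0$.

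The case split will be according to whether $\lambda_{\max} > 2$ or $\lambda_{\max} \le 2$. In the first case, the single term $f(\lambda_{\max}) \ge f(2) = 1 - \log 2 > 1/4$ already finishes the argument, since all remaining summands are non-negative and $\eps \le 1$. The second case is where the real work happens: here I will establish the pointwise bound $f(\lambda) \ge (\lambda - 1)^2/4$ on $(0, 2]$, which on summing over $i$ immediately gives $\sum_i f(\lambda_i) \ge \eps/4 \ge \eps/6$.

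The main technical step is this pointwise inequality. Setting $g(\lambda) := f(\lambda) - (\lambda - 1)^2/4$, a brief calculation shows $g'(\lambda) = 1 - 1/\lambda - (\lambda - 1)/2$, and clearing $\lambda$ from $g'(\lambda) = 0$ yields $\lambda^2 - 3\lambda + 2 = 0$; so the only critical points in $(0, 2]$ are $\lambda = 1$ and $\lambda = 2$. Since $g(1) = 0$, $g(2) = 3/4 - \log 2 > 0$, and $g(\lambda) \to +\infty$ as $\lambda \to 0^+$, the minimum of $g$ on $(0, 2]$ is $0$, which proves the bound. I do not anticipate obstacles beyond this verification; the constant $1/6$ in the lemma is not tight, the argument in fact yielding $\eps/4$.
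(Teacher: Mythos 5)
Your proof is correct. Note that the paper does not actually prove this lemma: it simply invokes Lemma~3.10 of \cite{LSW}, which is (essentially) the scalar statement you reduce to after diagonalizing. So your argument is a genuinely self-contained replacement for that citation, and it is the natural one: pass to eigenvalues, observe that $\tr[A]=n$ turns $-\log\Det(A)$ into $\sum_i f(\lambda_i)$ with $f(x)=x-1-\log x\ge 0$, and then split on whether $\lambda_{\max}>2$. All the computational steps check out: in Case 1 you implicitly use that $f$ is increasing on $[1,\infty)$ (clear from $f'(x)=1-1/x>0$ there), so $f(\lambda_{\max})\ge f(2)=1-\log 2>1/4\ge \eps/4$; in Case 2 the pointwise bound $f(\lambda)\ge(\lambda-1)^2/4$ on $(0,2]$ follows exactly as you say, since $2\lambda g'(\lambda)=-(\lambda-1)(\lambda-2)$ leaves only the critical point $\lambda=1$ in the interior, where $g$ vanishes, with $g(2)=3/4-\log 2>0$ and $g\to+\infty$ at $0^+$. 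Your version even yields the sharper constant $\exp(-\eps/4)$; the $\eps/6$ in the statement is just inherited from the form of the bound in \cite{LSW}.
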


We are now ready to state the bound on the number of iterations.

\begin{theorem} Let $(\mathbf{B,p})$ is a feasible datum that is isotropy or projection normalized s.t. $B_i \in \cM_{n_i \times n}(\mathbb{Q})$ and each entry of $B_i$ has bit-size atmost $b$. Then for any $\eps \le 1$ and $t= O\left( \frac{nm \log(nm) + nb}{\eps}\right)$, the sequence $\mathbf{B^0} = \mathbf{B}, \mathbf{B^1},\ldots, \mathbf{B^t}$ in Algorithm \ref{The BL scaling algorithm} satisfies $\min_{i \in \{0,1,\ldots,t\}} g\left( \mathbf{B^i, p}\right) \le \eps$.
\end{theorem}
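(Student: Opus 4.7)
The plan is to track the potential $\Phi(\mathbf{B}, \mathbf{p}) := 2 \log \BL(\mathbf{B}, \mathbf{p})$ along the algorithm. First I would establish the boundary values. To upper bound the initial potential, I scale the rational input datum by its common denominator (at most $2^b$) to obtain an integer datum $\widetilde{\mathbf{B}}$; then by Proposition~\ref{basis-change}, $\log \BL(\mathbf{B}, \mathbf{p}) = \log \BL(\widetilde{\mathbf{B}}, \mathbf{p}) + O(nb)$, and Corollary~\ref{cor:bl_upperbound_ind} gives $\log \BL(\widetilde{\mathbf{B}}, \mathbf{p}) = O(nm \log(nm))$, so $\Phi(\mathbf{B^0}, \mathbf{p}) \le O(nm \log(nm) + nb)$. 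For the lower bound, every $\mathbf{B^i}$ in the sequence is either isotropy- or projection-normalized (this holds for $\mathbf{B^0}$ by assumption and for $\mathbf{B^i}$ with $i \ge 1$ by construction), so Corollary~\ref{cor:isotropy_or_projection} yields $\Phi(\mathbf{B^i}, \mathbf{p}) \ge 0$ throughout.

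Next I would analyze how $\Phi$ changes at each step, using Proposition~\ref{basis-change}, splitting into two cases. In a \emph{Case 1} step (projection-normalized $\to$ isotropy-normalized), $B_i' = B_i C^{-1/2}$ with $C = \sum_i p_i B_i^{\dagger} B_i$, and the change is $\Phi' - \Phi = \log \det(C)$. Since projection-normalization and feasibility give $\tr(C) = \sum_i p_i n_i = n$, AM-GM yields $\det(C) \le 1$, so $\Phi' \le \Phi$. In a \emph{Case 2} step (isotropy-normalized $\to$ projection-normalized), $B_i' = D_i^{-1/2} B_i$ with $D_i = B_i B_i^{\dagger}$, and the change is $\Phi' - \Phi = \sum_i p_i \log \det(D_i)$. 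Applying AM-GM to the eigenvalues of each $D_i$ gives $\log \det(D_i) \le n_i \log(\tr(D_i)/n_i)$; the weights $p_i n_i/n$ sum to $1$, and isotropy-normalization forces $\sum_i (p_i n_i/n)(\tr(D_i)/n_i) = \sum_i p_i \tr(D_i)/n = 1$, so Jensen's inequality (concavity of $\log$) gives $\sum_i p_i n_i \log(\tr(D_i)/n_i) \le 0$, hence $\Phi' \le \Phi$. Therefore $\Phi$ is monotone non-increasing along the algorithm.

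Finally, for the quantitative progress, I apply the robust AM-GM of Lemma~\ref{lem:AM-GM_robust} at Case 1 steps. If a projection-normalized $\mathbf{B}$ has $g(\mathbf{B}, \mathbf{p}) \ge \eps$, then (since the projection term of $g$ vanishes) $\tr[(C - I_n)^2] \ge \eps$; combined with $\tr(C) = n$, Lemma~\ref{lem:AM-GM_robust} gives $\det(C) \le \exp(-\eps/6)$, so $\Phi - \Phi' \ge \eps/6$ at this step. Since Algorithm~\ref{The BL scaling algorithm} alternates between Case 1 and Case 2 steps, at least $\lfloor (t-1)/2 \rfloor$ of the first $t$ steps operate on a projection-normalized input. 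Assuming for contradiction that $g(\mathbf{B^i}, \mathbf{p}) > \eps$ for every $0 \le i \le t$, each such Case 1 step contributes at least $\eps/6$ to the potential drop, yielding $\Phi(\mathbf{B^0}, \mathbf{p}) - \Phi(\mathbf{B^t}, \mathbf{p}) = \Omega(t \eps)$; combined with $\Phi(\mathbf{B^0}, \mathbf{p}) \le O(nm \log(nm) + nb)$ and $\Phi(\mathbf{B^t}, \mathbf{p}) \ge 0$, this forces $t = O((nm \log(nm) + nb)/\eps)$. The main subtlety is recognizing that strong quantitative per-step progress only arises at Case 1 steps, since in Case 2 the eigenvalues of $D_i$ can be arbitrary and $p_i$ may be arbitrarily small --- but the Jensen argument above secures monotonicity at those steps, which is all that is needed.
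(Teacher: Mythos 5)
Your proposal is correct and follows essentially the same argument as the paper's Section on the iteration analysis: bound the initial BL constant via Corollaries~\ref{cor:bl_upperbound_ind} and~\ref{cor:bl_upperbound_rational}, lower-bound it by $1$ at every normalized iterate via Corollary~\ref{cor:isotropy_or_projection}, extract quantitative progress $\eps/6$ in the potential at isotropy-normalization steps applied to projection-normalized data via Lemma~\ref{lem:AM-GM_robust}, and secure mere monotonicity at projection-normalization steps via the log-concavity/Jensen argument from the proof of Theorem~\ref{thm:BLge1}. The only cosmetic differences are that you package the argument as a potential function $\Phi = 2\log\BL$ and write out the Jensen step explicitly rather than citing it.
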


We remark that the assumption on the datum being normalized is without loss of generality as it can be achieved in one extra iteration. The proof will be along the lines described in the introduction and will proceed by analyzing the effect of normalization operations on the BL constant. 

\begin{proof}
By Corollaries \ref{cor:bl_upperbound_ind} and \ref{cor:bl_upperbound_rational}, we have that $\BL(\mathbf{B,p}) \le \expon(12nm \log(nm) + nb)$. Assume on the contrary that $\min_{i \in \{0,1,\ldots,t\}} g\left( \mathbf{B^i, p}\right) > \eps$. We will prove that an isotropy normalization step decreases the BL constant by a multiplicative factor of $\expon(-\eps/12)$ and a projection normalization step decreases the BL constant\footnote{It is also possible to get a quantitative bound on the decrease in the projection step but it is not needed.}. We also know by Corollary \ref{cor:isotropy_or_projection} that the BL constant is at least $1$ in the isotropy or projection normalized case. This will complete the contradiction and the proof.

Let us first analyze the isotropy normalization step. Let $(\mathbf{B',p})$ be the datum before the normalizaton and $(\mathbf{B'',p})$ after. Since $(\mathbf{B',p})$ is projection normalized, we have that 
$$
\tr\left[ \left( \sum_{i=1}^m p_i B_i^{'\dagger} B'_i - I_n\right)^2\right] = g(\mathbf{B',p}) > \eps
$$
We also know that
$$
\BL(\mathbf{B'',p}) = \Det\left( \sum_{i=1}^m p_i B_i^{'\dagger} B'_i\right)^{1/2} \cdot \BL(\mathbf{B',p})
$$
Hence we can use Lemma \ref{lem:AM-GM_robust} to conclude that the BL constant decreases by a multiplicative factor of $\expon(-\eps/12)$. 

Now let us look at the projection normalization step. Let $(\mathbf{B',p})$ be the datum before the normalizaton and $(\mathbf{B'',p})$ after. We know that 
$$
\BL(\mathbf{B'',p}) = \prod_{i=1}^m \Det\left( B_i' B_i^{'\dagger}\right)^{p_i/2} \cdot \BL(\mathbf{B',p})
$$
Since $(\mathbf{B',p})$ is isotropy normalized, we can conclude that $ \prod_{i=1}^m \Det\left( B_i' B_i^{'\dagger}\right)^{p_i/2} \le 1$ by the same log-concavity argument as in the proof of Theorem \ref{thm:BLge1}. This concludes the proof.
\end{proof}

\section{Conclusion and open questions}\label{sec:open}

In this paper we address some of the central algorithmic questions regarding Brascamp-Lieb inequalities. 
In particular, we gave polynomial time algorithms to test membership in the BL polytopes, compute a violating inequality (separation oracle) and approximate the BL constants. These algorithms were obtained by a simple reduction from the BL data to completely positive operators and then using recent polynomial time algorithms that test whether such operators are rank decreasing and compute the capacity of such operators \cite{GGOW,IQS15b}. The reduction itself provides in some cases different proofs of some of the known structural results regarding BL inequalities, in particular the characterization of feasibility.

On the algorithmic side, our work leaves some concrete open questions.
\begin{enumerate}
\item While our reduction is polynomial time in the binary description length of the matrices $(B_1,\ldots,B_m)$ in the BL datum, it is not polynomial time in the binary description length of the exponent vector $p_1,\ldots,p_m$. When $p_1,\ldots,p_m = \frac{c_1}{d},\ldots,\frac{c_m}{d}$, the running times of our algorithms scale as $\poly(d)$. Can one improve them to get (the optimal) $\poly(\log(d))$?
\item Our algorithm for approximating the BL constants up to an $(1+\eps)$ factor run in time $\poly(1/\eps)$.
Can one improve it to a running time that scales as $\poly(\log(1/\eps))$?

\item We provide a weak separation oracle for the BL polytope. Sufficiently strong separation oracles for convex programs yield optimization algorithms via the general reduction of~\cite{GLS88}. However, our current separation oracle does not seem strong enough. Can one efficiently optimize linear functions over BL polytopes?
\end{enumerate}

\noindent On the structural side, our reduction may provide a higher level view of BL inequalities, simply since BL data is just a proper subset of completely positive operators, and viewing them as such may lead to better understanding. One intriguing question is whether this viewpoint leads to an even larger family of inequalities. It is obvious that for any doubly stochastic operator $T$, the fact that $\capac(T) = 1$ implies the following determinantal inequality (that holds for all psd matrices $X$), 
$$
\Det\left( \sum_{i=1}^m A_i X A_i^{\dagger} \right) = \Det(T(X)) \ge \Det(X)
$$
which can be translated to an inequality on Gaussian densities.
One can wonder for which operators $T$ the Gaussian densities can be replaced by arbitrary non-negative densities, as in BL-inequalities.
There are simple counterexamples showing this does not hold in full generality. 
On the other other hand, our reduction and the connection to quiver representation (Section \ref{sec:quiver}) tells us that 
 for operators $T$ arising from the star quiver, these inequalities are in $1$-$1$ correspondence to the BL-inequalities.
 It would be interesting to better understand what is special about star quivers in this context. In particular, do other quivers 
 lead to new inequalities?

\section*{Acknowledgments} 
We would like to thank Zeev Dvir for pointing us to the paper \cite{BCCT}. We would 
also like to thank Jonathan Bennett, Larry Guth and anonymous reviewers for helpful 
comments. 

We thank Rohit Gurjar for suggesting that BL polytopes can encode the linear 
matroid intersection polytopes, and we would like to thank Nisheeth Vishnoi and 
Damian Straszak for pointing out to us a mistake in our previous proof of 
Lemma~\ref{lem:BL-eq-mat} and for providing us a fix for it. 

\bibliographystyle{alpha}
\bibliography{refs}

\end{document}